\newcommand{\poly}{\mathsf{poly}}
\newcommand{\eps}{\varepsilon}
\newcommand{\bit}{\{0, 1\}}
\newcommand{\final}{\mathsf{final}}
\newcommand{\distance}{\mathsf{HD}}
\newcommand{\N}{\mathsf{N}}
\newcommand{\ent}{\mathsf{H}}
\newcommand{\svec}{\mathbf{S}}
\newcommand{\snum}{\mathbf{s}}
\newcommand{\knum}{\mathbf{k}}
\newcommand{\Enc}{\mathsf{Enc}}
\newcommand{\Dec}{\mathsf{Dec}}
\newcommand{\mynull}{\mathsf{null}}
\newcommand{\ED}{\mathsf{ED}}
\newcommand{\pos}{\mathsf{pos}}
\newtheorem{lemma}{Lemma}[section]
\newtheorem{theorem}[lemma]{Theorem}
\newtheorem{claim}[lemma]{Claim}
\newtheorem{construction}[lemma]{Construction}
\newtheorem{definition}[lemma]{Definition}
\newtheorem{corollary}[lemma]{Corollary}
\newtheorem{remark}[lemma]{Remark}
\begin{document}

\allowdisplaybreaks

\begin{titlepage}
\def\thepage{}
\title{Efficient Document Exchange and Error Correcting Codes with Asymmetric Information}
\author{Kuan Cheng \thanks{ckkcdh@pku.edu.cn.\ Center on Frontiers of Computing Studies, Peking University.\  Supported in part by a start-up fund of Peking University, a Simons Investigator Award (\#409864, David Zuckerman) and NSF Award CCF-1617713.} \and Xin Li  \thanks{lixints@cs.jhu.edu.\ Department of Computer Science, Johns Hopkins University. Supported by NSF Award CCF-1617713 and NSF CAREER Award CCF-1845349.}}

\maketitle \thispagestyle{empty}


\abstract 
We study two fundamental problems in communication, Document Exchange (DE) and Error Correcting Code (ECC). In the first problem, two parties hold two strings, and one party tries to learn the other party's string through communication. In the second problem, one party tries to send a message to another party through a noisy channel, by adding some redundant information to protect the message. Two important goals in both problems are to minimize the communication complexity or redundancy, and to design efficient protocols or codes.

Both problems have been studied extensively. In this paper we study whether asymmetric partial information can help in these two problems. We focus on the case of Hamming distance/errors, and the asymmetric partial information is modeled by one party having a vector of disjoint subsets $\svec=(S_1, \cdots, S_t)$ of indices and a vector of integers $\knum=(k_1, \cdots, k_t)$, such that in each $S_i$ the Hamming distance/errors is at most $k_i$. To our knowledge, no previous work has studied this problem systematically. We establish both lower bounds and upper bounds in this model, and provide efficient randomized constructions that achieve a $\min\lbrace O(t^2), O\left((\log \log n)^2\right) \rbrace $ factor within the optimum, with almost linear running time. 

We further show a connection between the above document exchange problem and the problem of document exchange under \emph{edit distance}, and use our techniques to give an efficient randomized protocol with optimal communication complexity and \emph{exponentially} small error for the latter. This improves the previous result by Haeupler \cite{haeupler2018optimal} (FOCS'19), which has polynomially large error; and that by Belazzougui and Zhang \cite{BelazzouguiZ16} (FOCS'16), which is only optimal for a limited range of parameters. Our techniques are based on a generalization of the celebrated expander codes by Sipser and Spielman \cite{sipser1996expander}, which may be of independent interests.

\end{titlepage}

\section{Introduction}
Document exchange is a combinatorial version of the famous Slepian-Wolf problem \cite{slepian1973noiseless},  which is a fundamental problem in communication and coding theory dating back to 1973.
It was then studied by Orlitsky \cite{orlitsky1991interactive} and subsequently named and also studied by Cormode et. al.\ \cite{CormodePSV00}.  
Here, two parties Alice and Bob each holds a string (document) $x$ and $y$, and the goal is for one party to learn the other party's string with the least amount of communication possible. For simplicity, let us assume that both $x$ and $y$ have $n$ bits. If $x$ and $y$ can be arbitrary strings, then it is clear that in the worst case the communication needs at least $n$ bits, i.e., sending one party's string to the other party. However, in practice this is often not the case, and $x$ and $y$ can actually be close in some sense. For example, Alice and Bob may be two uses holding different versions of some original document, where $x$ and $y$ are obtained after some edits of a string $z$. If the number of edits is limited, then it is possible for one party to learn the other party's string with significantly less amount of communication. In this paper, we focus on the case where the strings have binary alphabet.

More generally and formally, the document exchange problem can be described as follows. Alice and Bob each has an $n$-bit string $x$ and $y$, and the distance between $x$ and $y$, $D(x, y)$ is upper bounded by some number $k$. Here the distance $D$ can be any measure of interests. Now, the first goal here is to minimize the communication complexity as a function of $n$ and $k$. In addition, it is also an important goal to keep the protocol \emph{efficient}, i.e., we would like the communication protocol to run in polynomial time of $n$.

There has been a lot of work on the document exchange problem \cite{orlitsky1991interactive, barbara1988exploiting, BarbaraL91, AbdelA94, CormodePSV00, orlitsky2001practical, suel2004improved, irmak2005improved, Jowhari2012EfficientCP, BelazzouguiZ16, cheng2018deterministic, haeupler2018optimal, cheng2019block}. While Orlitsky \cite{orlitsky1991interactive} established some upper and lower bounds on the communication complexity of general ``balanced" measures $D(x, y)$, as well as exponential time protocols that can achieve the optimal communication, efficient protocols in subsequent works have been mostly focusing on the two natural cases where $D(x, y)$ is either the Hamming distance or the edit distance. In the former, the distance is measured by how many bits in $x$ and $y$ are different at the corresponding locations, while in the latter the distance $\ED(x, y)$ is measured by the minimum number of insertions, deletions, and substitutions to transform one string into another.\ Both distances are metrics, and edit distance strictly generalizes Hamming distance. 

For both Hamming distance and edit distance, it is known that if $D(x, y) \leq k$, then the optimal communication complexity in the document exchange problem is $\Theta(k \log (n/k))$, and this can be achieved by a deterministic one-round protocol running in exponential time. The situation of efficient protocols however is different for these two measures. For Hamming distance, we have many efficient, deterministic one-round protocols with optimal communication complexity $\Theta(k \log (n/k))$, using checksum decoding of error-correcting codes such as Algebraic Geometry codes \cite{hoholdt1998algebraic}, BCH codes \cite{bose1960further, hocquenghem1959codes}, etc. For edit distance, except for the exponential time deterministic one-round protocol in \cite{orlitsky1991interactive} which achieves optimal communication complexity, for a long time only efficient randomized one round protocols with sub-optimal communication complexity are known. These include the work of Irmak et al. \cite{irmak2005improved} with communication complexity $O(k \log(\frac{n}{k}) \log n)$, the work of Jowhari \cite{Jowhari2012EfficientCP} with communication complexity $O(k \log^2 n \log^* n)$, the work of Chakraborty et al. \cite{Chakraborty2015LowDE} with communication complexity $O(k^2 \log n)$, and the work of Belazzougui and Zhang \cite{BelazzouguiZ16} with communication complexity $O(k (\log^2 k+\log n))$. In particular, the protocol in \cite{BelazzouguiZ16} has asymptotically optimal communication complexity for $k =2^{O(\sqrt{\log n})}$, with success probability $1-1/\poly(k \log n)$. 

In 2018, Cheng et. al. \cite{cheng2018deterministic}, and Haeupler \cite{haeupler2018optimal} independently gave an efficient, deterministic one-round protocol with communication complexity $O(k \log^2 (n/k))$. Finally, Haeupler \cite{haeupler2018optimal} gave the first efficient randomized one-round protocol with optimal communication complexity $O(k \log (n/k))$. However, his protocol only succeeds with probability $1-1/\poly(n)$.

Document exchange is closely related to the (even more) fundamental problem of error correcting codes.\ The goal of an error correcting code is to ensure that one party can successfully send information to another party, despite errors caused by the communication channel. In this setting, the first party (Alice) runs an encoding algorithm that turns a message of $m$ bits into a codeword of $n$ bits, and sends the codeword to the second party (Bob) through a channel.  Bob then tries to recover the message by running a decoding algorithm. Similar to document exchange, there are also two important goals here. First, one wants to keep $n-m$ (the redundancy of the codeword) to be as small as possible, or alternatively, to keep $m$ (the message length) to be as large as possible. Second, one needs both the encoding and decoding to be efficient, i.e., run in polynomial time of $m$.

There has been extensive study on error correcting codes, which we will not be able to completely survey here. Again, the channel error can have several different models, and the most studied are Hamming errors and edit errors. For both cases, assuming $k$ is an upper bound on the number of errors, then it is known that the optimal message length one can achieve (with possibly exponential time encoding/decoding) is $m=n-\Theta(k \log(n/k))$. For Hamming errors, again we have efficient constructions matching this bound, based on Algebraic Geometry codes \cite{hoholdt1998algebraic}. For edit errors the constructions are far behind, and for a long time we only have asymptotically optimal constructions for the two extreme cases of $k=1$ \cite{Levenshtein66} and $k=\alpha n$ for some small constant $\alpha>0$ \cite{796406}. A recent line of works \cite{7835185, 7541373, BukhV16, haeupler2017synchronization, HS17c, cheng2018deterministic, haeupler2018optimal} achieved significant progress on this problem. In particular, Cheng et.\ al.\ \cite{cheng2018deterministic}, and Haeupler \cite{haeupler2018optimal} independently gave an efficient code with $m=n-O(k \log^2(n/k))$. Cheng et.\ al.\ \cite{cheng2018deterministic} further gave an efficient code with $m=n-O(k \log n)$, which is optimal for $k \leq n^{1-\alpha}$ where $\alpha>0$ is any constant.

The connection between document exchange and error correcting codes is demonstrated by the notion of systematic error correcting codes. These are codes where a codeword is simply the message followed by some redundant information called the \emph{checksum}. Given such a code, the checksum can be used as the information sent in a document exchange protocol. Conversely, given a one round document exchange protocol, one can use a standard error correcting code on the information sent and use it as the checksum in a systematic error correcting code.

In all previous works, Alice and Bob have symmetric information---they both know that their string is within distance $D(x, y) \leq k$ to the other party's string, or the total number of errors in the received codeword is at most $k$. However, in many practical situations, each party may have some additional partial information that is not known to the other party. For example, in document exchange, if Bob has made edits in some specific parts of the original document, then even without carefully tracking the edits, Bob has some partial information of where the differences can happen. This information is not necessarily known to Alice. In another situation, suppose Alice sends a long string to Bob by Internet routing, then this string may be broken into several parts and transmitted to Bob through different channels. These channels may have different behavior and introduce different numbers of errors. While it is reasonable that both parties know the parameters of all channels, due to the routing process Alice may not know which channels her parts are sent through. On the other hand, Bob can learn these information by observing the received parts. Thus Bob will have some partial information about the numbers of errors in specific parts of the received string, which is not known to Alice. The fist example applies to document exchange and the second example applies to error correcting codes. One can now ask the following natural question, which is the focus of this paper. \\ 

\noindent \textbf{Question:} \emph{Can we use these asymmetric information to reduce the communication complexity in document exchange or the redundancy in error correcting codes, while still designing efficient protocols or codes?}  \\

Towards answering this question, we first formally define our model. 

\subsection{The Model of Asymmetric Information}
In this paper we focus on Hamming distance/Hamming errors in the model of asymmetric information. To model the asymmetric information, we assume that one party has some additional information of where the differences/errors can happen. More formally, we use a vector of disjoint subsets $\svec=(S_1, \cdots, S_t)$ to indicate the positions where the differences/errors can happen, and a vector of integers $\knum=(k_1, \cdots, k_t)$ to indicate the upper bounds on the numbers of differences/errors in each set $S_i$. For each $S_i$, let $s_i$ denote the size of $S_i$, i.e., $s_i=|S_i|$. We also use $\snum$ to indicate the vector $\snum=(s_1, \cdots, s_t)$. We assume the parameters $(\snum, \knum, t)$ are known to both parties, and that (without loss of generality) $k_1 \geq k_2 \geq \cdots \geq k_t$. 

\begin{definition}($(\snum, \knum, t)$ Asymmetric Document Exchange)
There are two parties Alice and Bob. Alice has a string $x\in \{0,1\}^n$ and Bob has a string $y\in \{0,1\}^n$. Both parties know  $(\snum, \knum, t)$. In addition, Bob knows a vector of disjoint subsets $\svec=(S_1, \cdots, S_t)$ where $\forall i, S_i \subseteq [n]$ and $|S_i|=s_i$. That is, within each set $S_i$, the Hamming distance between $x$ and $y$ is at most $k_i$. One party tries to learn the string of the other party.
\end{definition}

\begin{definition}($(\snum, \knum, t)$ Asymmetric Error Correcting Code)
There are two parties Alice and Bob.\ Both parties know  $(\snum, \knum, t)$.\ Alice encodes a message of $m$ bits into a codeword of $n$ bits, using a function $\Enc: \bit^m \to \bit^n$ and sends it to Bob. Bob knows a vector of disjoint subsets $\svec=(S_1, \cdots, S_t)$ where $\forall i, S_i \subseteq [n]$ and $|S_i|=s_i$.\ That is, within each set $S_i$, there are at most $k_i$ Hamming errors in the received codeword. Bob uses a function $\Dec: \bit^n \to \bit^m$ to recover the message. 
\end{definition}

We require the protocol or code to succeed for \emph{every} possible vector of disjoint subsets $\svec=(S_1, \cdots, S_t)$ with $|S_i|=s_i, \forall i$, and for \emph{every} possible distance/error pattern that is consistent with $\svec=(S_1, \cdots, S_t)$ and $\knum=(k_1, \cdots, k_t)$. 

We consider both deterministic and randomized protocols/codes. In the case of randomized solutions, we assume that the two parties have shared randomness, as is standard in all previous works. In the case of error correcting codes, we further assume that the channel errors do not depend on the shared randomness.

Our model is quite general in capturing asymmetric information. A naive solution is to simply ignore the extra information, and apply a document exchange protocol or error correcting code for $k=\sum_{i=1}^{t} k_i$ Hamming distance or Hamming errors. However, our goal here is to see if the extra information can be used to design better protocols or codes. Another natural strategy for the document exchange problem, is for Bob to first send the descriptions of $\svec$ to Alice, and they can then run a protocol on each set $S_i$. However, this strategy can result in a significant amount of communication, e.g., $\sum_{i=1}^{t} s_i \log n$, which can be even larger than $n$. In some special situations, a set $S_i$ may be a continuous block in the string, and it suffices to just send the starting and ending index, using $2 \log n$ bits. If all sets $S_i$ are of this form, then the total number of bits required is $2 t \log n$. Even this number can be large when the number of sets $t$ is large. We also stress that in our model and all results, each set $S_i$ does not need to be a continuous block. A final simple strategy is to try to form a large continuous block which includes several $S_i$'s, but this can increase the size of the sets significantly and thus also results in a penalty on the communication complexity. 

\begin{remark}
In the asymmetric document exchange, it may seem unreasonable to assume that Alice knows the vectors $\snum, \knum$. However, this is without loss of generality up to a small loss in communication complexity and communication rounds. Basically, Bob can first send these two vectors to Alice. This only takes one round and the number of bits sent by Bob is $O(\sum_{i=1}^{t}(\log k_i+\log s_i))$, while the number of bits needed to distinguish all possible error patters is at least $\sum_{i=1}^{t} \log \binom{s_i}{k_i}$. The former is always within a constant factor to (and in most cases smaller than) the latter. 
\end{remark}

\paragraph{Related previous works.} While document exchange and error correcting codes with asymmetric information are natural questions, to our knowledge they have not been studied systematically. The only previous work we found is the work of Belazzougui and Zhang \cite{BelazzouguiZ16}, which studies a special case of our model with $t=1$, i.e., Bob's extra information only has one subset $S$ with $|S|=s$. They use entirely different techniques to give a document exchange protocol with sub-optimal communication complexity $O(k(\log s+\log(1/\eps)))$, where Bob can learn Alice's string with success probability $1-\eps$. 

However, there are a large body of works on a related topic \cite{adler2001protocols, LaberH02, GhodsiS01, WatkinsonAF01, AdlerDHP06, AGHRW18}, which study the problem of source coding/data compression with asymmetric information. In this setting, the decoder has some prior distribution $\mu$ not known to the encoder, and the encoder tries to send a set of items drawn independently from the distribution to the decoder, using the smallest number of bits as possible. The problem we study here, on the other hand, focuses on error correction. While there are similarities between these two problems, they are also fundamentally different. For example, all the efficient algorithms in these prior works run in time polynomial in the size of the support $\mu$. This is prohibitive for our purpose since this number is already exponentially large. 

We note that source coding and error correction are the two most important applications of information theory. Thus given the abundant works on source coding/data compression with asymmetric information, we believe a systematic study of document exchange and error correcting codes with asymmetric information is also an important direction.

\subsection{Our Results}
We provide both lower bounds and upper bounds for document exchange and error correcting codes with asymmetric information. To simplify the presentation, we first define some quantities. Given two vectors $\snum=(s_1, \cdots, s_t)$ and  $\knum=(k_1, \cdots, k_t)$, we define $\ent(\snum, \knum)=\log \left (\prod_{i=1}^{t} \left (\sum_{j=0}^{k_i}\binom{s_i}{j} \right ) \right )=\sum_{i=1}^{t} \log \left (\sum_{j=0}^{k_i}\binom{s_i}{j} \right )$.\ Similarly, for two integers $s$ and $k$ with $s \geq k$, we define $\ent(s, k)=\log \left (\sum_{j=0}^{k}\binom{s}{j} \right )$.

Note that if $\forall i, s_i \geq 2k_i$ and $s \geq 2k$, then $\ent(\snum, \knum) =\Theta( \sum_{i=1}^{t} k_i \log(s_i/k_i))$ and $\ent(s, k) =\Theta( k \log (s/k))$. Recall that $k=\sum_{i=1}^{t} k_i$ and $s=\sum_{i=1}^{t} s_i \leq n$, hence $\ent(\snum, \knum) \leq \ent(n, k)$. We have the following theorem.

\begin{theorem}\label{thm:lowerDE}
In an $(\snum, \knum, t)$ asymmetric DE problem, we have
\begin{itemize} 
\item Suppose  Alice learns Bob's string, then any deterministic protocol has communication complexity at least $\ent(n, k)$, and any randomized protocol with success probability $\geq 1/2$ has communication complexity at least $\ent(n, k)-1$. 
\item Suppose  Bob learns Alice's string, then any randomized protocol with success probability $\geq 1/2$ has communication complexity at least $\ent(\snum, \knum)-1$. Furthermore if $\forall i, s_i \geq 2 k_i$, then any one round deterministic protocol has communication complexity at least $\ent(n, k)$. 
\end{itemize}
\end{theorem}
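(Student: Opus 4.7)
The plan is to reduce each inequality to a direct volume/counting argument, where the asymmetry between the two bullets is controlled by which party knows $\svec$. When the learner does not know $\svec$, the set of inputs it must distinguish is essentially the full Hamming ball $B(x,k)$ of radius $k=\sum_i k_i$ with volume $2^{\ent(n,k)}$; when it does know $\svec$, this shrinks to a product ball of volume $2^{\ent(\snum,\knum)}$. In all four cases the transcript length $|T|$ is then forced to be (roughly) the logarithm of the ambiguity volume by a standard pigeonhole on transcripts.

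For the first bullet (Alice learns Bob's string), the key realizability lemma is that every $y$ at Hamming distance at most $k$ from $x$ is attainable by \emph{some} valid $\svec$: greedily allocate the positions where $x$ and $y$ differ into $S_1,\dots,S_t$ using at most $k_i$ per set (feasible since $\sum k_i = k$), then pad each $S_i$ to size $s_i$ by unused coordinates (feasible since the $S_i$ are disjoint subsets of $[n]$, so $\sum s_i \leq n$). Hence Alice, who only sees $x$ and $(\snum,\knum,t)$, must be able to recover any $y\in B(x,k)$. For deterministic protocols each such $y$ must produce a distinct transcript, giving $|T|\geq \ent(n,k)$. For randomized protocols with shared randomness $r$, fix $(x,r)$ and observe that Alice's final output is a deterministic function of $(T,x,r)$, so at most one $y$ per transcript is correctly decoded; letting $y$ be uniform on $B(x,k)$ and averaging over $r$ under the success probability $\geq 1/2$ yields some $r^\ast$ correctly decoding $\geq |B(x,k)|/2$ of the $y$'s, so $2^{|T|}\geq |B(x,k)|/2$ and $|T|\geq \ent(n,k)-1$.

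For the second bullet (Bob learns Alice's string), the randomized bound follows from the same averaging argument after fixing any $y$ and any valid $\svec$: the number of $x$ consistent with $(y,\svec)$ is exactly $\prod_i\sum_{j=0}^{k_i}\binom{s_i}{j}=2^{\ent(\snum,\knum)}$. For the one-round deterministic bound $\ent(n,k)$ under $s_i\geq 2k_i$, I plan to prove the stronger statement that any two $x_1\neq x_2$ with Hamming distance at most $2k$ must have different transcripts. The construction of a common consistent $(y,\svec)$ is as follows: partition the differing coordinates of $x_1$ and $x_2$ into groups of sizes at most $2k_i$ (possible since $\sum 2k_i=2k$), place group $i$ inside a set $S_i$ padded to size $s_i$ with unused coordinates (feasible since $s_i\geq 2k_i$ and $\sum s_i\leq n$), and within $S_i$ let $y$ agree with $x_1$ on half of the differing positions of that group and with $x_2$ on the other half, so that both $x_1$ and $x_2$ lie within Hamming distance $k_i$ of $y$ on $S_i$; outside $\cup S_i$ set $y=x_1=x_2$. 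Bob then cannot distinguish $x_1$ from $x_2$, forcing $T(x_1)\neq T(x_2)$. Since any two points of $B(x,k)$ are within Hamming distance $2k$ by the triangle inequality, $T$ is injective on $B(x,k)$, so $|T|\geq \ent(n,k)$.

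The main conceptual obstacle is setting up the two ``packing'' lemmas -- realizability of arbitrary $k$-sparse (respectively $2k$-sparse) difference patterns by some $\svec$ subject to the disjointness and size constraints. Once these combinatorial statements are verified, all four bounds reduce to elementary transcript pigeonhole or success-probability averaging.
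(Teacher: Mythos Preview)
Your proposal is correct and follows essentially the same information-theoretic pigeonhole argument as the paper: realize every admissible error pattern by some $\svec$, then count transcripts. The one place where you are actually more careful than the paper is the one-round deterministic bound in the second bullet: the paper simply asserts that two different $x$'s collide on the transcript and then builds a common $(y,\svec)$ using $s_i\ge 2k_i$, without spelling out why the two colliding $x$'s can be taken at Hamming distance at most $2k$; your restriction to a single ball $B(x,k)$ and use of the triangle inequality make this step explicit.
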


This theorem tells us the following important things: First, Bob's extra information is only useful for him to learn Alice's string, but not useful in the other direction. Second, in the case of a one round protocol for Bob to learn Alice's string, for a wide range of parameters (i.e., when $\forall i, s_i \geq 2 k_i$), Bob's extra information is only useful in randomized protocols. 

For upper bounds, we note that there are efficient deterministic protocols to meet the bound $\ent(n, k)$, based on algebraic geometry codes. To meet the bound $\ent(\snum, \knum)$, there is also a simple one round randomized protocol: Alice hashes her string $x$ using a random hash function, and Bob enumerates all possible strings to find the one with the correct hash value. It's easy to see that this protocol succeeds if there is no hash collision, which happens with high probability if the hash function outputs some $O(\ent(\snum, \knum))$ bits. However, this protocol runs in exponential time, and our main result is an \emph{efficient} protocol that gets close to this bound.

To state our main theorem, we define another quantity $\chi(s, k, t) \in \N$: first partition the interval $[2, n]$ into disjoint subintervals $\{I_j=[2^{10^{j-1}}, 2^{10^{j}})\}$, starting from $j=1$. Then, for every $i \in [t]$, put $s_i/k_i$ into the corresponding subinterval. $\chi(s, k, t)$ is defined to be the number of subintervals $I_j$ which contain at least one $s_i/k_i$. We now have the following theorem.

\begin{theorem}\label{thm:upperDE}
In an $(\snum, \knum, t)$ asymmetric DE problem, suppose that $\forall i, s_i \geq 2 k_i$. There is an efficient randomized one round protocol for Bob to learn Alice's string, with communication complexity $O(\chi(s, k, t)^2 \ent(\snum, \knum))$ and error probability $2^{-\Omega(k_t)}+\frac{1}{\poly(s)}$. The protocol runs in time $\widetilde{O}(n)$.
\end{theorem}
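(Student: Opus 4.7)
The plan is to bucket the indices $i\in[t]$ according to the ratio $s_i/k_i$ using the doubly-exponential partition $\{I_j=[2^{10^{j-1}},2^{10^{j}})\}$ that defines $\chi=\chi(s,k,t)$. For each non-empty bucket $I_j$, write $T_j=\bigcup_{i:\,s_i/k_i\in I_j}S_i$ and $K_j=\sum_{i:\,s_i/k_i\in I_j}k_i$. Within one bucket all $\log(s_i/k_i)$ agree up to a factor of $10$, so the contribution of the bucket to $\ent(\snum,\knum)$ is $\Theta(K_j\log(|T_j|/K_j))$, and summing over buckets recovers $\Theta(\ent(\snum,\knum))$. Thus, up to the claimed overhead, it suffices to reduce the general problem to $\chi$ single-bucket subproblems, each an asymmetric DE in the $t=1$ regime where Bob knows a set $T\subseteq[n]$ of prescribed size (unknown to Alice) and the Hamming weight of $x\oplus y$ restricted to $T$ is at most $K$, while $x$ and $y$ may differ arbitrarily outside $T$.

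For the single-bucket subroutine I would instantiate a generalization of Sipser--Spielman expander codes, as previewed in the abstract. Alice transmits the $O(\chi\cdot K\log(|T|/K))$ parity bits of a sparse parity-check matrix $H$ whose associated bipartite graph enjoys strong unique-neighbor expansion on every $K$-subset of any admissible candidate $T$. Bob computes $Hy$ locally, XORs it with Alice's message to obtain the syndrome $H(x\oplus y)$, and then runs a bit-flipping decoder that uses his knowledge of $T$ to avoid flipping coordinates outside $T$. The extra factor of $\chi$ in the sketch length is the penalty for making $H$ oblivious to the location of $T$ inside $[n]$; combined with the $\chi$ buckets this yields the $\chi^2$ overhead in the overall communication bound.

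Running the subroutine independently on each of the $\chi$ buckets with fresh shared randomness and taking a union bound produces the stated error $2^{-\Omega(k_t)}+1/\poly(s)$: the $1/\poly(s)$ term captures the probability that the randomly sampled parity-check graph fails to be a good expander for some bucket, while the $2^{-\Omega(k_t)}$ term dominates for the bucket containing the smallest error count, where direct concentration is weakest. The $\widetilde{O}(n)$ running time follows from the sparsity of each $H$, so that $Hx$ can be computed in nearly linear time, from the linear-time peeling decoder within each bucket, and from the fact that the number of buckets is $O(\chi)=O(\log\log n)$.

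The main obstacle, I expect, is constructing the single-bucket subroutine. One must design the parity-check graph so that unique-neighbor expansion holds on every $K$-subset of $T$ for \emph{every} admissible $T$ of the prescribed size simultaneously, not merely for a fixed $T$; and then one must prove that the iterative decoder, restricted to $T$ using Bob's side information, converges to $x\oplus y$ on $T$ without being corrupted by the arbitrary disagreement outside $T$. This is precisely where a generalization of the classical expander-code argument is required, and it is the step that pins down the $\chi$ overhead per bucket.
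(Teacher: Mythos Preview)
Your bucketing step is right and matches the paper: merging the $S_i$ whose ratios $s_i/k_i$ fall in the same doubly-exponential interval $I_j$ reduces the problem to $\chi$ sets while preserving $\ent(\snum,\knum)$ up to constants. The gap is in what you do \emph{after} bucketing.

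You propose to run the $\chi$ buckets \emph{independently}, treating each as a $t=1$ instance in which ``$x$ and $y$ may differ arbitrarily outside $T$.'' That subroutine does not exist with the sketch length you budget. Alice's parity checks are linear in the whole string, so the syndrome $H(x\oplus y)$ that Bob forms is polluted by every error outside $T_j$. For the bucket with the smallest $K_j$, the external errors (up to $k-K_j$ of them, coming from the other buckets) can vastly exceed $K_j$; in the paper's running example $K_1=n^{0.1}$, $K_2=10$, the bucket-$2$ sketch you allow has $O(\chi\cdot 10\log n)=O(\log n)$ bits, yet the $n^{0.1}$ external errors, each touching $d=\Theta(\log n)$ checks, flip essentially every parity bit. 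No restriction of the bit-flipping decoder to $T_2$ can survive this, and no $\chi$-factor blowup of the sketch for bucket $2$ alone fixes it. Relatedly, your identification of the ``main obstacle'' is off: getting expansion on all $K$-subsets of every admissible $T$ simultaneously is already free---a random graph with $m=O(K\log(|T|/K))$ right vertices does it with high probability. The real obstacle is the interference from errors outside $T$, which your plan does not handle.

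The paper's route is \emph{sequential} rather than parallel. After bucketing down to $\chi$ sets (ordered so that $k_1\ge\cdots\ge k_\chi$), it iteratively finds the shortest prefix $\bigcup_{j\le i}S_j$ whose current error count exceeds $c\sum_{j>i}k_j$, and uses an expander whose guaranteed expansion is only for subsets $R\subseteq\bigcup_{j\le i}S_j$ with $|R|$ in a window $[c\sum_{j>i}k_j,\;1.6(\text{current bound})]$. Belief propagation restricted to the prefix then \emph{reduces} the Hamming distance in the prefix down to $c\sum_{j>i}k_j$; it succeeds precisely because the suffix errors are a small constant fraction of the prefix errors, so the inequality $2\mathsf{s}+\mathsf{u}\le d|R|+d|R_{\text{suffix}}|\le(1+1/c)d|R|$ still forces $\mathsf{u}\ge 0.7d|R|$. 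After at most $\chi$ such reductions the total error is $O(k_\chi)$ and one final one-set round finishes. The $\chi^2$ arises as (at most $\chi$ rounds)$\times$(each round's sketch is $O(\chi\,\ent(\snum,\knum))$), the per-round $\chi$ coming from bounding $(k'+k_i)\log\bigl(\sum_{j\le i}s_j/(k'+k_i)\bigr)$ against $\ent(\snum,\knum)$---not from any obliviousness penalty on $H$.
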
   

Note that $\chi(s, k, t) \leq t$ and $\chi(s, k, t) \leq \log \log n$, so the above theorem immediately gives the following two corollaries.

\begin{corollary}\label{cor:upperDE1}
In an $(\snum, \knum, t)$ asymmetric DE problem, suppose that $\forall i, s_i \geq 2 k_i$. There is an efficient randomized one round protocol for Bob to learn Alice's string, with communication complexity $O(t^2 \ent(\snum, \knum))$ and error probability $2^{-\Omega(k_t)}+\frac{1}{\poly(s)}$. The protocol runs in time $\widetilde{O}(n)$.
\end{corollary}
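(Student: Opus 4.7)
The plan is to deduce Corollary~\ref{cor:upperDE1} as an immediate consequence of Theorem~\ref{thm:upperDE} by establishing the bound $\chi(s,k,t) \leq t$. Recall that $\chi(s,k,t)$ counts the number of subintervals $I_j = [2^{10^{j-1}}, 2^{10^j})$ that contain at least one of the ratios $s_i/k_i$ for $i \in [t]$. Since the subintervals $\{I_j\}$ are pairwise disjoint, each ratio $s_i/k_i$ belongs to at most one $I_j$, so only the subintervals hit by one of the $t$ ratios can contribute. This gives at most $t$ non-empty subintervals, i.e., $\chi(s,k,t) \leq t$. Note also that under the hypothesis $s_i \geq 2k_i$ each ratio is at least $2$, so it indeed lies in the union of the $I_j$'s and is counted correctly.

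With this bound in hand, the corollary follows by direct substitution. The hypothesis $\forall i, s_i \geq 2k_i$ of Theorem~\ref{thm:upperDE} is assumed in the corollary as well, so the theorem applies and yields a protocol with communication complexity $O(\chi(s,k,t)^2\, \ent(\snum,\knum))$; replacing $\chi(s,k,t)^2$ by $t^2$ gives the stated bound $O(t^2\, \ent(\snum,\knum))$. The error probability $2^{-\Omega(k_t)} + \frac{1}{\poly(s)}$, the one-round structure, and the running time $\widetilde{O}(n)$ are inherited verbatim from Theorem~\ref{thm:upperDE}. There is no real obstacle here; the entire argument is a one-line bookkeeping step, and the conceptual work has already been done in Theorem~\ref{thm:upperDE}.
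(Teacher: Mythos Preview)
Your proposal is correct and matches the paper's approach exactly: the paper simply notes that $\chi(s,k,t) \leq t$ (and $\chi(s,k,t) \leq \log\log n$) immediately before stating the corollary, and derives it directly from Theorem~\ref{thm:upperDE}. Your justification of the bound $\chi(s,k,t)\leq t$ via the disjointness of the $I_j$'s is the intended one-line argument.
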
   

\begin{corollary}\label{cor:upperDE2}
In an $(\snum, \knum, t)$ asymmetric DE problem, suppose that $\forall i, s_i \geq 2 k_i$. There is an efficient randomized one round protocol for Bob to learn Alice's string, with communication complexity $O((\log \log n)^2 \ent(\snum, \knum))$ and error probability $2^{-\Omega(k_t)}+\frac{1}{\poly(s)}$. The protocol runs in time $\widetilde{O}(n)$.
\end{corollary}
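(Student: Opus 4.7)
The plan is to derive the corollary directly from Theorem \ref{thm:upperDE} by a crude upper bound on the quantity $\chi(s,k,t)$. Since the running time, error probability, and the entropic term $\ent(\snum,\knum)$ are inherited unchanged, the only real work is the numerical inequality $\chi(s,k,t) = O(\log\log n)$.

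First I would unpack the definition: $\chi(s,k,t)$ counts the number of dyadic-tower subintervals $I_j = [2^{10^{j-1}}, 2^{10^j})$ that contain at least one ratio $s_i/k_i$. Under the hypothesis $s_i \geq 2k_i$ (and the implicit $k_i \geq 1$, $s_i \leq n$), every such ratio lies in $[2,n]$. Therefore $\chi(s,k,t)$ is upper bounded by the total number of subintervals $I_j$ whose intersection with $[2,n]$ is nonempty. The largest such $j$ satisfies $2^{10^{j-1}} \leq n$, i.e.\ $10^{j-1} \leq \log_2 n$, giving $j \leq 1 + \log_{10} \log_2 n = O(\log \log n)$.

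Plugging $\chi(s,k,t) = O(\log \log n)$ into the communication complexity bound $O(\chi(s,k,t)^2 \cdot \ent(\snum,\knum))$ from Theorem \ref{thm:upperDE} yields the claimed $O((\log\log n)^2 \ent(\snum,\knum))$, while the error probability $2^{-\Omega(k_t)} + 1/\poly(s)$ and the $\widetilde{O}(n)$ running time transfer verbatim. There is no substantive obstacle: the corollary is simply the main upper bound restated with a worst-case estimate on $\chi$, much as the companion Corollary \ref{cor:upperDE1} follows from the equally trivial bound $\chi(s,k,t) \leq t$.
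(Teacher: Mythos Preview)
Your proposal is correct and matches the paper's own reasoning exactly: the paper simply notes that $\chi(s,k,t) \leq \log\log n$ (since every ratio $s_i/k_i$ lies in $[2,n]$ and the intervals $I_j$ grow by a tower of tens) and then invokes Theorem~\ref{thm:upperDE}. There is nothing further to add.
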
   

In particular, Corollary~\ref{cor:upperDE1} implies that if $t$ is a constant, then we have a one round protocol with asymptotically optimal communication complexity, while Corollary~\ref{cor:upperDE2} gives a one round protocol with communication complexity optimal up to an additional $(\log \log n)^2$ factor. Both protocols run in near linear time. We also note that the simple strategy of ignoring the extra information can result in communication complexity $\Omega(\ent(\snum, \knum) \log n)$ in the worst case.

Similarly, we have both lower bounds and upper bounds for error correcting codes with asymmetric information. The first theorem shows that such information is only useful for a randomized code.

\begin{theorem}\label{thm:lowerECC}
In an $(\snum, \knum, t)$ asymmetric ECC problem, if $\forall i, s_i=|S_i| \geq 2 k_i$, then any deterministic code must have distance at least $2k+1$. In particular, this means $m \leq n-\ent(n, k)$. Furthermore, any randomized code with success probability $\geq 1/2$ must have message length $m \leq n-\ent(\snum, \knum)+1$. 
\end{theorem}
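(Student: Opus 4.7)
The plan is to handle the deterministic and randomized lower bounds by separate arguments, each exploiting the adversary's freedom to choose $\svec$ after seeing the code.

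For the deterministic part, I would prove that $\Enc$ must have minimum Hamming distance at least $2k+1$, from which the bound $m \leq n - \ent(n, k)$ follows by the standard Hamming sphere-packing bound. Suppose for contradiction there exist messages $M_1 \neq M_2$ whose codewords $c_1, c_2$ differ on a set $D$ of size $|D| \leq 2k$. Split $D$ arbitrarily into two disjoint subsets $A$ and $B$ with $|A|, |B| \leq k$, and define the received word $r$ to agree with $c_2$ on $A$, with $c_1$ on $B$, and with both $c_1$ and $c_2$ elsewhere. Then the error patterns $E_1 = A$ (relative to $c_1$) and $E_2 = B$ (relative to $c_2$) are disjoint subsets of $D$, each of size at most $k = \sum_i k_i$. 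The main technical step is now to exhibit a single $\svec$ of disjoint $S_i$'s of the prescribed sizes under which both $E_1$ and $E_2$ are admissible (i.e., $|S_i \cap E_j| \leq k_i$ for all $i, j$). This is possible because I can greedily distribute the elements of $E_1$ across the $S_i$'s (at most $k_i$ per $S_i$) and, because $E_1 \cap E_2 = \emptyset$, independently distribute the elements of $E_2$ in the same way, and then pad each $S_i$ with indices from outside $D$ to reach size $s_i$. Since $s_i \geq 2k_i$ by assumption, each $S_i$ has room for its two contributions from $D$ plus padding, and $\sum_i s_i \leq n$ guarantees the $S_i$'s fit disjointly in $[n]$. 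Thus the adversary can realize the same $r$ from either $(M_1, E_1)$ or $(M_2, E_2)$ under the same $\svec$, and no decoder can succeed on both.

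For the randomized lower bound, I would invoke Yao's minimax principle. Fix any admissible $\svec$ (say, contiguous blocks of sizes $s_1, \ldots, s_t$), and let $\mathcal{E}$ be the set of error patterns compatible with it, so $|\mathcal{E}| = \prod_i \sum_{j=0}^{k_i} \binom{s_i}{j} = 2^{\ent(\snum, \knum)}$. Choose $(M, e)$ uniformly from $\bit^m \times \mathcal{E}$; since for every fixed $(M, e)$ the randomized decoder is correct with probability at least $1/2$ over the shared randomness $R$, averaging yields a fixed $R^*$ for which the deterministic pair $(\Enc_{R^*}, \Dec_{R^*})$ recovers $M$ correctly on at least half of the $(M, e)$ pairs. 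The map $(M, e) \mapsto \Enc_{R^*}(M) + e$ restricted to the successful pairs is injective, because two successful pairs mapping to the same received word would force the decoder to output two different messages. Hence the number of successful pairs is at most $2^n$, and combining gives $\tfrac{1}{2} \cdot 2^m \cdot 2^{\ent(\snum, \knum)} \leq 2^n$, i.e., $m \leq n - \ent(\snum, \knum) + 1$.

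The main obstacle is the combinatorial construction of $\svec$ in the first part, which must simultaneously accommodate both decoding interpretations; this is precisely where the hypothesis $s_i \geq 2k_i$ is essential, ensuring that each $S_i$ can absorb up to $k_i$ errors from each of the two competing candidates without exceeding its budget. The randomized argument, by contrast, is a clean counting/Yao's-principle argument that needs no structural assumption on the $s_i$.
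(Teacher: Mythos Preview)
Your proposal is correct and follows essentially the same approach as the paper's proof: for the deterministic bound, both you and the paper take two close codewords, produce a common received word, and use the hypothesis $s_i \geq 2k_i$ to build a single $\svec$ compatible with both error patterns; for the randomized bound, both fix the shared randomness by averaging and then apply a counting argument with a fixed $\svec$. Your randomized argument is in fact stated more carefully than the paper's—the paper's phrasing ``succeeds for $2^{m-1}$ messages'' is ambiguous about whether this means for all admissible errors, whereas your injectivity-of-the-map-on-successful-pairs formulation avoids this issue entirely while yielding the same bound.
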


Again, a code with randomized encoding and exponential time deterministic decoding can achieve message length $m = n-O(\ent(\snum, \knum)$. We design an efficient code that comes close to this.

\begin{theorem}\label{thm:upperECC}
In an $(\snum, \knum, t)$ asymmetric ECC problem, suppose $\forall i, s_i \geq 2 k_i$.\ There is an efficient code with randomized encoding and deterministic decoding, which has message length $m=n-O(\chi(s, k, t)^2 \ent(\snum, \knum))$ and error probability $2^{-\Omega(k_t)}+\frac{1}{\poly(s)}$. In particular, the message length can be $\max\{n-O(t^2 \ent(\snum, \knum)), n-O((\log \log n)^2 \ent(\snum, \knum))\}$, and the running time is $\widetilde{O}(n)$.
\end{theorem}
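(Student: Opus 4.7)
The plan is to reduce the construction to the document exchange protocol from \Cref{thm:upperDE}, following the standard ``systematic code'' reduction between one-round DE and ECC that the introduction alludes to. Concretely, I would build a codeword of the form $C = x \,\|\, \sigma' \in \bit^n$, where $x \in \bit^m$ is the message and $\sigma' \in \bit^{n-m}$ is a checksum that, combined with the corrupted message portion, lets Bob reconstruct $x$.

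To build the checksum, I would first compute $\sigma = \Samp(x \,\|\, 0^{n-m}; R)$, using the shared randomness $R$ and the DE sketch function of \Cref{thm:upperDE} applied to the padded length-$n$ string with parameters $(\snum, \knum, t)$; this produces a sketch of length $r_0 = O(\chi(s,k,t)^2\,\ent(\snum, \knum))$. I would then encode $\sigma$ with an efficient binary ECC (for example a BCH code) that corrects up to $k = \sum_i k_i$ Hamming errors, obtaining $\sigma'$ of length $r = r_0 + O(k \log(r_0/k))$, and set $m = n - r$. For decoding, Bob parses $\tilde{C} = \tilde{x} \,\|\, \tilde{\sigma}'$; because the total error weight in $[n]$ is at most $k$, the inner ECC exactly recovers $\sigma$. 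Bob then forms $y = \tilde{x} \,\|\, 0^{n-m}$ and feeds $(y, \sigma, \svec, \knum)$ to the DE decoder of \Cref{thm:upperDE}. The string Alice actually sketched, $x \,\|\, 0^{n-m}$, differs from $y$ only inside $\svec \cap [m]$, with at most $k_i$ differences per $S_i$; this is a legal error pattern for the DE decoder, so it outputs $x \,\|\, 0^{n-m}$, and the first $m$ bits give the decoded message.

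The error probability and running time are inherited from \Cref{thm:upperDE}: error $2^{-\Omega(k_t)} + 1/\poly(s)$ and time $\widetilde{O}(n)$, since the inner ECC operates on a length-$O(r_0)$ string and contributes only lower-order additive terms. The corollary bounds $n - O(t^2\,\ent(\snum, \knum))$ and $n - O((\log \log n)^2\,\ent(\snum, \knum))$ follow directly from $\chi(s,k,t) \leq \min\{t, \log \log n\}$. The main point worth checking carefully is that the inner ECC does not inflate the checksum beyond $O(r_0)$: the hypothesis $s_i \geq 2k_i$ gives $\binom{s_i}{k_i} \geq 2^{k_i}$, hence $\ent(\snum, \knum) \geq k$, hence $r_0/k \geq 1$, and then $k \log(r_0/k) \leq r_0$ by the elementary inequality $\log u \leq u$ for $u \geq 1$; this yields $r = O(r_0) = O(\chi(s,k,t)^2\,\ent(\snum, \knum))$, as desired. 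This small calculation is essentially the only nontrivial step beyond invoking \Cref{thm:upperDE}.
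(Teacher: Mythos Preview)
Your proposal is correct and follows essentially the same route as the paper: pad the message to length $n$ with zeros, apply the DE sketch from \Cref{thm:upperDE}, and protect the sketch with an inner error-correcting code that handles up to $k$ Hamming errors. The only cosmetic difference is the choice of inner code: the paper uses an asymptotically good code (so the redundancy is $O(m_0)$ immediately, and its $\Theta(m_0)$ distance exceeds $2k$ because $\ent(\snum,\knum)\ge k$), whereas you use a BCH-type code and bound its extra $O(k\log(r_0/k))$ redundancy via $\log u\le u$. Both arguments land at the same $O(r_0)$ checksum length; the asymptotically-good choice also makes the $\widetilde{O}(n)$ running-time claim cleaner since such codes (e.g., expander codes) admit linear-time decoding.
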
 


Next we show that we can design efficient document exchange protocols with asymptotically optimal communication complexity in a special case, roughly when $\snum, \knum$ are geometric progressions.

\begin{theorem}
\label{thm:DEHspecialintro}

There is an efficient randomized one-round protocol for every $(\mathbf{s}, \mathbf{k}, t)$ asymmetric DE problem, where $ s_i =  k2^{\Theta(i)} ,  k_i = \max\{k/2^{\Theta(i)}, \Theta(\frac{k}{ t^2 \log  \frac{n}{k}  }) \} \leq s_i/40 $.  The communication complexity is $O(k)$ and the error probability is $2^{-\Omega(k / \log \frac{n}{k} ) } $.

\end{theorem}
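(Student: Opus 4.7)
The plan is to instantiate the expander-code framework underlying \cref{thm:upperDE} with layer-specific parameters tuned to the geometric shape of $(\snum, \knum)$, eliminating the $\chi(s, k, t)^2$ overhead. First I would verify that the target $O(k)$ is information-theoretically meetable by showing $\ent(\snum, \knum) = O(k)$. In the exponential regime $k_i = k/2^{\Theta(i)}$, each term equals $\Theta(i k / 2^i)$ and the sum is $O(k)$ by a geometric series. In the floor regime $k_i = \Theta(k / (t^2 \log(n/k)))$ there are at most $t \le O(\log(n/k))$ active layers, each contributing $O(k/t^2)$, for a total of $O(k/t)$. Hence $\ent(\snum, \knum) = O(k)$.

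For the construction, for each layer $i$ Alice attaches a random expander-code syndrome of length
\[
m_i \;=\; \Theta\!\bigl(\max\{k_i \log(s_i/k_i),\; k/\log(n/k)\}\bigr),
\]
computed from a random unbalanced bipartite graph on the $s_i$ coordinates of $S_i$ with left-degree $d_i = \Theta(\log(s_i/k_i))$. The hypothesis $k_i \le s_i/40$ guarantees that the standard probabilistic construction yields a $(2 k_i, (7/8) d_i)$-expander with overwhelming probability, after which Sipser--Spielman bit-flipping decoding corrects up to $k_i$ errors inside $S_i$ in $\widetilde O(s_i)$ time. Bob, knowing $\svec$, splits $y$ into blocks $y|_{S_i}$ and decodes each layer independently. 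The total sketch length is
\[
\sum_i m_i \;=\; O(\ent(\snum, \knum)) + O\!\bigl(t \cdot k/\log(n/k)\bigr) \;=\; O(k),
\]
using $t \le O(\log(n/k))$; the total running time is $\widetilde O(n)$.

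The hardest part will be matching the error bound $2^{-\Omega(k/\log(n/k))}$. The floor inside $m_i$ is designed precisely for this: each layer's random graph fails to have the required expansion with probability at most $2^{-\Omega(m_i)} \le 2^{-\Omega(k/\log(n/k))}$, and a union bound over the $t \le O(\log(n/k))$ layers loses only a $\log(n/k)$ factor, which can be absorbed into the constant in $\Omega(\cdot)$ under the mild assumption $k = \omega(\log(n/k) \log\log(n/k))$ that is anyway necessary for the claimed exponent to be nontrivial. I expect the main technical hurdle to be a careful, uniform verification that the stipulated $m_i$ is large enough to guarantee the needed expansion with failure probability $2^{-\Omega(k/\log(n/k))}$ in both regimes simultaneously, while keeping $\sum_i m_i = O(k)$; this requires balancing the implicit constants in the $\Theta$'s defining $s_i, k_i$ against the floor $k/\log(n/k)$ so that no single layer underflows the target exponent.
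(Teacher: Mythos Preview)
Your proposal has a fundamental gap: you write that Alice computes syndromes ``from a random unbalanced bipartite graph on the $s_i$ coordinates of $S_i$'' and that Bob ``splits $y$ into blocks $y|_{S_i}$ and decodes each layer independently.'' But in the asymmetric DE model Alice does \emph{not} know the sets $S_i$; she knows only their sizes $s_i$ and the error bounds $k_i$. She therefore cannot build a graph whose left vertex set is $S_i$, nor compute any sketch that depends on which coordinates lie in $S_i$. This is precisely the asymmetry that makes the problem nontrivial.

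Even under the charitable reading that each $G_i$ has all $n$ coordinates as left vertices and you merely ask for expansion on subsets of $S_i$, the per-layer decoding still fails. The parity checks Alice sends for layer $i$ are linear functions of all $n$ bits of $x$, so the errors sitting in $S_j$ for $j\neq i$ corrupt those checks as well. Bob cannot run belief propagation restricted to $S_i$ and pretend the other layers are clean; the unsatisfied checks he sees are caused by the union of all errors. The paper's two-set analysis already makes exactly this point: cross-layer contamination is what forces a joint argument, and it is why layers cannot be handled in isolation.

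The paper's actual proof takes a different route. It uses a \emph{single} random bipartite graph $G:[n]\times[d]\to[m]$ with constant $d$ and $m=O(k)$, and requires expansion only for ``structured'' sets $R$ satisfying $|R\cap S_i|\le 20k_i$ for every $i$ and $|R|\in[k/\log(n/k),\,O(k)]$. The geometric profile of $(\snum,\knum)$ is exactly what keeps the number of such structured sets small enough that $O(k)$ right vertices suffice; this is the probabilistic core. Bob then runs a \emph{modified} belief propagation that refuses to flip more than $19k_i$ bits inside any $S_i$, which keeps the running error set structured throughout the iteration; showing that a flippable bit still exists under this restriction (via $|R'|\ge 0.9|R|$) is the non-obvious algorithmic ingredient. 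When the process halts, the residual Hamming distance is at most $k/\log(n/k)$, and a standard deterministic sketch of length $O((k/\log(n/k))\log(n/k))=O(k)$ finishes the job. The stated error probability comes solely from the single random graph failing its structured-expansion property.
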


We show that the problem of document exchange under edit distance can be reduced to the special case above, and thus we obtain the following theorem. 
\begin{theorem}

%


There is an efficient randomized one-round protocol for the DE problem with edit distance at most $k$. The communication complexity is $O(k \log \frac{n}{k})$ and the error probability is $\min\{2^{-\Theta(k/  \log^3 \frac{n}{k} ) }, 1/\poly(n) \} $.

\end{theorem}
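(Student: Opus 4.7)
The plan is to reduce the edit-distance document exchange problem to the special asymmetric Hamming DE problem resolved by Theorem~\ref{thm:DEHspecialintro}, using a hierarchical block-synchronization preprocessing inspired by \cite{BelazzouguiZ16, haeupler2018optimal}. Set $t = \Theta(\log(n/k))$ levels; at level $i$ consider a partition of $x$ into blocks of size $k\, 2^{\Theta(i)}$. The target invariant is that for each $i$ Bob can identify a set $S_i$ of level-$i$ block positions of total size $s_i = k\, 2^{\Theta(i)}$ such that, after a canonical alignment of $y$'s blocks to $x$'s blocks, every edit not already resolved at a finer scale is confined to $S_i$, contributing at most $k_i = \max\{k/2^{\Theta(i)},\, \Theta(k/(t^2 \log(n/k)))\}$ residual Hamming mismatches. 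These sets $S_i$ are disjoint by construction, and they feed directly into Theorem~\ref{thm:DEHspecialintro}.

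First I would have Alice, at each level $i$, send short $O(\log n)$-bit hashes of her level-$i$ blocks, contributing $O((k/2^i)\log n)$ bits per level summed geometrically to $O(k \log(n/k))$ communication in total. Bob, using $y$ and the hashes, tries for each level-$i$ block of $x$ to align it to a shifted window of $y$; a block is declared \emph{bad} at level $i$ when no alignment lies within the local edit budget. A standard amortization—each edit operation creates $O(1)$ new bad blocks per level—gives the geometric decay of $(s_i,k_i)$ above. Second, once Bob knows $\svec$ and $\knum$, I would invoke Theorem~\ref{thm:DEHspecialintro} on the aligned residual instance: it uses $O(k)$ further bits and recovers Alice's aligned string with error $2^{-\Omega(k_t)} = 2^{-\Theta(k/\log^3(n/k))}$. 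Combining the two phases gives the claimed $O(k \log(n/k))$ communication, and a union bound over the $O(\log(n/k))$ synchronization levels (each failing only through a hash collision of probability $1/\poly(n)$) together with the special-case error yields $\min\{2^{-\Theta(k/\log^3(n/k))}, 1/\poly(n)\}$; in the regime $k = O(\log^4 n)$, where the exponential term is worse than $1/\poly(n)$, one falls back to Haeupler's optimal $1/\poly(n)$-error protocol \cite{haeupler2018optimal}.

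The main obstacle will be the synchronization step itself, because a single insertion or deletion can shift a long suffix of $x$ relative to $y$, so naively treating edit errors as Hamming errors is catastrophic. To make the geometric decomposition work, the per-level hashes must simultaneously detect misalignment \emph{and} carry enough local context for Bob to guess the correct shift of $y$ inside each block; this is essentially the role played by prior hash-based sync schemes, and the argument must be set up so that, after the shift is removed, the residual discrepancies inside a bad block are genuinely Hamming errors and their count is charged only to edits local to that block. A second technical point is that the resulting $(\snum,\knum,t)$ instance must match the strict shape hypothesis of Theorem~\ref{thm:DEHspecialintro}: the geometric spacing $s_i = k\, 2^{\Theta(i)}$, the ratio $s_i/k_i \geq 40$, and the floor $k_t = \Theta(k/(t^2\log(n/k)))$. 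This is enforced by padding the smallest levels upward (inflating $s_i$) and merging adjacent scales whose block sizes are too close, which does not affect the final communication since each adjustment costs only a constant factor per level and there are only $O(\log(n/k))$ levels.
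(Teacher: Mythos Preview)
Your proposal has the right high-level shape --- a multi-level hash-based synchronization feeding into Theorem~\ref{thm:DEHspecialintro}, with a fallback to \cite{haeupler2018optimal} for small $k$ --- but the way you connect the two pieces does not match the paper and, as stated, does not work.

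The first gap is in the synchronization phase. You have Alice send $O(\log n)$-bit hashes of her level-$i$ blocks, but Alice does not know which blocks are bad, so she must hash all of them. With $\Theta(k 2^i)$ blocks at level $i$ (or $n/(k2^i)$ under your stated block-size convention --- the numbers in your proposal are internally inconsistent), summing $O(\log n)$ bits per block over all levels gives either $\Omega(n)$ or $\Omega((n/k)\log n)$ total bits, not the claimed $O((k/2^i)\log n)$ per level. The whole point of the paper's construction is that hashes have \emph{constant} output length, and Alice sends only an $O(k)$-bit sketch of the full hash vector at each level.

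The second, deeper gap is that you invoke Theorem~\ref{thm:DEHspecialintro} \emph{once}, on an ``aligned residual instance,'' as if block alignment converts residual edits into Hamming errors on $x$ itself. It does not: a single deletion inside a block shifts every subsequent bit. In the paper, the asymmetric Hamming DE is never applied to $x$; it is applied at \emph{every} level $i$ to the vector $v[i]$ of constant-length hash values of $x$'s level-$i$ blocks. There the errors genuinely are Hamming errors (a bad block has the wrong hash symbol), and the asymmetric information $(\svec,\knum)$ arises because Bob knows, for each earlier level $j<i$, the set $S_{i-j}$ of current-level descendants of the $O(k)$ blocks he rematched at level $j$, with $|S_{i-j}|=O(k\,2^{i-j})$. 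The bound $k_{i-j}=\max\{k/2^{\Theta(i-j)},\,k/\log^3(n/k)\}$ on the number of still-undetected bad blocks in $S_{i-j}$ is the nontrivial probabilistic step (the $t$-witness argument of Lemma~\ref{lem:veryOldAncestors}), and it is what makes the geometric $(\snum,\knum)$ hypothesis of Theorem~\ref{thm:DEHspecialintro} hold. Your proposal supplies neither of these ingredients.
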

 
We also have both lower bounds and upper bounds for document exchange where both parties have some asymmetric partial information, represented as a vector of disjoint subsets. For the clarity of presentation we omit the results here, and refer the reader to Section~\ref{sec:twoside} for details. 

\subsection{Technique Overview}  
Our lower bounds follow from relatively simple information theoretic arguments, so here we only provide an informal outline of our protocols. We start with the asymmetric document exchange for Hamming distance. Recall that the asymmetric information is in the form of $\svec=(S_1, \cdots, S_t)$ and $\knum=(k_1, \cdots, k_t)$, where $\forall i, |S_i|=s_i$ and the Hamming distance within $S_i$ is at most $k_i$. We assume $\forall i, s_i \geq 2k_i$, and without loss of generality that $k_1 \geq k_2 \cdots \geq k_t$.
 
\paragraph{The protocol for one set.} Our starting point is the simplest case where $t=1$, i.e. there is only one set $S$ of size $s$ and the Hamming distance in $S$ is at most $k$. In this case our goal is to give an efficient one round protocol with communication complexity $O(k \log \frac{s}{k})$. If $s=n$ then this can be achieved by using a systematic algebraic geometry code or an expander code \cite{sipser1994expander}. We will use the latter and we briefly review the application of expander codes in document exchange.

To run the protocol, the two parties choose a bipartite expander graph $G: [n] \times [d] \to [m]$. Alice associates her string $x$ with the $n$ vertices on the left,  and computes a string $z$ of length $m$ as follows: For every $i\in [m]$, let  $z_i = \bigoplus_{j\in \Gamma^{-1}(i)} x_j$, where $\Gamma^{-1}(i)$ is the set of neighbors of the right vertex $i$ in the expander. The string $z$ consists of a sequence of parity checks of $x$, and is then sent to Bob.

To recover $x$, Bob starts out with $\tilde{x}=y$ as his current version of $x$, and maintains another string $z' \in \bit^m$ using the same approach as above, except replacing the string $x$ by $\tilde{x}$, i.e., $z'$ consists of a sequence of parity checks of $\tilde{x}$. $z$ and $z'$ will differ in several coordinates, and Bob will gradually modify $\tilde{x}$ into $x$ by flipping some bits in $\tilde{x}$ according to the parity checks. This process is known as belief propagation, and works as follows. Bob keeps finding a bit in $\tilde{x}$ such that by flipping this bit, the Hamming distance between $z'$ and $z$ decreases by at least one. Bob flips this bit and updates $\tilde{x}$ and $z'$ correspondingly. Bob stops when $z'=z$, at which point $x=\tilde{x}$ and he has successfully recovered $x$.  


For the analysis, we use the set $R \subseteq [n]$ to denote the coordinates where $x$ and $\tilde{x}$ are different. We say the $i$'th parity check bit is satisfied if $z_i=z'_i$, and unsatisfied otherwise. Let the number of satisfied and unsatisfied checks in $\Gamma(R)$ (the neighbors of $R$) be $\mathsf{s}$ and  $\mathsf{u}$.
Assume the graph has good expansion, i.e. $|\Gamma(R)| = \mathsf{s} + \mathsf{u} \geq 0.9 d|R|$, and note that in $\Gamma(R)$, each satisfied check has at least two neighbors in $R$. Thus $2\mathsf{s} + \mathsf{u} \leq d|R|$. By the two inequalities, we deduce $\mathsf{u} \geq 0.8 d |R|$ and thus at least one left vertex has more unsatisfied parity checks as neighbors than satisfied parity checks, and Bob can flip this bit. The analysis holds as long as the expansion of the set $R$ is guaranteed. Note that the number of unsatisfied checks is strictly decreasing in the process, thus $|R|$ can never be more than $1.25 k$, since otherwise this will induce more than $dk$ unsatisfied checks, but at the beginning there are at most $dk$ unsatisfied checks. Therefore, we only need to guarantee the expansion of all $R \subseteq [n]$ with $|R| \leq 1.25k$, and a random graph with $m=O(k \log \frac{n}{k})$ and $d=O(\log \frac{n}{k})$ satisfies this property with high probability.
 
Going back to the case where $s< n$, the first issue is that we can't afford to use an expander which has good expansion for all subsets $R$ as before, since this will make $m=\Omega(k \log \frac{n}{k})$. To fix this, we instead just require the expansion to hold for all subsets $R \subseteq S$ with $|R| \leq 1.25k$. Now, a random graph with $m=O(k \log \frac{s}{k})$ and $d=O(\log \frac{s}{k})$ satisfies this property with high probability, and both parties can generate the same expander by using the shared randomness. Similarly, when recovering $x$ Bob will always look for a bit in $S$ to flip. The analysis is now similar to the standard case and this gives the protocol for the case of $t=1$. 

\paragraph{The protocol for two sets.}
We now consider the case with $t>1$. Our goal is to design an efficient one round protocol with communication complexity close to $\ent(\mathbf{s}, \mathbf{k})$.

The first idea may be to take the union of all $S_i, i\in [t]$ as one set $S$, and the Hamming distance in $S$ is at most $k = \sum_{i\in [t]} k_i$. Now we can use the protocol for $t=1$ described before. However, in this case the communication complexity will be $O(\ent(s, k))$, which may not be close to $\ent(\snum, \knum)$. For example, consider the case where $t = 2$, $k_1 = n^{0.1}, s_1  = 10n^{0.1}$, $k_2 = 10, s_2 = 0.1 n$. A direct computation indicates $\ent(s, k) = \Omega(\ent(\snum, \knum) \log n)=\omega(\ent(\mathbf{s}, \mathbf{k}))$. It also appears hard to improve this if we just use a single expander graph, since the decoding requires good expansion for all possible subsets of errors during the belief propagation, which can potentially be all possible subsets of size $\Omega(k)$. This forces the right hand size of the graph to be $\ent(s, k)$.

To overcome this difficulty, our idea is to use more than one expander codes. Towards this, our main observation is that, the issue with the above example is due to the following fact: for some $i\in [t]$, $k_i$ is large while $s_i $ is small, but for some other $i$, $k_i$ is small while $s_i$ is large. Indeed, in the case of $t=2$, there are two good situations where $\ent(s, k)=O(\ent(\snum, \knum))$:

\begin{enumerate}
\item $k_1$ and $k_2$ are roughly the same, i.e., $k_1=\Theta(k_2)$. In this case we have $\ent(s, k)=\Theta((k_1+k_2) \log \frac{s_1+s_2}{k_1+k_2})=\Theta(k_1 \log \frac{s_1}{k_1}+k_2 \log \frac{s_2}{k_2})=\Theta(\ent(\snum, \knum))$.

\item $\log \frac{s_1}{k_1}$ and $\log \frac{s_2}{k_2}$ are roughly the same, i.e., $\log \frac{s_1}{k_1}=\Theta(\log \frac{s_2}{k_2})$. In this case we also have $\ent(s, k)=\Theta(\ent(\snum, \knum))$.
\end{enumerate} 

Our protocol will exploit both of these good cases. We first illustrate this with a protocol for the case of $t=2$. Our idea is to reduce the number $k_1$ (recall that $k_1 \geq k_2$) to be roughly the same as $k_2$ (which is unnecessary if $k_1$ and $k_2$ are already roughly the same at the beginning). In other words, we will first reduce the Hamming distance in $S_1$ from $k_1$ to at most $c k_2$, if $k_1 > c k_2$ for some constant $c>1$. It is not immediately clear why this is feasible, since Alice does not know the subset $S_1$. Additionally, we need to make sure the communication complexity of this step is not too large.

We achieve this by using an expander code based on a bipartite expander $G: [n] \times [d] \rightarrow [m]$ such that for all sets $R \subseteq S_1$ with $|R| \in [ c k_2, 1.6 k_1]$, the set $R$ has good expansion, i.e., $|\Gamma(R)| \geq 0.9d |R|$. The expander is again generated by shared randomness, and we show that we can choose $d = O(\log \frac{s_1}{k_1}), m = O(k_1\log \frac{s_1}{k_1})$ and the graph satisfies the property with high probability.  Alice will again compute the parity checks $z$ and send it to Bob.

Now Bob will apply the same method as before: start with $\tilde{x}=y$ and keep finding a bit in $S_1$ with more unsatisfied parity checks as neighbors than satisfied parity checks. Bob flips this bit and continues doing this until no such bit can be found. Since the number of unsatisfied parity checks keeps decreasing, the process will end in a finite number of steps. We claim that when it ends, the Hamming distance in $S_1$ is at most $c k_2$.\ This effectively reduces the Hamming distance in $S_1$.  

The main issue in the analysis here is that the different bits between $x$ and $y$ are not entirely in $S_1$, and this may cause problems in belief propagation. However, our observation is that when $k_1$ is much larger than $k_2$, the effect of $k_2$ can mostly be ignored. More specifically, let $R$ be the set of left vertices which correspond to the different bits between $x$ and $y$ in $S_1$, and $R_2$ be the set of left vertices which correspond to the different bits  in $S_2$. Thus $|R_2| \leq k_2$. Let the number of satisfied and unsatisfied checks in $\Gamma(R_1)$ be $\mathsf{s}$ and  $\mathsf{u}$. As long as $|R| \in [ c k_2, 1.6 k_1]$, we have $|\Gamma(R)| = \mathsf{s} + \mathsf{u} \geq 0.9 d|R|$, and $ 2\mathsf{s} + \mathsf{u} \leq d|R| + d|R_2|  \leq (1+\frac{1}{c})|R|$. Combining these inequalities, we can still deduce $u \geq 0.7 d |R|$,  by setting $c = 10$.  Hence there must exist a bit in $S_1$ to flip. Since the number of unsatisfied checks decreases strictly, the size $|R|$ in the process can never be larger than $1.6 k_1$. This is because otherwise there will be at least $1.12 dk_1$ unsatisfied checks, while at the beginning there are only at most $(1+1/c)dk_1  = 1.1 dk_1$ unsatisfied checks. Thus when this process stops, we must have $|R| \leq c k_2$. At this point, we can use the protocol for one set together with another expander graph to finish the job, by considering the set $S=S_1 \cup S_2$ which has Hamming distance at most $(c+1)k_2$. The total communication complexity is $O(k_1\log \frac{s_1}{k_1})+O(k_2 \log \frac{s_1+s_2}{k_2})=O(\mathsf{H}( \mathbf{s}, \mathbf{k}))$. 


\paragraph{The protocol for arbitrary $t$.} We now generalize the above protocol to arbitrary $t$. Recall that $k_1 \geq k_2 \geq \cdots \geq k_t$. Our idea is to use the above protocol of reducing Hamming distance repeatedly, while going through the index from $1$ to $t$. More formally, we use $i'$ to denote the current index and $k'$ to denote an upper bound of the Hamming distance in $\cup_{j\in [i']} S_j$ after possible steps of reducing distance. We start with $i'=0, k' = 0$ and repeat the following: find the first index $i > i'$ s.t.  the current Hamming distance in $\cup_{j\in [i]} S_j$  is much larger than the Hamming distance in $\cup_{j= i+1}^t S_j$, i.e., 
\begin{align}
\label{ieq:findingi}
k'+ \sum_{j=i'+1}^{i} k_j > c \sum_{j=i+1}^t k_j = k''.
\end{align}
Then we reduce the Hamming distance in $\cup_{j \in [i]} S_i$ to at most $k''$ by using the two set protocol described before, regarding $ \cup_{j\in [i]} S_j$ as one set and $ \cup_{j=i+1}^{t} S_j$ as the other set.\ We now update $k' = k'', i' = i$ and continue the process. Finally, the Hamming distance in $S=\cup_{j\in [t]} S_i$ will be reduced to at most $(c+1)k_t$, and we apply the one set protocol for $S$ to finish the job.

The correctness follows from the correctness of the one set protocol and the two set protocol. The main thing left is to bound the communication complexity. 
Note that except the first iteration, in each subsequent iteration $i'$ will be updated to $i'+1$. Thus the number of bits Alice sends in this step is $m = O\left( (k' + k_i) \log \frac{\sum_{j\in [i]}s_j }{ k'+k_i } \right)$. We show that this is always $O(t\mathsf{H}(\mathbf{s}, \mathbf{k}))$ by using the bound on $k'$, the fact that $k_1 \geq k_2 \geq \cdots \geq k_t$, and $k_i \leq s_i/2, \forall i\in [t]$.
Thus the total communication complexity is $O(t^2 \mathsf{H}(\mathbf{s}, \mathbf{k}))$. Note that this is a one round protocol since only Alice sends out information. 

Finally, we can get further improvement by grouping some sets together. Specifically, we divide the interval $[2, n]$ into disjoint subintervals $I_j = [2^{10^{j-1}}, 2^{10^{j}} ), j = 1, \ldots, O(\log \log n)$ and put each subset $S_i$ into one interval according to the number $s_i/k_i$. Whenever two subsets $S_i$ and $S_j$ are in the same interval, we have $\log(s_i/k_i)=\Theta(\log(s_j/k_j))$ and thus we can consider $S_i \cup S_j$ as one set with Hamming distance $k_i+k_j$, without changing the communication complexity much. Now, taking the union of all subsets in the same interval to be one subset reduces the number of subsets to $\chi(s, k, t)$, and applying our protocol results in communication complexity $O(\chi(s, k, t)^2 \mathsf{H}(\mathbf{s}, \mathbf{k}))$.

\paragraph{ECC with asymmetric information.} The protocol for document exchange can be used to construct an error correcting code. We do this by first estimating the length of the redundant information. Let $m_0$ be the communication complexity of the $(\mathbf{s}, \mathbf{k}, t)$ DE protocol for message length $n$. We choose an asymptotically good code $C_0$ with message length $m_0$ and codeword length $n_0$, which corrects $k$ errors. The actual message length of our code will be $n- n_0$. On  input message $x$, we run Alice's DE protocol on $x\circ \mathbf{0}$ where $\mathbf{0} =0^{n_0}$ to get $z \in \bit^{m_0}$. Then we encode $z$ by $C_0$ and the final codeword is $x\circ C_0(z)$. 
To decode, one first recovers $z$ by running the decoding algorithm of $C_0$ on the part $C_0(z)$. Then we run Bob's DE protocol using $z$, and by replacing the $C_0(z)$ part with $0^{n_0}$. The correctness follows from the code $C_0$ and the DE protocol. 

\subsubsection{Document exchange under edit distance}
We now describe our protocol for document exchange under edit distance, and show a connection to the problem of document exchange under Hamming distance with asymmetric information. 

On a high level, our protocol follows the leveled structure used in several previous works \cite{irmak2005improved, cheng2018deterministic, haeupler2018optimal}. The protocol proceeds in $L=O(\log(\frac{n}{k}))$ levels where in each level, Alice sends a sketch of her string $x$ with $O(k)$ bits to Bob. Bob then uses all the sketches and his string $y$ to recover $x$.


On Alice's side, in the first level she divides her string into $\Theta(k)$ blocks where each block has size $O(\frac{n}{k})$. In each subsequent level, every block from the previous level is divided evenly into two blocks, and this ends when the block size becomes $O(\log \frac{n}{k})$, which takes $O(\log(\frac{n}{k}))$ levels. In each level, Alice applies a different random hash function to every block using the shared randomness, and computes a sketch based on the hash values. On Bob's side, his recovering process also proceeds in $L$ levels, where in each level Bob maintains a string $\tilde{x}$ which is Bob's current version of Alice's string $x$. Specifically, in each level Bob also applies the same hash functions to the blocks of $\tilde{x}$ to get the hash values, then he uses this level's sketch to recover the correct hash values of Alice's bocks. Bob will then find the blocks in $\tilde{x}$ which have inconsistent hash values with Alice's blocks, and update these blocks using his string $y$ by computing a non overlapping matching between $y$'s blocks and the corresponding hash values. An important property of the protocol is that in each level, the number of different blocks between $x$ and $\tilde{x}$ is always bounded by $O(k)$ with high probability. This ensures that Alice can send a short sketch to Bob for him to recover the correct hash values of all blocks.

To ensure that Alice's sketch in each level has length $O(k)$, there are several non trivial issues. First, every hash function needs to have only $O(1)$ bits of output, as in \cite{haeupler2018optimal}. Second, even so, the general task of recovering $s$ hash values with $O(k)$ errors needs to use a sketch of size at least $\log \binom{s}{k}=\Omega(k \log\frac{s}{k})$, where $s$ is the number of blocks in the current level. This can be as large as $\Omega(k \log\frac{n}{k})$ when $s$ becomes $n^{\Omega(1)}$, and thus will be problematic. To fix this issue, \cite{haeupler2018optimal} uses a more careful analysis called ``t-witness'' to show that in each level, the total number of possible error patterns is $2^{O(k)}$ with high probability, instead of $\binom{s}{k}$. Thus, in theory one can simply use another random hash function with $O(k)$ bits of output to distinguish all error patterns, and this brings the sketch size back to $O(k)$.\ However, simply doing this will result in an exponential running time since it involves exhaustive search. Thus,  \cite{haeupler2018optimal} needs to first randomly partition the blocks into bins, such that with high probability each bin has $O(\log n)$ hash errors. The exhaustive search in each bin now takes $\poly(n)$ time. Unfortunately, this also increases the error probability from $2^{-\Omega(k)}$ to $1/\poly(n)$. 

In our protocol, we instead replace the approach of random partitioning and exhaustive search in  \cite{haeupler2018optimal} by a direct efficient approach, thus improving the error probability to be exponentially small. We achieve this by establishing a connection to the problem of document exchange under Hamming distance with asymmetric information, as follows. 

Intuitively, in Bob's process of recovering the string $x$, in each level Bob keeps track of the positions of the possible blocks where his version $\tilde{x}$ and $x$ may be different (we call these blocks bad). More specifically, recall that we can show in each level, with high probability there are at most $O(k)$ bad blocks. In the next level the number of these blocks will at most double due to splitting, however since we use random hash functions with $O(1)$ output bits, we can show that in the next level with high probability Bob will detect $O(k)$ bad blocks and update them. Some of the updated blocks may still be bad, but Bob knows the positions of all updated blocks, and he also knows that there are at most $O(k)$ bad blocks in them after the update. Now, suppose these updates happen in level $j$, and Bob is now in level $i > j$. Then the $O(k)$ updated blocks will split into $O(k 2^{i-j})$ smaller blocks. If any of these smaller blocks is bad and it remains undetected so far, then it must have gone through $j-i$ different hash functions. If we choose all hash functions independently, then the probability that this happens is $2^{-c(i-j)}$ for some constant $c$. By choosing the number of output bits of the hash functions to be a large enough constant, we know that the expected number of smaller bad blocks that remain undetected so far is $O(k /2^{i-j})$. With a little extra effort, we can show that with high probability the number of these blocks is at most $k_{i-j}  = \max\{k/\log^3 \frac{n}{k}, 20k/2^{i-j}\}$, and Bob knows that these blocks are inside the subset $S_{i-j}$ with size $O(k 2^{i-j})$, which stems from the $O(k)$ updated blocks in level $j$. In other words, this gives a forest with the $O(k)$ updated blocks in level $j$ being the roots, and the at most $k_{i-j}$ bad blocks are among the $|S_{i-j}|=O(k 2^{i-j})$ leaves. 

Note that the bad blocks in level $i$ can come from the updated blocks in all previous levels, thus we get a vector $\svec=(S_1, \cdots, S_{i-1})$ and a vector $\knum=(k_1, \cdots, k_{i-1})$. Furthermore in this process, whenever a bad block stemming from some level $j$ gets detected and updated in a later level $j'$, this new block in level $j'$ will become a new root and all its descendents are removed from the set $S_{i-j}$ and put into the set $S_{i-j'}$. This ensures that the final subsets $(S_1, \cdots, S_{i-1})$ are disjoint. Finally, only Bob knows the sets $(S_1, \cdots, S_{i-1})$, but both parties know $(s_1=|S_1|, \cdots, s_{i-1}=|S_{i-1}|)$ and $(k_1, \cdots, k_{i-1})$.\ Thus, we have reduced the problem of sending the sketch in level $i$ to the problem of document exchange under Hamming distance with asymmetric information. 

\subsubsection{Document exchange for a special setting of parameters}
We now give our protocol for document exchange with asymmetric information, in the special setting described above. Recall that we have $  s_i = O(k2^i), k_i = \max\{20k/2^{i-1}, k/\log^3 \frac{n}{k}\}, i\in [t], t = O(\log \frac{n}{k})$. One can compute $\ent(\snum, \knum)=\Theta(k)$ here, so our protocol for the general setting will result in sub-optimal communication complexity.\ We give a different protocol here, which uses just one expander graph instead of a sequence of expander graphs. 

The expander graph $G: [n] \times [d] \to [m]$ is generated by the shared randomness, with $m=O(k)$ and the following expansion property: for every $R \subseteq \cup_{i=1}^{t} S_i$ where $|R| \in [k/\log \frac{n}{k}, O(k)] $ and $\forall i \in [t], |R \cap S_i| \leq 20 k_i$, we have $|\Gamma(R)| \geq 0.9 d |R|$.\ Limiting the expansion to restricted sets rather than all sets $R$ with $|R| \in [k/\log \frac{n}{k}, O(k)] $ is the key to reduce the number of right vertices from $\Omega(k \log \frac{n}{k})$ to $O(k)$. Indeed, using a careful analysis of probabilities, we show that a random bipartite graph with constant $d$ and $m=O(k)$ satisfies this property with high probability. The main intuition is that the sequence $\{s_i, i\in [t]\}$ roughly increases exponentially, while the sequence $\{k_i, i\in [t]\}$ roughly decreases exponentially. 

Using this expander Alice sends her parity checks to Bob, and Bob again runs a belief propagation algorithm. The purpose of this phase is to reduce the total Hamming distance between $x$ and $\tilde{x}$ (Bob's current version of $x$, starting with $\tilde{x}=y$) to at most $k/\log \frac{n}{k}$. However, the belief propagation has tricky issues here, as the standard approach may flip much more than $20 k_i$ bits in $S_i$. This can result in a subset $R \subseteq [n]$ which does not have good expansion, thus ruining the whole process. To fix this, we prohibit the algorithm from flipping more than $20k_i$ bits in $S_i$ for each $i$. This is done by keeping track of the number of already flipped bits in each $S_i$, and for any $i$ if this number reaches $19k_i$, then subsequently in $S_i$ the algorithm will only flip bits that are previously flipped.

To show that this indeed works, at each step of the belief propagation, let $R \subseteq \cup_{i=1}^{t} S_i$ stand for the set of indices where $x$ and $\tilde{x}$ have different bits, and let $R'$ stand for $R$ restricted to the indices which we can flip (due to our modification). Thus $R'$ always has good expansion. Our first observation is that at any time, $|R'| \geq 0.9|R|$. This is because $R'$ is different from $R$ only if for some $S_i$, the number of bits already flipped is at least $19k_i$. However originally there are at most $k_i$ errors in $S_i$, so we have introduced at least $18k_i$ new errors. This means $\forall i, |R' \cap S_i | \geq 0.9 |R\cap S_i|$, and thus $|R'| \geq 0.9|R|$. Now let ($\mathsf{s}', \mathsf{u}')$ and $(\mathsf{s} , \mathsf{u})$ be the number of satisfied and unsatisfied checks in $\Gamma(R')$ and $\Gamma(R)$ respectively. We know $\mathsf{s}' + \mathsf{u}'  \geq 0.9 d |R'|$. 
Also, again by the fact that each satisfied check in $\Gamma(R)$ has at least two neighbors in $R$, we have $2\mathsf{s}' + \mathsf{u}' \leq 2\mathsf{s} + \mathsf{u} \leq  d|R| \leq \frac{10}{9} d|R'|$. 
From these two inequalities we can still deduce that $\mathsf{u}' \geq 0.7 d |R'|$, thus Bob can find a bit in $R'$ to flip. 

When this process stops, the Hamming distance between $x$ and $\tilde{x}$ is at most $k/\log \frac{n}{k}$. We can now use a deterministic document exchange protocol for Bob to recover $x$. The communication complexity is $O((k/\log \frac{n}{k})\log\frac{n}{k})=O(k)$. The only error probability here comes from the generation of the expander graph, which is $2^{-\Omega(k / \log \frac{n}{k} ) }$. We also show that the other errors in the protocol for edit distance is $2^{-\Theta(k/  \log^3 \frac{n}{k} ) }$. Thus the total error of the protocol for edit distance is $2^{-\Theta(k/  \log^3 \frac{n}{k} ) }$. When $k < \log^4 n$, we can switch to the protocol in \cite{haeupler2018optimal} which has error $1/\poly(n)$.
\section{Discussion and Open Problems}
In this paper we initiated a systematic study of document exchange and error correcting codes with asymmetric information. While we provided both lower bounds and upper bounds, as well as efficient randomized constructions that are close to optimal, there are still many interesting problems left. We list some below.

\begin{description}
\item [Question 1:] The most obvious open problem is to achieve optimal communication complexity (i.e., $\ent(\snum, \knum)$) for a one round randomized protocol. Two related questions are to reduce the error probability of the randomized protocol, and to study the case where the condition $\forall i, s_i \geq 2k_i$ does not hold. For example, is there a better deterministic protocol for the latter case?

\item  [Question 2:] A better understanding of the problem in the case of two sided asymmetric information. The results in this paper only study the case of two sided asymmetric information where $s^A+s^B \leq n$, i.e., the subsets from both parties can be disjoint in the worst case. What happens when $s^A+s^B > n$? In this case the subsets from both parties are guaranteed to overlap, and the situation becomes more complicated.

\item  [Question 3:] Two round deterministic protocol. We showed that for any one round deterministic protocol, the asymmetric information is not useful. However, by a result of Orlitsky \cite{Orlitsky90}, there exists a two round exponential time deterministic protocol with communication complexity $O(\ent(\snum, \knum)+\log n)$. The idea is that Bob sends a description of an appropriate hash function to Alice in the first round, and Alice sends the hash value of her string $x$ in the second round. The exponential running time comes from both the selection of hash functions and the recovering of $x$ using the hash value. It is an interesting open problem to see if we can design efficient protocols matching this bound. Our result suggests a way to approximate this: Bob sends a description of a sequence of appropriate expanders in the first round, and Alice sends the parity checks of her string $x$ in the second round. Using our algorithm, the recovering of $x$ in the second round is already efficient (in fact nearly linear time), however the first step of selecting the expanders still requires exponential time.

\item  [Question 4:] Optimal deterministic document exchange under edit distance. Our results also bring some hope to obtain an optimal deterministic document exchange protocol under edit distance. Especially, we have replaced the decoding by exhaustive search approach in \cite{haeupler2018optimal} by an efficient decoding algorithm. However, how to appropriate pick a hash function remains a problem. We also note that reducing the error probability is the first step towards a deterministic protocol, since if the error probability is small enough, then by a simple union bound there exists a non-uniform deterministic protocol that runs in polynomial time.
\end{description}

\section*{Paper Organization}
The rest of the paper is organized as follows.
In \cref{sec:prelim} we introduce some basic technical tools.
In \cref{sec:neg} we show lower bounds for asymmetric DE in the general setting.
In \cref{sec:GCAI} we give our protocol  for asymmetric DE in the general setting. 
In \cref{sec:cwas} we give our protocol  for asymmetric DE in a special setting.
In \cref{sec:DE} we give our protocol for DE under edit distance by using the protocol in the previous section. In \cref{sec:twoside} we generalize our results and give lower bounds and protocols for asymmetric DE with two sided information.

\section{Preliminaries}
\label{sec:prelim}

\subsection{Error correcting codes}

We will use the following well known parity check computation based on bipartite expander graphs. 

\begin{construction}[Expander Code  Encoding  \cite{sipser1994expander}]
\label{expandercode}
Let $\Gamma: [n] \times [d] \rightarrow [m]$ be a bipartite graph with $n$ left vertices, $m$ right vertices, left degree $d$.
The encoding of the $ \Gamma$-expander code, on input message $x\in \{0,1\}^{n}$, is computed as
$$ x\circ z,$$
where $z\in \{0,1\}^m$, $z[i] = \bigoplus_{j \in \Gamma^{-1}(i)} x[j], i\in [m]$.    

\end{construction}

\begin{definition}[\cite{GuruswamiUV09} ]
A bipartite graph with $n$ left vertices, $m$ right vertices and left degree $d$  
 is a $(k, a)$ expander if for every set of left vertices $ S \subseteq [n] $ of size $k$, we have $|\Gamma(S)| >  a k$. It is a $(\leq k_{\max} , a)$ expander if it is a $(k, a)$ expander for all $k \leq k_{\max}$.
\end{definition}
Here $\forall x\in [n]$, $\Gamma(x)$ outputs the set of all neighbours of $x$. It is also a set function which is defined accordingly. Also $\forall x\in [n], y\in [d]$, the function $ \Gamma:[n] \times [d] \rightarrow [m]$ is such that $\Gamma(x, y)$ is the $y$-th neighbour of $x$.

\begin{theorem}[\cite{GuruswamiUV09} ]
\label{expanderthm}
For all constants $ \alpha > 0$, for every $n \in \mathbb{N}$, $k_{\max} \leq n$, and $\epsilon > 0$, there exists an explicit
$(\leq k_{\max} ,(1-\epsilon)d)$ expander  with $n$ left vertices, $m$ right vertices, left degree $d = O((\log n)(\log k_{\max} )/\epsilon)^{1+ 1/ \alpha}$ and $m \leq d^2 k^{1+\alpha}_{\max}$. Here $d$ is a power of $2$.

\end{theorem}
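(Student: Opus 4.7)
The plan is to follow the construction of Guruswami, Umans, and Vadhan, which obtains lossless bipartite expanders by identifying left vertices with messages of an algebraic list-decodable code and right vertices with (coordinate, symbol) pairs. Concretely, let $C: \mathbb{F}_q^k \to \Sigma^D$ be a Parvaresh--Vardy (or folded Reed--Solomon) code with alphabet $\Sigma = \mathbb{F}_q^s$, block length $D$, and list-decoding radius close to the Singleton bound: for every ``received word'' $r \in \Sigma^D$, the number of messages $f$ whose codeword agrees with $r$ on at least a $\tau$-fraction of coordinates is bounded by some list size $L = L(\tau)$. I would then define the bipartite graph with left side identified with $\mathbb{F}_q^k$, right side $[D] \times \Sigma$, and edges $\{(f,(i, C(f)_i)) : i \in [D]\}$ for each $f$. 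The left degree is $d = D$, the right side size is $m = D \cdot |\Sigma| = D q^s$, and the identification of $[n]$ with left vertices requires only $q^k \ge n$.

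The expansion claim is proved by contradiction. Suppose some set $S$ of left vertices with $|S| = K \le k_{\max}$ satisfies $|\Gamma(S)| < (1-\epsilon) d K$. The total number of edges leaving $S$ is exactly $dK$, so by averaging a constant fraction of right vertices in $\Gamma(S)$ receive more than $1/(1-\epsilon)$ edges. A standard pigeonhole / Markov step then produces a single ``plurality pattern'' $r \in \Sigma^D$ such that a $\Omega(\epsilon)$-fraction of the $f \in S$ have $C(f)$ agreeing with $r$ on at least an $\epsilon'$-fraction of coordinates, where $\epsilon'$ is chosen to match the list-decoding threshold $\tau$ of $C$. Invoking list-decodability of $C$ at threshold $\tau$ forces $K \le L/\Omega(\epsilon)$, and arranging $L = \Theta(\epsilon\, k_{\max})$ contradicts $K \le k_{\max}$. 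Hence the graph is a $(\le k_{\max}, (1-\epsilon)d)$-expander.

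The last step is parameter selection. Parvaresh--Vardy with multiplicity $s$ yields list-decoding radius $1 - O((k/D)^{s/(s+1)})$ with list size of order $q^s$. Setting $s = \Theta(1/\alpha)$ so that $(s+1)/s = 1 + \alpha$, choosing $q$ large enough that $q^s \lesssim k_{\max}^{1+\alpha}$ (and $q^k \ge n$), and picking $D = d$ of order $((\log n)(\log k_{\max})/\epsilon)^{1+1/\alpha}$ so that the agreement threshold of $C$ aligns with the expansion threshold $1-\epsilon$, yields exactly $d = O((\log n)(\log k_{\max})/\epsilon)^{1+1/\alpha}$ and $m = d \cdot q^s \le d^2 k_{\max}^{1+\alpha}$. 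Rounding $d$ up to the next power of two costs only constant factors and is absorbed into the big-$O$. Explicitness follows from explicit constructions of irreducible polynomials over $\mathbb{F}_q$ and the linear-time encoding of PV codes.

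The main obstacle I expect is the parameter tuning: balancing the degree $d$ (which must simultaneously absorb $\log n$ from the message length, $\log k_{\max}$ from the list-size / agreement bookkeeping, and the $1/\epsilon$ factor from the expansion slack), the field size $q$, and the multiplicity $s$ so that the list size $q^s$ lands cleanly inside the factor $k_{\max}^{1+\alpha}$ while preserving explicitness. I would handle this by quoting the Guruswami--Rudra-style list-decoding theorem for PV codes as a black box and substituting the parameters, rather than re-deriving the algebraic list-decoding bound itself, which is the technical heart of the GUV paper and would roughly double the length of the argument.
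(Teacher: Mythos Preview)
The paper does not prove this theorem; it is stated in the preliminaries with a citation to \cite{GuruswamiUV09} and used as a black box. So there is nothing in the paper to compare your argument against.

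Your sketch captures the high-level GUV strategy (left vertices as PV messages, right vertices as (position, symbol) pairs, expansion via a coding-theoretic list bound), but the specific expansion argument you outline is not quite the one in GUV and, as written, has a gap. The step ``a standard pigeonhole/Markov step then produces a single plurality pattern $r$ such that an $\Omega(\epsilon)$-fraction of the $f\in S$ agree with $r$ on an $\epsilon'$-fraction of coordinates'' does not follow: taking the plurality symbol at each coordinate need not produce a word close to many codewords simultaneously. The actual GUV argument goes through \emph{list-recoverability} rather than list-decodability against a single received word: if $|\Gamma(S)|<(1-\epsilon)d|S|$ then, setting $T_i=\{C(f)_i:f\in S\}$, one has $\sum_i|T_i|<(1-\epsilon)d|S|$, and every $f\in S$ trivially satisfies $C(f)_i\in T_i$ for all $i$; the PV code's list-recovery bound (with input list sizes $|T_i|$) then caps $|S|$. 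Alternatively one argues that some $f\in S$ has $\ge\epsilon d$ ``colliding'' neighbors, each shared with another $f'\in S$, and invokes a list bound for agreement with a \emph{set} of codewords. Either way, the reduction is to list-recovery, not to list-decoding from a single plurality word. If you replace your middle paragraph with the list-recovery formulation and then quote the PV list-recovery theorem as a black box, the parameter calculation in your final paragraph goes through essentially as you wrote it.
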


The explicitness here means, given a left node, and an edge, the induced right node computed found in time $O(\log n + \log d)$.

\begin{theorem}[Classic belief propagation for decoding \cite{sipser1994expander}]
\label{BPthm}
Let $\Gamma: [n] \times [d] \rightarrow [m]$ be a $(\leq k, 3/4d)$ bipartite graph  with left degree $d_l$, right degree $d_r$.
 Let $y$ be an $n$-bit string whose
distance from a codeword $x$ is at most $k/2$. Then a repeated application of the
 following decoding algorithm to $y$ will return $x$ in time $O(d_l d_r m)$.

Decoding algorithm: Upon
receiving the input $n$-bit string $y$, as long as there exists a variable such that most
of its neighbouring constraints are not satisfied, flip it.
\end{theorem}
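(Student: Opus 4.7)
The plan is to run the standard Sipser--Spielman analysis of the parity-check bit-flipping decoder. Let $R = \{i \in [n] : y_i \ne x_i\}$ be the current disagreement set, and let $\mathsf{s}, \mathsf{u}$ be the numbers of satisfied and unsatisfied checks in $\Gamma(R)$; since every check outside $\Gamma(R)$ has zero $R$-neighbors and is automatically satisfied, $\mathsf{u}$ also equals the global number of unsatisfied checks. Initially $|R| \le k/2$, and so $\mathsf{u} \le |\Gamma(R)| \le d|R| \le dk/2$.

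The first key step is a double-counting argument that supplies a flippable variable whenever $R \ne \emptyset$ and $|R| \le k$. Because each parity check is an $\mathbb{F}_2$ sum, a check in $\Gamma(R)$ is satisfied iff it has an even, nonzero number of neighbors in $R$; thus every satisfied check contributes at least $2$ to the edge count out of $R$, and every unsatisfied check contributes at least $1$, yielding $2\mathsf{s} + \mathsf{u} \le d|R|$. The $(\le k, (3/4)d)$-expansion hypothesis gives $\mathsf{s} + \mathsf{u} = |\Gamma(R)| > (3/4)d|R|$, and subtracting produces $\mathsf{u} > (1/2)d|R|$. Since $\sum_{v \in R} |\{c \in N(v) : c \text{ unsatisfied}\}| \ge \mathsf{u}$, averaging over $R$ yields some $v \in R$ with more than $d/2$ unsatisfied neighbors, i.e.\ majority-unsatisfied, so the algorithm is permitted to flip it.

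The second step is to show that $|R|$ never reaches $k$, keeping the expansion hypothesis in force throughout. Flipping any majority-unsatisfied variable changes the global $\mathsf{u}$ by $(\text{satisfied neighbors}) - (\text{unsatisfied neighbors}) < 0$, so $\mathsf{u}$ is a strictly decreasing nonnegative integer bounded above by its initial value $dk/2$. If $|R|$ reached $k$ at some step, the inequality above would force $\mathsf{u} > dk/2$, a contradiction; since $|R|$ changes by $\pm 1$ per flip, this rules out $|R| \ge k$ at all times. Hence the algorithm always has a move while $R \ne \emptyset$, and strict descent of $\mathsf{u}$ forces termination at $R = \emptyset$, i.e.\ $y = x$.

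Finally, for the running time I would maintain for each variable a counter of its unsatisfied-neighbor count together with a queue of variables whose count exceeds $d_l/2$. Each flip toggles the status of the $d_l$ checks incident to the flipped variable, and each toggled check triggers updates to the $d_r$ counters of its adjacent variables and to the candidate queue, for total cost $O(d_l d_r)$ per flip. Since the number of flips is at most the initial $\mathsf{u} = O(m)$, the total running time is $O(d_l d_r m)$. The only mildly subtle point is the inequality $2\mathsf{s} + \mathsf{u} \le d|R|$, whose proof depends crucially on the $\mathbb{F}_2$ structure of parity checks, forcing every satisfied check incident to $R$ to have at least two $R$-neighbors; without this observation the double-counting, and hence the whole argument, collapses.
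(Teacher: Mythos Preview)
Your proposal is correct and follows exactly the standard Sipser--Spielman analysis; the paper does not give its own proof of this statement (it is cited as a preliminary result from \cite{sipser1994expander}), but the sketch in the paper's Technique Overview (\S1.3) uses precisely your double-counting bound $2\mathsf{s}+\mathsf{u}\le d|R|$, the expansion lower bound on $\mathsf{s}+\mathsf{u}$, and the strict descent of $\mathsf{u}$ to bound $|R|$ throughout.
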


\begin{theorem}[\cite{hoholdt1998algebraic} \cite{garcia1996asymptotic} \cite{shum2001low} Systematic Algebraic Geometry Code]
\label{agcode}

There exists an explicit construction of algebraic geometry linear $  (n, m, d)_q$-code  with $ d+m \geq n- \frac{n}{\sqrt{q} - 1.1}, q = \lceil \frac{n}{d} \rceil^2 $, polynomial-time decoding when the number of errors is less than half of the distance. Here $n, q$ should be at least some fixed constants.

Moreover for every message $x\in \mathbb{F}_q^{m}$, the codeword is $  x\circ z$ for some redundancy $z \in \mathbb{F}_q^{n-m}$. In other words, the code is systematic.

%
\end{theorem}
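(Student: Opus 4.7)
The plan is to instantiate the algebraic geometry code construction on an explicit tower of function fields whose rational points are plentiful relative to the genus, and then to massage the resulting code into systematic form. First I would invoke the Garcia--Stichtenoth tower of function fields $F_1 \subseteq F_2 \subseteq \cdots$ over $\mathbb{F}_q$ with $q$ a square (which is why the theorem fixes $q = \lceil n/d \rceil^2$), for which the ratio of rational places to genus approaches $\sqrt{q} - 1$, matching the Drinfeld--Vladut bound. For the desired $n$, I would pick a level of the tower whose number of rational places is at least $n + 1$, pick $n$ distinct rational places $P_1,\dots,P_n$, and pick one further rational place $Q$ disjoint from them.

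Next I would form the classical one-point AG code $C(D, \alpha Q)$ where $D = P_1 + \cdots + P_n$ and $\alpha$ is chosen so that the designed distance $n - \alpha$ equals the target $d$. Riemann--Roch then gives $\dim C \geq \alpha - g + 1 = (n - d) - g + 1$, while the distance is at least $n - \alpha = d$. Plugging in the bound $g \leq n/(\sqrt{q} - 1.1)$ available from the tower (the slack $0.1$ absorbs finite-level corrections to the Drinfeld--Vladut limit), I would obtain $m + d \geq n - n/(\sqrt{q} - 1.1)$, which is exactly the claimed rate--distance trade-off. All of these algebraic objects---a basis for the Riemann--Roch space $\mathcal{L}(\alpha Q)$, the evaluation map $f \mapsto (f(P_1),\dots,f(P_n))$, and hence a generator matrix $G \in \mathbb{F}_q^{m \times n}$---are computable in time polynomial in $n$ using the explicit descriptions of the tower and standard Riemann--Roch algorithms.

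To obtain the \emph{systematic} property, I would identify $m$ coordinates on which $G$ has a nonsingular $m \times m$ submatrix (possible because $G$ has full row rank; just do Gaussian elimination and permute) and left-multiply by its inverse to put $G$ into the form $[I_m \mid A]$. This transforms the encoding map into $x \mapsto x \circ (xA)$ as required; the coordinate permutation does not affect distance or decodability. For \emph{polynomial-time decoding up to half the distance}, I would invoke the Skorobogatov--Vladu\c{t} (or equivalently, the basic Guruswami--Sudan) decoder for AG codes, which corrects up to $\lfloor (d^* - 1)/2 \rfloor$ errors where $d^*$ is the designed distance, in time polynomial in $n$ using the same explicit representation of the function field. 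Composing this decoder with the inverse of the systematic change of basis recovers $x$ from any received word within distance $(d-1)/2$ of the codeword.

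The main obstacle, and the reason one cites these works rather than re-deriving the result, is the algorithmic handling of the tower: one needs an effective description of each level $F_i$ (generators, relations, and an explicit enumeration of rational places) together with a Riemann--Roch algorithm that produces a basis of $\mathcal{L}(\alpha Q)$ in time polynomial in $n$, and with constants sharp enough to swallow the $1.1$ rather than $1$ in the denominator. The work of Shum et al.\ is exactly what supplies this explicit, low-complexity pipeline for the Garcia--Stichtenoth tower, so the plan reduces to plugging their package into the standard AG-code template described above.
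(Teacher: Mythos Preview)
The paper does not give its own proof of this statement: it is quoted as a black-box result from the cited references (H{\o}holdt et al., Garcia--Stichtenoth, Shum et al.) in the preliminaries, with no accompanying argument. Your outline is a correct and standard sketch of how those references combine---Garcia--Stichtenoth tower for the rate/distance bound, Shum et al.\ for the explicit Riemann--Roch computations, Gaussian elimination for systematic form, and a standard AG decoder for polynomial-time unique decoding---so there is nothing in the paper to compare against beyond the citations themselves.
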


\subsection{Pseudorandomness}

A distribution $X$ over $\Sigma^{n}$ is $k$-wise independent if for any $k$ variables in $X$, their marginal distribution is uniform. 

\begin{theorem}
\label{kwiseg}
There exists an explicit construction of   $\kappa$-wise independence generator $g: \{0,1\}^{s} \rightarrow \{0,1\}^n$, where $s = O(  \kappa \log \frac{n}{\kappa}  )$.

\end{theorem}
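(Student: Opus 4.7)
The plan is to invoke the standard polynomial-evaluation construction of $\kappa$-wise independent distributions. I would explicitly build a finite field $\mathbb{F}_q$ with $q = 2^r$ a power of two and $r = O(\log(n/\kappa))$ chosen so that $q \cdot r \geq n$; constructing $\mathbb{F}_q$ takes $\poly(r)$ deterministic time via any standard irreducible polynomial search over $\mathbb{F}_2$. Parse the seed $s \in \bit^{\kappa r}$ as $\kappa$ field coefficients $a_0, \ldots, a_{\kappa-1}$ defining $p(X) = \sum_{i=0}^{\kappa-1} a_i X^i$. Fix $N = \lceil n/r \rceil \leq q$ distinct evaluation points $\alpha_1, \ldots, \alpha_N \in \mathbb{F}_q$, and output the first $n$ bits of the string $p(\alpha_1) \circ \cdots \circ p(\alpha_N)$, each $p(\alpha_j)$ written as an $r$-bit string. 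Seed length is $\kappa r = O(\kappa \log(n/\kappa))$, and evaluation runs in near-linear time using standard finite-field arithmetic (or fast multipoint evaluation for better polylogarithmic factors).

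The key step is proving $\kappa$-wise independence. Any $\kappa$ output bit positions touch at most $\kappa$ of the field evaluations $p(\alpha_{j_1}), \ldots, p(\alpha_{j_\ell})$ with $\ell \leq \kappa$, so it suffices to show that for any $\kappa$ distinct points $\beta_1, \ldots, \beta_\kappa \in \mathbb{F}_q$, the tuple $(p(\beta_1), \ldots, p(\beta_\kappa))$ is uniform on $\mathbb{F}_q^\kappa$. This is immediate from polynomial interpolation: the coefficient-to-evaluation map is an invertible $\mathbb{F}_q$-linear bijection (the Vandermonde matrix on distinct points is nonsingular), so it pushes the uniform measure on coefficients to the uniform measure on evaluations. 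Any bit-level projection of a uniform distribution is again uniform, which gives the claim.

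The main obstacle is honestly matching $s = O(\kappa \log(n/\kappa))$ across all regimes of $\kappa$. The packing constraint $qr \geq n$ with $q = 2^r$ forces $r \geq \log(n/r)$, which in the worst case is $\Theta(\log n)$ rather than $\Theta(\log(n/\kappa))$. When $\kappa \leq n^{1-\Omega(1)}$ this is already absorbed, since $\log n = O(\log(n/\kappa))$; when $\kappa$ is close to $n$ so that $\log(n/\kappa)$ is small, one instead simply outputs $n$ truly random bits and notes that $n = O(\kappa \log(n/\kappa))$ under the standard convention $\log(n/\kappa) \geq 1$. A routine case split on these two regimes closes the seed-length analysis and completes the explicit construction.
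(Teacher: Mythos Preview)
Your polynomial-evaluation (Reed--Solomon) construction and its $\kappa$-wise independence argument are correct, but the seed-length analysis has a real gap that the case split does not close. The packing constraint $r\cdot 2^r \geq n$ forces $r = \Theta(\log n)$, so your seed is $\Theta(\kappa \log n)$. This is $O(\kappa \log(n/\kappa))$ only when $\log n = O(\log(n/\kappa))$, i.e.\ when $\kappa \leq n^{1-c}$ for some fixed constant $c>0$. Your fallback of using $n$ truly random bits requires $n = O(\kappa \log(n/\kappa))$, i.e.\ $(n/\kappa)/\log(n/\kappa) = O(1)$, which forces $\kappa = \Omega(n)$. Neither branch covers the intermediate regime: for example $\kappa = n/\log n$ has target seed length $O(\kappa \log\log n)$, whereas the polynomial construction gives $\Theta(\kappa \log n)$ and the trivial construction gives $n = \omega(\kappa \log\log n)$. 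The obstruction is intrinsic to Reed--Solomon: an RS code over $\mathbb{F}_q$ has length at most $q$, so you cannot simultaneously keep the alphabet as small as $\poly(n/\kappa)$ and the block length as large as $\Theta(n/\log q)$ when $n/\kappa$ is subpolynomial in $n$.

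The paper's proof avoids this by using algebraic geometry codes rather than Reed--Solomon. Specifically, it takes $C^{\perp}$ to be an explicit AG $[n,m,d]_q$ code with $d = \kappa+1$, $q = \lceil n/\kappa\rceil^2$, and (crucially) redundancy $n - m = O(\kappa)$; then the dual $C$ is the generator. The point is that AG codes near the Tsfasman--Vl\u{a}du\c{t}--Zink bound give block length $n$ over alphabet size $q = \poly(n/\kappa)$ with only $O(\kappa)$ redundancy, something RS cannot do once $n \gg q$. The $\kappa$-wise independence argument is then the same linear-algebra step you give (any $\kappa$ columns of the generator matrix are independent because $C^{\perp}$ has distance $>\kappa$), and the seed length is $(n-m)\log q = O(\kappa)\cdot O(\log(n/\kappa))$ uniformly across all $\kappa$. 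If you want to salvage the elementary approach, you would need to replace RS by a code family with these parameters; the Vandermonde construction alone does not suffice.
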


\begin{proof}

Let $C^{\bot }$ be an algebraic geometry linear  $(n, m, d)_q$-code constructed by Theorem \ref{agcode}, with $d = \kappa+1$, $m \geq n-O(\kappa)$, $q = \poly(n/d) = \poly(n/\kappa)$.

Consider the dual code  $C = (C^{\bot } )^{\bot}$. By duality of codes, its message length is $ n-m = O(\kappa) $. Let the generator be $g(\cdot) = C(\cdot)$, i.e. the encoding function of $C$. Note that the seed length in bits is $ s = (n-m)\log q =   O(  \kappa \log \frac{n}{\kappa}  )$.

We claim that any $\kappa$ columns of the generating matrix $M \in \mathbb{F}_q^{m \times n}$ of $C$, are linearly independent. Since otherwise there will be a codeword in $C^{\bot }$, which has hamming distance $\leq \kappa = d-1$ from the codeword   $0$-vector.

Next we show $g(u) = uM$   is $\kappa$-wise independent, when $u$ is uniform.
For any $\kappa$ symbols in the output,  the  corresponding $\kappa$ columns of $M$ are linearly independent. So the matrix $M_K, K = \{\mbox{indices of these } \kappa \mbox{ columns}\}$, formed by these columns has rank $\kappa$. Thus there are $\kappa$ rows which are linearly independent. Hence each linear combination of these $\kappa$ rows in $M_K$ can uniquely represent one vector in the space of $\kappa$ symbols. So $(u M)_K$ is uniform.

To see this  is  an explicit construction, note that the encoding of $C^{\bot}$ is explicit. So the encoding of each $e_i \in \mathbb{F}^m_q, i\in [m]$, where $e_i$ is  $i$-th unit vector, is explicit. Thus the encoding matrix $M^{\bot}$, whose $i$-th row is $C^{\bot}(e_i)$, can be computed explicitly. The corresponding parity check matrix, which is actually $M$ the encoding matrix of its dual code $C$, can be computed explicitly using $M^{\bot}$ by standard procedures. So the construction is explicit.

\end{proof}

Random variables $X_1, X_2, \ldots, X_n \in \{0,1\}^{n}$ are $\eps$-almost $\kappa$-wise independent in max norm if 
$$\forall i_1, i_2, \ldots, i_{\kappa} \in [n], \forall x \in \{0,1\}^{\kappa}, |\Pr[ X_{i_1} \circ X_{i_2} \circ \cdots \circ X_{i_{\kappa}} = x] - 2^{-\kappa} | \leq \eps.$$

A function $g: \{0,1\}^{d} \rightarrow \{0,1\}^{n}$ is an $\eps$-almost $\kappa$-wise independence generator in max norm if $g(U) = X = X_{1} \circ \cdots X_{n}$ are $\eps$-almost $\kappa$-wise independent in max norm. Unless stated otherwise, we only consider max norm in the following context.

\begin{theorem}[$\eps$-almost $\kappa$-wise independence generator \cite{alon1992simple}]
\label{almostkwiseg}
There exists an explicit construction s.t. for every $n, \kappa \in \mathbb{N}$, $\eps > 0$, it computes an $\eps$-almost $\kappa$-wise independence generator $g: \{0,1\}^{d} \rightarrow \{0,1\}^n$, where $d = O(\log \frac{\kappa \log n }{\eps})$.

The construction is highly explicit in the sense that, $\forall i\in [n]$, the $i$-th output bit can be computed in time $\tilde{O}(  \log n + \log \frac{1}{\eps})$ given the seed and $i$. (The $\tilde{O}$ here hides some $\log \log n$, $\log \log (1/\eps)$ factors)
\end{theorem}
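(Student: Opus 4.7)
The plan is to reduce the construction of an $\eps$-almost $\kappa$-wise independence generator to that of an $\eps$-biased generator via a Fourier-analytic argument, and then instantiate the latter with a known highly explicit construction from \cite{alon1992simple}.

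First I would establish the following reduction: any $\eps$-biased distribution $X$ over $\{0,1\}^n$ is $\eps$-almost $\kappa$-wise independent in max norm, for every $\kappa \leq n$. Fix $\kappa$ indices $S = \{i_1,\dots,i_\kappa\}$ and a target $x \in \{0,1\}^\kappa$. Expanding the indicator of $X_S = x$ via the inverse Fourier transform,
\[
\Pr[X_S = x] \;=\; 2^{-\kappa} \sum_{T \subseteq S} (-1)^{\langle x,\,\mathbf{1}_T\rangle}\, \mathbb{E}\!\left[(-1)^{\langle X,\,\mathbf{1}_T\rangle}\right].
\]
The $T = \emptyset$ summand contributes exactly $2^{-\kappa}$, while every non-empty $T \subseteq S$ contributes a term whose absolute value is at most $\eps \cdot 2^{-\kappa}$ by the defining property of $\eps$-bias. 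Summing the remaining $2^\kappa - 1$ terms yields $\bigl|\Pr[X_S = x] - 2^{-\kappa}\bigr| \leq (1-2^{-\kappa})\eps \leq \eps$, which is exactly the desired max-norm guarantee.

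Second, I would instantiate the reduction with an explicit $\eps$-biased generator. Here a naive use of an $\eps$-biased generator with seed $O(\log(n/\eps))$ would already give an $\eps$-almost $\kappa$-wise generator, but not with the target seed length $d = O(\log(\kappa \log n / \eps))$. To obtain the sharper bound I would use the AGHP construction tailored to almost $\kappa$-wise independence: because Step~1 only requires the bias to be small on characters supported on a $\kappa$-subset, the generator can be designed over a field $\mathbb{F}_{2^t}$ of size $\poly(\kappa \log n / \eps)$ rather than one of size $\poly(n/\eps)$, pulling the $n$-dependence from $\log n$ down to $\log \log n$. Concretely, with a seed of length $O(\log(\kappa \log n/\eps))$ one selects a pair of field elements used to index a character-sum (equivalently, a dual-BCH) scheme, and the $\eps$-bias property on low-weight characters follows from a Weil-type bound.

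Third, for the highly explicit claim, I would verify that given the seed and an index $i \in [n]$, the $i$-th output bit is computed by evaluating a low-degree polynomial in $i$ over $\mathbb{F}_{2^t}$ and applying a single character, where $t = \tilde{O}(\log n + \log(1/\eps))$. With Horner's rule and fast arithmetic in $\mathbb{F}_{2^t}$, this takes $\tilde{O}(t) = \tilde{O}(\log n + \log(1/\eps))$ time as required. The main obstacle in the whole argument is exactly Step~2: pushing the seed length below the $\log n$ barrier that generic small-bias constructions impose. The resolution is to exploit the $\kappa$-locality of the test set, so that the ``effective universe'' shrinks from $n$ to $\poly(\kappa \log n)$; once that is in hand, Steps~1 and~3 are routine.
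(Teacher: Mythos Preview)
The paper does not prove this theorem; it is stated in the preliminaries as a black-box result cited from \cite{alon1992simple}, with no accompanying argument. There is therefore no paper proof to compare against.

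That said, your sketch follows the standard route and the overall plan is sound. Step~1 is correct and yields exactly the max-norm bound. For Step~2, the cleanest realization of the seed length $O(\log(\kappa\log n/\eps))$ is the two-stage composition you are implicitly gesturing at: feed an $\eps$-biased generator on $\{0,1\}^{O(\kappa\log n)}$ into a \emph{linear} $\kappa$-wise independent map $\{0,1\}^{O(\kappa\log n)}\to\{0,1\}^n$ (e.g.\ dual-BCH); linearity transports every weight-$\le\kappa$ character on the output to a nonzero character on the short intermediate seed, so the $\eps$-bias bound carries through and your Step~1 finishes. Your phrasing blurs this composition with a single-stage character-sum scheme, which makes the ``effective universe shrinks to $\poly(\kappa\log n)$'' claim hard to parse as written. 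One point to watch in Step~3: the per-bit time $\tilde O(\log n+\log(1/\eps))$ does not fall out automatically from the two-stage view, since na\"ively one must materialize all $O(\kappa\log n)$ intermediate bits before applying the outer map; you would need to argue that the specific choice of inner and outer stages lets the composition collapse to a constant number of field operations per output coordinate.
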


\begin{theorem}[General moment inequality for $k$-wise independence]
\label{kwiseconcentrbound}
Let $X_i \in \{0,1\}, i=1,\ldots, n$, be a sequence of $k$-wise independent random variables. Let $X= \sum_{i=1}^n X_i$. 

For every $\varepsilon > 0$,
$$ \Pr[ X \geq  (1+ \varepsilon)  \mathbb{E}X] \leq \left(\frac{1}{1+\varepsilon} \right)^k.  $$

\end{theorem}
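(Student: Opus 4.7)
The plan is to apply Markov's inequality to the $k$-th falling factorial moment of $X$, which is the standard tool for tail bounds under $k$-wise independence since it decouples cleanly into products of individual expectations.

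First I would define the falling factorial $Y = X^{\underline{k}} = X(X-1)(X-2)\cdots(X-k+1)$ and expand it combinatorially. Since each $X_i \in \{0,1\}$ we have $X_i^2 = X_i$, so the standard identity gives
$$Y = \sum_{(i_1,\ldots,i_k)\text{ distinct}} X_{i_1} X_{i_2} \cdots X_{i_k},$$
where the sum is over ordered $k$-tuples of pairwise distinct indices from $[n]$. Taking expectations and using $k$-wise independence of the $X_i$'s,
$$\mathbb{E}[Y] = \sum_{(i_1,\ldots,i_k)\text{ distinct}} p_{i_1}\cdots p_{i_k} = k!\, e_k(p_1,\ldots,p_n),$$
where $p_i = \mathbb{E}[X_i]$ and $e_k$ is the $k$-th elementary symmetric polynomial. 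Comparing with the full expansion of $\bigl(\sum_i p_i\bigr)^k$, which ranges over \emph{all} ordered $k$-tuples (including those with repetitions) and whose summands are all non-negative, we get $k!\, e_k(p_1,\ldots,p_n) \leq \mu^k$ where $\mu = \mathbb{E}[X] = \sum_i p_i$. Hence $\mathbb{E}[Y] \leq \mu^k$.

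Next I would translate the desired tail event into a tail event for $Y$. On $\{X \geq (1+\varepsilon)\mu\}$ one has $Y \geq \bigl((1+\varepsilon)\mu\bigr)^{\underline{k}}$, and Markov's inequality yields
$$\Pr\bigl[X \geq (1+\varepsilon)\mu\bigr] \;\leq\; \frac{\mathbb{E}[Y]}{\bigl((1+\varepsilon)\mu\bigr)^{\underline{k}}} \;\leq\; \frac{\mu^k}{\bigl((1+\varepsilon)\mu\bigr)^{\underline{k}}}.$$
In the regime of interest (large $\mu$ relative to $k$), the falling factorial is essentially a true power, so the right-hand side collapses to $(1/(1+\varepsilon))^k$.

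The main technical wrinkle I expect is the gap between the falling factorial $\bigl((1+\varepsilon)\mu\bigr)^{\underline{k}}$ and the clean power $\bigl((1+\varepsilon)\mu\bigr)^k$, which is exactly where the bound could in principle lose a factor. The cleanest fix is to work throughout with binomial coefficients instead of falling factorials: bound $\mathbb{E}\bigl[\binom{X}{k}\bigr] = e_k(p_1,\ldots,p_n) \leq \mu^k/k!$ and use $\{X \geq (1+\varepsilon)\mu\} \subseteq \bigl\{\binom{X}{k} \geq \binom{\lceil(1+\varepsilon)\mu\rceil}{k}\bigr\}$, then invoke Markov. Everything else is elementary symmetric function manipulation and a single application of Markov's inequality, so no further obstacles are expected.
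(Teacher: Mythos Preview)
Your falling-factorial computation is correct up through
\[
\Pr\bigl[X \geq (1+\varepsilon)\mu\bigr]\;\le\;\frac{\mu^k}{\bigl((1+\varepsilon)\mu\bigr)^{\underline{k}}},
\]
but the step after that is where the argument breaks, and the ``fix'' you propose does not help. Rewriting with binomial coefficients gives
\[
\frac{\mu^k/k!}{\binom{\lceil (1+\varepsilon)\mu\rceil}{k}}
\;=\;
\frac{\mu^k}{\lceil (1+\varepsilon)\mu\rceil^{\underline{k}}},
\]
which is exactly the same ratio you already had (up to the harmless ceiling). The issue is that $\bigl((1+\varepsilon)\mu\bigr)^{\underline{k}} \le \bigl((1+\varepsilon)\mu\bigr)^{k}$, so your upper bound is \emph{at least} $(1+\varepsilon)^{-k}$, not at most. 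The discrepancy is not a cosmetic wrinkle: it is a genuine multiplicative factor of order $\prod_{j=0}^{k-1}\frac{(1+\varepsilon)\mu}{(1+\varepsilon)\mu-j}$, which blows up when $k$ is comparable to $(1+\varepsilon)\mu$.

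In fact the inequality as stated cannot be rescued without an extra hypothesis, because it is false. Take $n=4$ independent fair coins (these are $k$-wise independent for any $k\le 4$ in the paper's sense), so $\mu=2$, and set $k=3$, $\varepsilon=\tfrac12$. Then
\[
\Pr[X\ge 3]=\tfrac{5}{16}=0.3125
\qquad\text{but}\qquad
(1+\varepsilon)^{-k}=(2/3)^3=\tfrac{8}{27}\approx 0.296,
\]
so the claimed bound is violated. What the falling-factorial method \emph{does} deliver is the bound $\mu^k/\bigl((1+\varepsilon)\mu\bigr)^{\underline{k}}$, which equals $(1+\varepsilon)^{-k}$ up to a factor that is $O(1)$ once one assumes, say, $(1+\varepsilon)\mu \ge 2k$. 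Any clean statement of this type needs such a side condition (or an explicit correction factor); you should flag that the theorem as written is missing one rather than try to close the gap.
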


\subsection{LCS and Matching}

Consider two strings $x \in  \{0,1\}^{pn}, y\in  \{0, 1\}^{n'}$, hash functions $ h_j:\{0,1\}^{p} \rightarrow \{0,1\}^q, j\in [n] $. 
A monotone matching $w = ((\rho_1, \rho'_1), \ldots, (\rho_{|w|}, \rho'_{|w|}) )$ between $x, y$ under $h_j, j\in [n]$ is s.t. for every $ i\in [|w|] $, $ h_{\rho_i} \left( x[\rho_i, \rho_i + p)\right) = h_{\rho_i} \left( y[\rho'_i, \rho'_i + p) \right)  $, where  $\rho_i \in [pn],  \rho'_i\in [n'] $.
Also we consider $x$ as being cut into length $p$ blocks and each $\rho_i$ has to be a starting position of a block in $x$. 
 

\begin{lemma} 
\label{numOfPossibleMatchings}
For any $x \in \{0,1\}^{pn}, y\in \{0, 1\}^{n'}, k\in \mathbb{N}$, $S \subseteq [n], |S| = s$, $ h_j:\{0,1\}^{p} \rightarrow \{0,1\}^q, j\in [n] $,  the number of matchings $w  = ((\rho_1, \rho'_1), \ldots, (\rho_{|w|}, \rho'_{|w|}) )$ between $x_S$ and $y$ under $h_j, j\in [n]$ s.t.
$$    |\rho'_1 - \rho_1| + |(\rho'_2 - \rho'_1 ) - (\rho_2 -\rho_1) | + \cdots + | ( \rho'_{|w|} -\rho'_{|w|-1}) - (\rho_{|w|} -\rho_{|w|-1})| \leq  k ,    $$
is at most $  2^{2s +k(\log \frac{k+s-1}{k} +   \log e) } $.

\end{lemma}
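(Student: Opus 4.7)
The plan is to stratify by the length $m=|w|$ of the matching, use a shift-difference reparameterization to reduce (ii) below to a pure integer-counting problem, and finish with standard binomial estimates.

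For a fixed length $m\in\{0,1,\ldots,s\}$, a matching is determined by (i) a size-$m$ subset of $S$ specifying the block starts $\rho_1<\cdots<\rho_m$, which contributes $\binom{s}{m}$ choices, and (ii) the positions $\rho'_1<\cdots<\rho'_m$ in $y$. For (ii), I would introduce $\delta_1 := \rho'_1-\rho_1$ and $\delta_i := (\rho'_i-\rho'_{i-1})-(\rho_i-\rho_{i-1})$ for $i\geq 2$. The map $(\rho'_1,\ldots,\rho'_m)\mapsto(\delta_1,\ldots,\delta_m)$ is a bijection onto $\mathbb{Z}^m$ once the monotonicity requirement $\rho'_i>\rho'_{i-1}$ and the range constraint $\rho'_i\in[n']$ are dropped, which is permissible since dropping them only overcounts. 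Under this change of variables the hypothesis of the lemma reads exactly $\sum_{i=1}^{m}|\delta_i|\leq k$.

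Next, I would count integer tuples $(\delta_1,\ldots,\delta_m)\in\mathbb{Z}^m$ with $\sum_i|\delta_i|\leq k$. Splitting each $\delta_i$ into a sign $\epsilon_i\in\{\pm 1\}$ and a nonnegative magnitude $a_i=|\delta_i|$ (and absorbing the harmless overcount when $a_i=0$), the count is at most $2^m$ times the number of weak compositions $(a_1,\ldots,a_m)\in\mathbb{Z}_{\geq 0}^m$ with $\sum_i a_i\leq k$. Adding a slack variable $a_{m+1}\geq 0$ with $\sum_{i=1}^{m+1}a_i=k$, stars-and-bars gives $\binom{k+m}{m}$ such compositions. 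Combining,
\[
   \#\text{matchings}\ \leq\ \sum_{m=0}^{s}\binom{s}{m}\cdot 2^{m}\cdot \binom{k+m}{m}.
\]

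The remaining step is to massage this sum into the stated form. I would bound $\binom{k+m}{m}=\binom{k+m}{k}\leq\binom{k+s-1}{k}$ on the bulk of the sum (the single top term $m=s$ contributes an extra factor of $(k+s)/s$ that is easily absorbed into the slack between $\sum_m\binom{s}{m}2^m=3^s$ and $2^{2s}=4^s$), and then apply the standard estimate $\binom{k+s-1}{k}\leq\bigl(\tfrac{e(k+s-1)}{k}\bigr)^k=2^{\,k(\log\frac{k+s-1}{k}+\log e)}$. Multiplying produces the claimed bound $2^{2s+k(\log\frac{k+s-1}{k}+\log e)}$. No deeper obstacle arises: the argument is essentially a change of variables followed by routine stars-and-bars; the only mildly delicate point is the final binomial bookkeeping needed to match the stated form exactly.
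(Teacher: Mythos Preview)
Your approach is essentially identical to the paper's: stratify by the matching length, pick the $\rho$-side via $\binom{s}{m}$, reparameterize the $\rho'$-side by the increments $\delta_i$, and bound the $\ell_1$-ball in $\mathbb{Z}^m$ by $2^m$ signs times a stars-and-bars count. Your use of a slack variable to get $\binom{k+m}{m}$ is in fact more accurate than what the paper writes (it records $\binom{k+\tilde s-1}{\tilde s-1}$, which is literally the ``$=k$'' count rather than the ``$\le k$'' count).

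One small caveat on the last step: the absorption of the $m=s$ term into the gap between $3^s$ and $4^s$ does \emph{not} go through when $k$ is large compared to $s$. Concretely, your bound becomes $\binom{k+s-1}{k}\bigl[3^s+2^s k/s\bigr]$, and for $s=1$ the bracket is $3+2k$, which exceeds $4^s=4$ already at $k=1$. The paper sidesteps this only because its (slightly low) count $\binom{k+\tilde s-1}{k}$ is monotone in $\tilde s\le s$ with no problematic top term. A clean patch for your route is a two-case split: when $k<s$ (and $s\ge 2$) your absorption works since $3^s+2^s<4^s$; when $k\ge s$, instead bound $\binom{k+m}{m}\le 2^{k+m}$, giving $\sum_m\binom{s}{m}2^m2^{k+m}=2^k5^s\le e^k4^s\le 4^s(e(k+s-1)/k)^k$; the case $s=1$ is immediate by inspection. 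This is bookkeeping, not a conceptual gap, and you rightly flagged it as the delicate point.
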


Here $x_S$ refers to the sequence of blocks of $x$.  The $j$-th block of it is $x_S[j] \in \{0, 1\}^p, j\in [s]$.
We use $\pos(j)$ to refer to the starting position of   block $x_S[j]$ in $x$. 

\begin{proof}

Let's first consider the number of matchings with length $\tilde{s} \in \{0, 1,2,\ldots, s\}$. The number of possible $ \rho_1, \ldots, \rho_{\tilde{s}} $ is   ${s \choose \tilde{s}}$.

Assume   $|(\rho '_j - \rho '_{j-1} ) - (\rho _j -\rho _{j-1}) |  = k_j, j = 1,\ldots, \tilde{s}$,  $\rho '_0 - \rho _0  =  0 $.

For a sequence of fixed $ \rho_1, \ldots, \rho_{\tilde{s}} $,  the total number of possible matchings $w$ s.t.  $$    |\rho'_1 - \rho_1| + |(\rho'_2 - \rho'_1 ) - (\rho_2 -\rho_1) | + \cdots + | ( \rho'_{\tilde{s}} -\rho'_{\tilde{s}-1}) - (\rho_{\tilde{s}} -\rho_{\tilde{s}-1})| = \sum_{j=1}^{\tilde{s}} k_j \leq  k ,    $$
is at most 
$$2^{\tilde{s} } {k+\tilde{s}-1 \choose \tilde{s}-1}= 2^{\tilde{s} } {k+\tilde{s}-1 \choose k} \leq  2^{ \tilde{s }+k(\log \frac{k+\tilde{s}-1}{k} + \log e)} \leq  2^{ s+k(\log \frac{k+s-1}{k} + \log e)} ,$$
Since each sequence of $ \rho'_j, j\in [\tilde{s}] $ one-on-one corresponds to a sequence of $k_j \in \mathbb{N}, j\in [\tilde{s}]$ and the signs of $ (\rho '_j - \rho '_j ) - (\rho _j -\rho _j), j = 1,\ldots, \tilde{s} $.

So the overall number of possibilities is at most 
$$\sum_{\tilde{s}=0}^{s} {s \choose \tilde{s}}  2^{ s+k(\log \frac{k+s-1}{k} + \log e)}  \leq  2^{2s +k(\log \frac{k+s-1}{k} +   \log e) }.$$

\end{proof}

\begin{lemma}[DP for LCS within  $k$ edit operations]
 \label{dpformatch}

There is an algorithm, on input $x\in    \{0,1\}^{pn}, y\in \{0,1\}^{n' = O(np)}, S\subseteq [n], k =\ED(x, y) $, hash functions $h_i:\{0,1\}^p \rightarrow \{0,1\}^q, i\in [n]$,  outputs a   monotone matching $w = ((u_1, u'_1), \ldots, (u_{|w|}, u'_{|w|} ))$ between $x_S $ and $y$ under $h_i, i\in [n]$ s.t. $|w| \geq |S| - k$, and
\[  |u'_1 - u_1| + |(u'_2 - u'_1 ) - (u_2 -u_1) | + \cdots + | ( u'_{|w|} -u'_{|w|-1}) - (u_{|w|} -u_{|w|-1})| \leq  k. \]

\end{lemma}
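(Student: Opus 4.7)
The plan is to first establish existence of the desired matching by analyzing an optimal edit alignment between $x$ and $y$, then to extract it via a standard dynamic program.

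\emph{Existence of a matching with the stated properties.} Fix an optimal edit alignment transforming $x$ into $y$ using $k=\ED(x,y)$ insertion/deletion/substitution operations. Call block $x_S[j]$ \emph{clean} if no edit operation lies inside the interval $[\pos(j),\pos(j)+p)$ of $x$. Each substitution or deletion occupies one position of $x$ and each insertion sits between positions, so every operation touches at most one block; hence at most $k$ blocks of $S$ are unclean, leaving at least $|S|-k$ clean ones. For each clean block $j$ the alignment identifies $x_S[j]$ with an equal substring $y[\rho'_j,\rho'_j+p)$, so $h_{S[j]}(x_S[j])=h_{S[j]}(y[\rho'_j,\rho'_j+p))$ holds automatically. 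Indexing the clean blocks in increasing order of $j$ and setting $(u_i,u'_i)=(\pos(j),\rho'_j)$ produces a monotone matching of length at least $|S|-k$: the alignment is injective on preserved characters of $x$, so the matched substrings of $y$ are pairwise disjoint and occur in the same order.

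\emph{Bounding the shift cost.} In the alignment, the shift $u'_i-u_i$ equals the signed count (\# insertions)$-$(\# deletions) before position $u_i$ in $x$. Between two consecutive clean blocks $i-1$ and $i$, the increment $(u'_i-u'_{i-1})-(u_i-u_{i-1})$ equals the signed count of insertions and deletions in the intervening interval, whose absolute value is at most the number of such operations there. The same reasoning bounds $|u'_1-u_1|$ by the number of insertion/deletion operations before $u_1$. Since the alignment uses at most $k$ operations total and each is counted at most once, telescoping gives $|u'_1-u_1|+\sum_{i=2}^{|w|}\bigl|(u'_i-u'_{i-1})-(u_i-u_{i-1})\bigr|\leq k$.

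\emph{Algorithm.} To find such a matching in polynomial time, run a dynamic program on states $(j,\ell)\in\{0,\dots,|S|\}\times\{0,\dots,n'\}$, where $j$ indexes processed blocks of $x_S$ and $\ell$ indexes positions in $y$. A transition from $(j',\ell')$ to $(j,\ell)$ with $j'<j$ and $\ell'+p\leq\ell$ represents matching $x_S[j]$ at position $\ell$, permitted only when $h_{S[j]}(x_S[j])=h_{S[j]}(y[\ell,\ell+p))$; it grows the matching length by one and adds $\bigl|(\ell-\ell')-(\pos(j)-\pos(j'))\bigr|$ to the accumulated shift cost. Storing at each state the Pareto frontier of (length, cost) pairs and returning, among states with cost at most $k$, the matching of maximum length, yields a matching of length at least $|S|-k$ by the existence argument. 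I expect the main technical step of the proof to be the shift-cost accounting above; the DP itself is routine and runs in time polynomial in $n$.
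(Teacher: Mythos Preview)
Your proposal is correct and follows essentially the same approach as the paper: both establish existence of the desired matching via the optimal edit alignment (your clean-block argument spells out in more detail what the paper asserts in one line), and then extract it with a polynomial-time dynamic program. The only cosmetic difference is the DP formulation---the paper indexes states by $(j,j',l)$ with an explicit cost-budget coordinate $l\le k$, whereas you keep states $(j,\ell)$ and store a Pareto frontier of (length, cost) pairs; the two are equivalent and both polynomial.
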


\begin{proof}

We present a dynamic programming  to compute the maximum matching.

For every $  j\in [|S|], j'\in [n']$, let $f(j, j', l)$ be the maximum matching $w = ((u_1, u'_1), \ldots, (u_{|w|}, u'_{|w|} ))$ between $x_S[1, j]$ and $ y [1, j'] $ under $h_i, i\in [n]$, s.t.
\begin{itemize}
\item  
$ g(w) = |u'_1 - u_1| + |(u'_2 - u'_1 ) - (u_2 -u_1) | + \cdots + | ( u'_{|w|} -u'_{|w|-1}) - (u_{|w|} -u_{|w|-1})|   \leq  l $;


\item The last match matches $x_S[j]$ to $ y[j', j'+p) $.
\end{itemize}
If there is no such matching, then $f(j, j', l) $ is $\mynull$   and $g(\emptyset) = -\infty$.

We compute $f(j, j', l)$ as follows.

To initialize,   we let $f(0, 0, 0) = \emptyset $.

For every $ j\in [ |S| ], j' \in [n'], l\leq k$,
\begin{enumerate}

\item If $ h_j(x[j]) \neq  h_j(\:y[j', j'+p)\: )$, then $f(j, j', l)  $ is $\mynull$ and $g(\mynull) = \infty$; 

\item Pick the maximum matching $w_1$ in  
$$ M =  \{f(j_1, j_1', l_1) \mid j_1 < j, j'_1 < j', l_1 \leq l, g( f(j_1, j_1', l_1) ) + |\pos(j)- \pos(j_1) - (j'-j'_1)| \leq l \}; $$

\item Let $f(j, j', l)  = w_1 \cup \{(\pos(j), j')\}.$

\end{enumerate}

Finally we use an exhaustive search to find the  maximum matching among $ f(j, j', k), j\in [n ], j'\in [n']$ and output.

Next we prove the correctness.

We first claim that,   there exists a matching $w^*$ of length $|S| -k$ between $ x_S $ and $y$ which  has $g(w^*) \leq k$. This is because we can match each $i\in S$ to exactly the same entry after the $k$ edit operations to get $w^*$. Here $g(w^*) \leq k$ is because otherwise the edit distance between $x$ and $y$ is larger than $k$.

Assuming the $ i $-th pair in $w^*$ matches $x_S[j_i]$ to $y[j'_i, j'_i +p)$.  Let $w^*_i$ be the first $i$ pairs of $w^*$. 

We use induction to show that $ |f(j_{|w^*|}, j'_{|w^*|}, g(w^*) )| \geq |w^*|$.  

For the base case, note that   $ |f(j_1, j'_1, g(w^*_1))| \geq 1  $ since at least we have a matching $f(0,0,0) \circ (\pos(j_1), j'_1)$. 

Suppose for  $ i \geq 1 $,   $ |f(j_i, j'_i, g(w^*_i))| \geq i $. For $i+1$, by our construction to compute $f(j_{i+1}, j'_{i+1}, g(w^*_{i+1}))$, we know 
$$ g(f(j_i, j'_i, g(w^*_i)) ) + |\pos(j_{i+1}) - \pos(j_{i}) - ( j'_{i+1} -j'_{i} )| \leq g(w^*_i) + | \pos(j_{i+1})  - \pos(j_{i}) - ( j'_{i+1} -j'_{i} )| = g(w^*_{i+1}).$$
So $ f(j_i, j'_i, g(w^*_i)) $ is in $M$.
Since in the second stage of the computing of $f(j_{i+1}, j'_{i+1}, g(w^*_{i+1}))$ we pick the maximum matching in $M$ and add one more match to it, we know $$|f(j_{i+1}, j'_{i+1}, g(w^*_{i+1}))| \geq |f(j_i, j'_i, g(w^*_i)|+1 \geq i+1 .$$

This shows the induction step.

As a result,   the output matching has length at least $ |w^*|  $.
\end{proof}

\section{Negative Result}
\label{sec:neg}
In this section, we show some lower bounds for the asymmetric document exchange and error correcting codes. Given the vectors $\snum=(s_1, \cdots, s_t)$ and  $\knum=(k_1, \cdots, k_t)$, we define 
\[\ent(\snum, \knum)=\log \left (\prod_{i=1}^{t} \left (\sum_{j=0}^{k_i}\binom{s_i}{j} \right ) \right )=\sum_{i=1}^{t} \log \left (\sum_{j=0}^{k_i}\binom{s_i}{j} \right ).\]

Similarly, for two integers $s$ and $k$ with $s \geq k$, we define

\[\ent(s, k)=\log \left (\sum_{j=0}^{k}\binom{s}{j} \right ) .\]

Note that in particular we have $\ent(\snum, \knum) \geq \sum_{i=1}^{t} k_i \log(s_i/k_i)$ and $\ent(s, k) \geq k \log (s/k)$.

We now have the following theorems.

\begin{theorem}\label{thm:AliceR}
In an $(\snum, \knum, t)$ asymmetric DE problem where Bob has the vector of subsets ${\cal S}=(S_1, \cdots, S_t)$, let $k=\sum_{i=1}^{t} k_i$ and suppose Alice learns Bob's string. Then any deterministic protocol has communication complexity at least $\ent(n, k)$, and any randomized protocol with success probability $\geq 1/2$ has communication complexity at least $\ent(n, k)-1$. This holds even if Alice knows $\snum$ and $\knum$.
\end{theorem}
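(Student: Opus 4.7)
The plan is to fix Alice's input at $x = 0^n$ and argue that, on Bob's side, there is already enough freedom in the $(\snum,\knum,t)$ structure to force Alice to distinguish every binary string of Hamming weight at most $k = \sum_{i=1}^{t} k_i$. Since there are $M := \sum_{j=0}^{k}\binom{n}{j}$ such strings and $\log M = \ent(n,k)$, this will give the desired bounds in both the deterministic and randomized cases.

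The first step will be a feasibility claim: for every $y \in \{0,1\}^n$ with Hamming weight $w(y)\le k$, I will construct a valid $\svec(y) = (S_1,\ldots,S_t)$ with $|S_i|=s_i$ that is consistent with $x=0^n$ and Bob $=y$. The construction is greedy---distribute the (at most $k$) $1$-positions of $y$ among the subsets, placing up to $k_i$ of them in $S_i$ (feasible since $\sum_i k_i = k \ge w(y)$), then pad each $S_i$ with $0$-positions of $y$ until it reaches size $s_i$ (feasible since $\sum_i s_i \le n$). With this in hand, every weight-$\le k$ string arises as a legitimate Bob input together with some canonical $\svec(y)$ when $x=0^n$.

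For the deterministic bound, I would observe that with $x=0^n$ and $\svec=\svec(y)$ fixed, the protocol's full transcript $T(y)$ is determined by $y$ alone, and Alice's reconstruction depends only on her view $(x,T)$. Correctness across all $M$ candidate $y$'s then forces $y\mapsto T(y)$ to be injective, so at least $M$ distinct transcripts are needed and the total communication is at least $\log M = \ent(n,k)$.

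For the randomized bound, I will apply a Yao-style averaging argument. Put the uniform distribution on the $M$ inputs $\{(0^n, y, \svec(y)) : w(y)\le k\}$; by the worst-case success guarantee, the expected success probability is at least $1/2$, so some fixing $r^*$ of the shared randomness yields a deterministic protocol that is correct on at least $M/2$ of these inputs. Applying the deterministic argument to this sub-family gives at least $M/2$ distinct transcripts, hence communication at least $\log(M/2) = \ent(n,k) - 1$. The only step that needs any care is the feasibility construction for $\svec(y)$; the rest is a standard counting and averaging argument, and I do not anticipate a serious obstacle.
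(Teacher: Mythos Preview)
Your proposal is correct and follows essentially the same argument as the paper: fix Alice's string, observe that every $y$ within Hamming distance $k$ can be made into a legal Bob input by choosing an appropriate $\svec(y)$, and then count transcripts (with an averaging step for the randomized case). The only difference is cosmetic---you make the greedy construction of $\svec(y)$ explicit and argue directly via injectivity of $y\mapsto T(y)$, whereas the paper phrases the same counting as a pigeonhole contradiction; your feasibility argument implicitly uses $k_i\le s_i$, which is harmless since any $k_i>s_i$ can be replaced by $s_i$ without loss.
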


\begin{proof}
Assume for the sake of contradiction that there is a deterministic protocol with communication complexity less than $\ent(n, k)$. Fix Alice's string $x$, and the number of strings $y$ within Hamming distance $k$ of $x$ is exactly $2^{\ent(n, k)}$. For each of these strings, one can define a vector of subsets ${\cal S}=(S_1, \cdots, S_t)$ consistent with $\snum=(s_1, \cdots, s_t)$ such that with each subset $S_i$ the Hamming distance is exactly $k_i$. Since the transcript of the protocol is a deterministic function of $(x, y, \cal S, \snum, \knum, t)$, at least two different $y$'s from Bob's side will produce the same transcript. Now since Alice's final output is a deterministic function of $x$ and the transcript, this means Alice will not be able to distinguish the two different $y$'s, contradicting that the protocol always succeeds. 

Similarly, assume for the sake of contradiction that there is a randomized protocol with communication complexity less than $\ent(n, k)-1$, that succeeds with probability $\geq 1/2$. Fix Alice's string $x$ and consider the $2^{\ent(n, k)}$ different strings $y$ as above. By an averaging argument there is a fixing of the random bits used, such that the protocol succeeds for at least $2^{\ent(n, k)-1}$ $y$'s. Since the protocol is now fixed the same argument gives a contradiction.
\end{proof}

We now consider the case where Bob tries to learn Alice's string, and we have the following theorem.

\begin{theorem}\label{thm:BobR}
In an $(\snum, \knum, t)$ asymmetric DE problem where Bob has the vector of subsets ${\cal S}=(S_1, \cdots, S_t)$, let $k=\sum_{i=1}^{t} k_i$ and suppose Bob learns Alice's string. Then any randomized protocol with success probability $\geq 1/2$ has communication complexity at least $\ent(\snum, \knum)-1$. Furthermore if $\forall i, s_i=|S_i| \geq 2 k_i$, then any one round deterministic protocol has communication complexity at least $\ent(n, k)$. This holds even if Alice knows $\snum$ and $\knum$.
\end{theorem}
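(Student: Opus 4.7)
The plan is to handle the two parts separately, both by pigeonhole on transcripts / messages, but with different counting.

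\emph{Randomized lower bound.} I would fix Bob's string to be $y = 0^n$ and fix any particular valid vector of disjoint subsets $\svec = (S_1, \ldots, S_t)$ with $|S_i| = s_i$. Once these are fixed, the number of strings $x$ consistent with the constraints is exactly $\prod_{i=1}^{t} \sum_{j=0}^{k_i}\binom{s_i}{j} = 2^{\ent(\snum, \knum)}$: inside each $S_i$ one can flip up to $k_i$ bits, while outside $\cup_i S_i$ the string $x$ must equal $y$. Suppose for contradiction that the randomized protocol has communication complexity $c < \ent(\snum, \knum) - 1$ and succeeds with probability at least $1/2$. Averaging over the shared randomness yields a fixing $r^\ast$ under which at least $2^{\ent(\snum, \knum) - 1}$ of these valid $x$'s are decoded correctly. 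After this fixing, the transcript is a deterministic function of $x$, and Bob's final output is a deterministic function of the transcript and his fixed input $(y, \svec)$; hence distinct successfully-decoded $x$'s must yield distinct transcripts. But the number of transcripts is at most $2^c < 2^{\ent(\snum, \knum) - 1}$, a contradiction.

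\emph{Deterministic one-round lower bound.} Since the protocol has only one round and Bob learns Alice's string, Alice sends a single message $f(x)$ that depends only on $x$ (and on the public parameters $\snum, \knum, t$, but not on $\svec$, which Alice does not see). Fix $y = 0^n$ and let $X = \{ x \in \{0,1\}^n : |x|_1 \leq k \}$, so $|X| = 2^{\ent(n, k)}$. I first argue that every $x \in X$ is a valid input for \emph{some} $\svec$: partition $\supp(x)$ into pieces of sizes at most $k_1, \ldots, k_t$ (feasible since $|\supp(x)| \leq k = \sum_i k_i$), then expand each piece into a set $S_i$ of size $s_i$ using pairwise disjoint elements of $[n] \setminus \supp(x)$ (feasible because $\sum_i s_i \leq n$ and each piece has size $\leq k_i \leq s_i$). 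Therefore, if $|f(X)| < 2^{\ent(n, k)}$, there exist distinct $x_1, x_2 \in X$ with $f(x_1) = f(x_2)$.

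The main step is then to construct a single $\svec$ that is simultaneously valid for $(x_1, y)$ and $(x_2, y)$, which will force Bob to err on at least one of them. Let $T = \supp(x_1) \cup \supp(x_2)$ and classify its elements as type A (only in $\supp(x_1)$), type B (only in $\supp(x_2)$), or type C (in both), with totals $a, b, c$ satisfying $a + c \leq k$ and $b + c \leq k$. I will first choose non-negative integers $(a_i, b_i, c_i)_{i=1}^t$ with $a_i + c_i \leq k_i$, $b_i + c_i \leq k_i$ and prescribed column sums $a, b, c$: pick $c_i \leq k_i$ summing to $c$ greedily, then $a_i \leq k_i - c_i$ summing to $a$, then $b_i \leq k_i - c_i$ summing to $b$; feasibility at each step follows from $\sum_i k_i = k$ and $a+c, b+c \leq k$. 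Using these counts I partition $T$ into $T_1, \ldots, T_t$ with the prescribed type composition, so $|T_i| = a_i + b_i + c_i \leq 2 k_i \leq s_i$ (here is where the hypothesis $s_i \geq 2 k_i$ is crucial). Finally I extend each $T_i$ to $S_i$ of size $s_i$ by adding distinct elements from $[n] \setminus T$; the total extra slots needed are $\sum_i (s_i - |T_i|) = \sum_i s_i - |T| \leq n - |T|$, matching what is available. By construction $|\supp(x_j) \cap S_i| \leq k_i$ for $j \in \{1,2\}$ and all $i$, so $\svec$ is valid for both inputs. Since Bob then receives the identical input $(f(x_1), y, \svec) = (f(x_2), y, \svec)$, he cannot output both $x_1$ and $x_2$ correctly, a contradiction.

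I expect the combinatorial construction of a common $\svec$ to be the only nontrivial step; the rest is routine counting. The hypothesis $s_i \geq 2 k_i$ enters precisely because each $S_i$ must accommodate up to $k_i$ flips for each of \emph{two} different strings simultaneously, which, in the worst case where the two supports are disjoint, requires $2k_i$ slots.
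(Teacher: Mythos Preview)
Your proposal is correct and follows essentially the same approach as the paper: fix $y$ and $\svec$, count valid $x$'s, average over randomness and pigeonhole for the randomized bound; for the one-round deterministic bound, pigeonhole over the message $f(x)$ on all weight-$\leq k$ strings and then exhibit a single $\svec$ compatible with both colliding inputs. The paper's proof simply asserts that such a common $\svec$ exists when $s_i \geq 2k_i$ without spelling out the construction, whereas you actually carry it out via the greedy allocation of $(a_i,b_i,c_i)$; this is the one place where you supply more than the paper does, and your argument there is clean and correct.
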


\begin{proof}
Assume for the sake of contradiction that there is a randomized protocol with communication complexity less than $\ent(\snum, \knum)-1$, that succeeds with probability $\geq 1/2$. Fix Bob's string $y$, and the number of strings $x$ within Hamming distance $k_i$ in each subset $S_i$ is exactly $2^{\ent(\snum, \knum)}$. By an averaging argument there is a fixing of the random bits used, such that the protocol succeeds for at least $2^{\ent(\snum, \knum)-1}$ $x$'s. Thus, again at least two different $x$'s will produce the same transcript, and Bob will not be able to distinguish. This gives a contradiction.

Similarly, assume for the sake of contradiction that there is a deterministic protocol with communication complexity less than $\ent(n, k)$. This means two different $x$'s will produce the same transcript in a one-round protocol, where the transcript is a deterministic function of $(x, \snum, \knum, t)$. For these two different $x$'s, as long as $\forall i, s_i=|S_i| \geq 2 k_i$, one can define a vector of subsets ${\cal S}=(S_1, \cdots, S_t)$ such that for each $x$, the Hamming distance between the corresponding substrings of $x$ and $y$ in $S_i$ is exactly $k_i$. Thus the inputs to Bob are the same for the two $x$'s. Since Bob's final output is a deterministic function of his inputs and the transcript, Bob will not be able to distinguish the two different $x$'s, a contradiction. 
\end{proof}

We also have the following theorem for asymmetric error correcting codes.

\begin{theorem}\label{thm:ECC}
In an $(\snum, \knum, t)$ asymmetric ECC problem where Bob has the vector of subsets ${\cal S}=(S_1, \cdots, S_t)$, let $k=\sum_{i=1}^{t} k_i$. If $\forall i, s_i=|S_i| \geq 2 k_i$, then any deterministic code must have distance at least $2k+1$. In particular, $m \leq n-\ent(n, k)$. Furthermore, any randomized code with success probability $\geq 1/2$ must have message length $m \leq n-\ent(\snum, \knum)+1$. 
\end{theorem}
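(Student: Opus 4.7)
The plan is to mirror the arguments of Theorems~\ref{thm:AliceR} and~\ref{thm:BobR}, adapted to the ECC setting where Alice has no knowledge of $\svec$ but must produce a codeword that Bob (who knows $\svec$) can decode.

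For the deterministic distance claim I will argue by contradiction: suppose some deterministic code has two distinct codewords $c_1, c_2$ with $d(c_1, c_2) \leq 2k$, and let $D \subseteq [n]$ be their set of disagreement positions, so $|D| \leq 2k$. Partition $D$ arbitrarily into two disjoint pieces $E_1, E_2$ of size at most $k$ each. The crux is to exhibit disjoint subsets $S_1, \ldots, S_t \subseteq [n]$ with $|S_i| = s_i$, such that $E_1 \cup E_2 \subseteq \bigcup_i S_i$ and $|E_1 \cap S_i| \leq k_i$, $|E_2 \cap S_i| \leq k_i$ for every $i$. This is achievable by a greedy allocation: process $i = 1, \ldots, t$ in order, placing up to $k_i$ remaining indices of $E_1$ and up to $k_i$ remaining indices of $E_2$ inside $S_i$, then padding with unused positions in $[n]$ until $|S_i| = s_i$. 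Since $|E_1|, |E_2| \leq k = \sum_i k_i$ both sets get fully placed; since $E_1, E_2$ are disjoint and $s_i \geq 2k_i$ there is always enough room in $S_i$; and $\sum_i s_i \leq n$ leaves enough of $[n]$ outside to pad. With this $\svec$ and $\knum$, $E_1$ is a valid error pattern applied to $c_1$ and $E_2$ is a valid error pattern applied to $c_2$, and over $\mathbb{F}_2$ both produce the same received string. Since Bob's deterministic decoder depends only on the received string and $\svec$, it must fail on at least one of $c_1, c_2$. The implication $m \leq n - \ent(n, k)$ then follows from the standard Hamming volume bound: disjoint balls of radius $k$ around the $2^m$ codewords must fit inside $\bit^n$, yielding $2^m \cdot \sum_{j=0}^{k} \binom{n}{j} \leq 2^n$.

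For the randomized bound the plan is a counting/averaging argument analogous to the randomized halves of Theorems~\ref{thm:AliceR} and~\ref{thm:BobR}. Fix any valid $\svec$; there are $2^m \cdot 2^{\ent(\snum, \knum)}$ ordered pairs $(x, e)$ of messages and $\svec$-consistent error patterns. If the code succeeds with probability $\geq 1/2$ on every $(x, e)$, then averaging first over $(x, e)$ and then swapping expectations with the shared randomness yields a fixing $r^*$ of the randomness under which the code (now fully deterministic) decodes correctly on at least $2^{m + \ent(\snum, \knum) - 1}$ of these pairs. Two distinct good pairs $(x_1, e_1) \neq (x_2, e_2)$ cannot collide on the received string $\Enc(x_i; r^*) + e_i$: if they did, the common decoder output would have to equal both $x_1$ and $x_2$, forcing $x_1 = x_2$ and then $e_1 = e_2$. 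Hence good pairs inject into $\bit^n$, giving $2^{m + \ent(\snum, \knum) - 1} \leq 2^n$, which rearranges to $m \leq n - \ent(\snum, \knum) + 1$.

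The only step with any real subtlety is the combinatorial construction of $\svec$ in the deterministic part: one must confirm that the per-subset budgets $k_i$ together with the sizes $s_i \geq 2 k_i$ really do accommodate two error patterns of total weight at most $k$ each, residing simultaneously in disjoint $S_i$'s that together fit in $[n]$. The greedy allocation above handles this cleanly; everything else reduces to the Hamming volume bound and the standard injection argument used for randomized sketches, both of exactly the same flavor as in the preceding theorems.
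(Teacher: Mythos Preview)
Your proof is correct. The deterministic half matches the paper's approach exactly: both pick two codewords at distance $\leq 2k$, split the disagreement set into two error patterns of weight $\leq k$ each, and then build a single $\svec$ compatible with both patterns; your greedy allocation just spells out the construction that the paper asserts in one line. For the randomized bound the paper fixes the shared randomness so that at least $2^{m-1}$ \emph{messages} decode correctly and then argues that two of the corresponding $\svec$-balls of size $2^{\ent(\snum,\knum)}$ must overlap, whereas you fix the randomness so that at least $2^{m+\ent(\snum,\knum)-1}$ \emph{pairs} $(x,e)$ decode correctly and observe that good pairs inject into $\bit^n$. These are the same pigeonhole in two guises; your phrasing has the minor advantage that the averaging step is completely unambiguous about which error pattern each ``successful'' object is required to survive.
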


\begin{proof}
Assume for the sake of contradiction that there is a deterministic code with distance at most $2k$. This means there are two different codewords $\Enc(x_1)$ and $\Enc(x_2)$ with Hamming distance at most $2k$. Thus, an adversary can come up with two error strings $z_1, z_2$ where each $z_j$ has exactly $k$ $1$'s, such that $\Enc(x_1) \oplus z_1=\Enc(x_2) \oplus z_2=y$. As long as $\forall i, s_i=|S_i| \geq 2 k_i$, one can define a vector of subsets ${\cal S}=(S_1, \cdots, S_t)$ such that for each $z_j$, the number of $1$'s in the subset $S_i$ is exactly $k_i$. Thus for $x_1$ and $x_2$, Bob receives the same string $y$ and his other inputs are also the same. This means that Bob will not be able to distinguish $x_i$ and $x_j$, a contradiction. 

Now assume for the sake of contradiction that there is a randomized code with success probability $\geq 1/2$ which has message length $m > n-\ent(\snum, \knum)+1$. By an averaging argument there exists a fixing of the random bits used in encoding and decoding, that succeeds for $2^{m-1} > 2^{n-\ent(\snum, \knum)}$ messages. Note that for any codeword, the number of all strings which have Hamming distance at most $k_i$ in the subset $S_i$ to the codeword is $2^{\ent(\snum, \knum)}$. This implies that there exists two different codewords $\Enc(x_1)$ and $\Enc(x_2)$ and a string $y$ such that for each $\Enc(x_j)$, $y$ has Hamming distance at most $k_i$ in the subset $S_i$ to the codeword $\Enc(x_j)$. An adversary can thus change $\Enc(x_1)$ and $\Enc(x_2)$ into the same string $y$, and both error patterns are consistent with $(\cal S, \snum, \knum)$. Thus Bob will not be able to distinguish $x_i$ and $x_j$, a contradiction.  
\end{proof}

\section{Document Exchange and Error Correcting Codes with Asymmetric Information in the General Setting}
\label{sec:GCAI}

We give a random protocol  for the general setting s.t. the communication complexity is close to optimal.

\subsection{Key components}

\begin{lemma}
\label{lem:SkRandom}
For every $S \subseteq [n]$, integer $k_0 \leq k \leq s =|S|$, the probability that
a random bipartite graph with $n$ left vertices, $m \geq 2 dk 2^{1/\delta}  $ right vertices,  left degree $d = O(\log \frac{2s}{k})  $, having  
\begin{equation}
\begin{split}
&\mbox{ for every } R \subseteq S, \mbox{ with } |R| \in [ k_0 , k]\\
&|\Gamma(R)| > (1-\delta) d|R|,
\end{split}
\end{equation}  
is at least $1- \eps$, where $\eps =   2^{-\Theta(\delta (\log \frac{2s}{k}) k_0 \log \frac{2k}{k_0})} $.

Note that when  $k_0 = 1$,  we get an $(n, m, d, S, \leq k, 1-\delta)$ expander with probability at least $ 1-   2^{-\Theta(\delta \log\frac{2s}{k} \log (2k)  )} \leq 1-1/\poly(s)$.
\end{lemma}

We also denote a bipartite graph with the expansion property stated as an $(n, m, d, S, [k_0, k], 1-\delta)$ expander.

\begin{proof}

The total number of sets $R$ with size $r$ is at most $ (\frac{es}{r})^r $.

For a fixed set $R$, a fixed set $T \subseteq [m], |T| =(1-\delta) d |R|$
\begin{equation} 
\Pr[ \Gamma(R) \subseteq T ] = \left(  \frac{|T|}{m}\right)^{dr} = \left( \frac{(1-\delta)dr}{m} \right)^{dr}.
\end{equation}

There are at most 
\begin{equation}
{m \choose |T|}  \leq \left( \frac{em}{|T|} \right)^{|T|} = \left( \frac{em}{(1-\delta)dr}  \right)^{(1-\delta) dr}
\end{equation}
such set $T$.

So by a union bound, 
the probability that for every $R, |R| = r$, $\Gamma(R) \leq (1-\delta) dr$ is at most 
\begin{equation}
\begin{split}
&  \left( \frac{em}{(1-\delta)dr}  \right)^{(1-\delta) dr} \times  \left( \frac{(1-\delta)dr}{m} \right)^{dr} \times     (\frac{es}{r})^r  \\
 = & e^{(1-\delta)dr} \left( \frac{(1-\delta)dr}{m} \right)^{\delta dr} \times (\frac{es}{r})^r  \\
\leq  &  e^{dr} e^{-  \delta dr \log \frac{m}{dr} }  (\frac{es}{r})^r  \\
\leq & 2^{-\Theta(\delta d r\log \frac{2k}{r})}.\\
\end{split}
\end{equation} 
by letting $m = 2dk 2^{1/\delta}$, $d = O(\log \frac{2s}{k})$.

By another union bound the probability that for every $R, |R| \in [k_0, k]$, it does not have a good expansion is at most $ \sum_{j = k_0}^k   2^{-\Theta(\delta d j \log \frac{2k}{j})}  \leq  (k-k_0+1) 2^{-\Theta(\delta d k_0 \log \frac{2k}{k_0})} \leq 2^{-\Theta(\delta d k_0 \log \frac{2k}{k_0})} $.

When $k_0 = 1$, this is at most $  2^{-\Theta(\delta \log \frac{2s}{k} \log (2k)  )} \leq 2^{-\Theta(\delta \log \frac{2s}{k} \log (2k)  )} \leq 1/\poly(s)$.

\end{proof}

\begin{lemma}
\label{lem:oneSetDec}
Assume $\Gamma$ is an $(n, m, d, S,  [k_1', 2k_1], 0.9)$ expander. Let $y$ be the expander-code encoding of $x$
using $\Gamma$. Then there is an explicit decoding which, on input $x'$ which has $k_i, i\in [t]$ errors in $S_i$ from $x$, with $k_1 \geq  k'_1 \geq c \sum_{i=2}^t k_i $, $c=10$, outputs $ \tilde{x} $ that  has at most $k_1'$ errors in $S_1$.

\end{lemma}

\begin{proof}

We propose the following algorithm. For every iteration, find the first bit in $S_1$  s.t.  it has more unsatisfied checks than  satisfied ones.
Loop until we cannot find such bit anymore.

Now we show this works.
Assume there are at least $k'_1$ errors in  $S_1$. Denote $A$ as the set of indices of these errors. 
Let $s$ be the number of satisfied neighbors of $A_1= A\cap S_1 $. 
Let $u$ be the number of unsatisfied neighbors of $A_1  $.
By the expander property, $|\Gamma(A)| \geq  0.9d |A_1|$.  
So
\begin{align}
\label{eq:onesetdeceq1}
s + u \geq 0.9 d |A_1|.
\end{align} 
On the other hand, each satisfied check is connected to at least one vertex in $A_1$ since it is in $\Gamma(A_1)$. Thus it has to be connected to at least $2$ vertices in $A$ to make it to be satisfied.
Also each unsatisfied check is connected to at least $1$ vertex in $A_1$.
Hence 
\begin{align}
\label{eq:onesetdeceq2}
2s + u \leq   d |A| \leq d |A_1| + d \sum_{i=2}^t k_i \leq  (1+\frac{1}{c})d |A_1|.
\end{align} 
By \cref{eq:onesetdeceq1} and \cref{eq:onesetdeceq2}, 
$$u \geq 0.8d|A_1|.$$
So there has to be $\geq 0.1$ fraction of vertices in $S_1$ having more unsatisfied checks than satisfied ones.
As a result, the algorithm can find a bit to flip and $u$ is decreasing.
On the other hand, if at some iteration, $|A_1| = 2k_1$, then $u \geq 1.6 dk_1$ but initially $u\leq d k_1 $ which contradicts that $u$ is decreasing.
As a result, the iterations will continue until there are less than $k'_1$ errors in $S_1$. 

\end{proof}

\begin{theorem}
\label{thm:onesetprotocol}

There is an efficient $1$-round protocol s.t. for  every $(s, k)$ DE problem,   it has  communication complexity $O(k \log \frac{2s}{k})$, success probability $1-      2^{-\Theta( \log \frac{2s}{k} \log k  )}$.

\end{theorem}

\begin{proof}

We first generate a random bipartite graph with $n $ left vertices, left degree $d = O(\log \frac{2s}{k})$,  $m = O(d k )$ right vertices. By \cref{lem:SkRandom}, with probability  $1-     2^{-\Theta( \log \frac{2s}{k} \log k  )}$, it is an $(n, m, d, S, \leq 2k, 0.9)$ expander $\Gamma$. We use $\Gamma$ to compute the   sketch $z$ of $x$. 

To decode, we use $y$, $z$ and $\Gamma$. By \cref{lem:oneSetDec}, we can get $x$ correctly.

The running time of both parties are $\tilde{O}(n)$.

\end{proof}

\subsection{ The protocol }

Without loss of generality, we assume $k_1 \geq k_2 \geq \cdots \geq k_t$.

\begin{theorem}
 
\label{thm:GCAI1}
There is a $1$-way efficient protocol s.t. for every $( \mathbf{s}, \mathbf{k}, t  )$ DE with $k_i \leq s_i/2, \forall i\in [t]$,    it has success probability  $  1-2^{- \Omega( k_t) } - 1/\poly(s)$, communication complexity $O(t^2 \sum_{i\in [t]} k_i \log \frac{s_i}{k_i})$.

\end{theorem}

\begin{construction}
Efficient protocol for  $( \mathbf{s}, \mathbf{k}, t  )$ DE .

Alice: on input $x$,

\begin{enumerate}[label*=\arabic*.]

\item Let $i' = 0, k' = 0$, string $z  $ be  empty string;

\begin{enumerate}[label*=\arabic*.]

\item While $i' \leq t-2$,  find $i > i'$ s.t. $k' + \sum_{j = i'+1}^{i} k_j > k''$, where $ k'' = c \sum_{j = i +1}^t k_j$; If cannot find $i$ then break the iterations;


\item
Generate an $(n, m, d, \cup_{j=1}^{i} S_j, [k'', 2(k' + \sum_{j = i'+1}^{i} k_j )], 0.9)$-expander $\Gamma$ by \cref{lem:SkRandom}, where $d  = O(\log \frac{ \sum_{j=1}^i s_j  }{k'+ \sum_{j=i'+1}^{i} k_j }  )$;


\item Compute $z_{i}$ which is the expander code of $x$ using $\Gamma$, $z = z \circ z_{i}$;

\item Let $ i' = i, k' = k'' $.

\end{enumerate} 

\item Encode $x$ to be $z_{\mathrm{final}}$ by using a $(n, m, d = O(\log \frac{s}{k_{\mathrm{final}}}), S, \leq 2 k_\mathrm{final}, 0.9   )$ expander $\Gamma_{\mathrm{final}}$ generated by  \cref{lem:SkRandom}, where $k_\mathrm{final} = k'+ \sum_{j = i'+1}^t k_j $;  


\item Send $z \circ z_{\mathrm{final}}$ to Bob.

\end{enumerate} 

Bob: on input $y$, $\mathbf{S}, \mathbf{k}$, together with the message $z \circ z_{\mathrm{final}}$ from Alice;

\begin{enumerate}[label*=\arabic*.]

\item Let $i' = 0, k' = 0, y' = y$;

\begin{enumerate}[label*=\arabic*.]

\item While $i' \neq t$,  find $i > i'$ s.t. $k' + \sum_{j = i'+1}^{i} k_j > k''$, where $ k'' = c \sum_{j = i+1}^t k_j, c= 10$;

\item Generate an $(n, m, d, \cup_{j=1}^{i} S_j, [k'', 2(k' + \sum_{j = i'+1}^{i} k_j )], 0.9)$-expander $\Gamma$ by \cref{lem:SkRandom} using the same randomness as of Alice;

\item Use $\Gamma$, $z_i$ to reduce the number of errors of $y$ in $\cup_{j=1}^{i} S_j$ to be at most $k''$ by \cref{lem:oneSetDec};

\item Let $ i' = i, k' = k'' $.

\end{enumerate}

\item Decode $x$ by  \cref{lem:oneSetDec} for the $(S, k'+k_t)$ setting, using $y'$, $z_{\mathrm{final}}$,  and the expander generated     the same as the $\Gamma_{\mathrm{final} } $ of Alice;
\end{enumerate}

\end{construction}

\begin{lemma}
\label{lem:GCAI1cc}
The communication complexity is $O\left( t^2 \sum_{j\in [t]}k_j \log \frac{s_j}{k_j}   \right)$.

\end{lemma}

\begin{proof}

By \cref{lem:SkRandom}, $m$ of $\Gamma$ is $ O\left( (k' + \sum_{j = i'+1}^{i} k_j) \log \frac{\sum_{j=1}^{i} s_j }{ k' + \sum_{j = i'+1}^{i} k_j  }    \right)  $.

Note that in the first iteration, the algorithm may pick a $i\in [t]$. But in the succeeding iterations, it will always take $i = i'+1$, since $k' + k_{i'+1} > k''$ and we always assume $k_{i' + 1 } >0$.

For the first iteration, we have 
\begin{align*}
m  
& = O\left( (\sum_{j = 1}^{i} k_j) \log \frac{\sum_{j=1}^{i} s_j }{   \sum_{j = 1}^{i} k_j  }   \right)\\
& \leq O\left(   (c \sum_{j = i}^{t} k_j + k_i) \log   \frac{\sum_{j=1}^{i} s_j }{   \sum_{j = 1}^{i} k_j  }        \right)      && \text{Since }  \sum_{j=1}^{i-1} k_j \leq c \sum_{j=i}^{t}k_j    \\
& \leq O\left(   (c \sum_{j = i}^{t} k_j + k_i) \log   \frac{ \sum_{j=1}^{i} s_j }{ \frac{1}{2} \sum_{j = 1}^{i} k_j   + \frac{1}{2}k_i     }        \right)      && \text{Decreasing the denominator }  \\
& \leq O\left(   (c \sum_{j = i}^{t} k_j + k_i) \log   \frac{ \sum_{j=1}^{i} s_j }{ \frac{1}{2}( c\sum_{j = i+1}^{t} k_j   +  k_i )    }        \right)      && \text{Since }  \sum_{j=1}^{i } k_j > c \sum_{j=i+1}^{t}k_j   \\
& \leq O\left(   (c \sum_{j = i}^{t} k_j + k_i) \log   \frac{ \sum_{j=1}^{i} s_j }{c\sum_{j = i+1}^{t} k_j   +  k_i   }         \right) && \text{Because of big-O notation} \\
& \leq O\left(   \overline{k} \log   \frac{ \sum_{j=1}^{i} s_j }{ \overline{k}  }        \right)  && \text{Let } \overline{k} =    (c \sum_{j = i}^{t} k_j + k_i)\\
& \leq  O\left(   \overline{k} \log \prod_{j=1}^i ( \frac{ s_j }{ \overline{k}  } +1)       \right)   && \log(\cdot) \text{ is an increasing function} \\
& = O\left(\sum_{j=1}^i \overline{k} \log (\frac{s_j}{\overline{k}}  +1)  \right).
\end{align*}

For each $j\in [i]$, if $ \overline{k} > k_j $, then since $\overline{k} \leq (c+1)tk_j$,
$$ \overline{k} \log (\frac{s_j}{\overline{k}}  +1) \leq (c+1)tk_j \log (\frac{s_j}{k_j}+1) \leq 2(c+1)t k_j \log \frac{s_j}{k_j};$$
Otherwise if $\overline{k} \leq k_j$, then since $k_j \leq \frac{1}{2}s_j$,
$$ \overline{k} \log (\frac{s_j}{\overline{k}}  +1) \leq  \overline{k}   \log  \frac{2s_j}{\overline{k}}    \leq O( k_j \log \frac{s_j}{k_j}).  $$

Hence $m  = O\left( t \sum_{j=1}^i k_j \log \frac{s_j}{k_j}    \right)$.

Next we consider the cases where we are in iterations from the $2$nd to the last.
We have
\begin{align*}
m  
& = O\left( (k'+ k_i) \log \frac{\sum_{j=1}^{i} s_j }{  k'+ k_i  }   \right)\\
& = O\left(  \overline{k} \log   \frac{\sum_{j=1}^{i} s_j }{  \overline{k} }        \right)       && \text{Let } \overline{k} =  k'+ k_i =  c \sum_{j=i}^{t}k_j + k_i   \\
& \leq  O\left(   \overline{k} \log \prod_{j=1}^i ( \frac{ s_j }{ \overline{k}  } +1)       \right)   && \log(\cdot) \text{ is an increasing function} \\
& = O\left(\sum_{j=1}^i \overline{k} \log (\frac{s_j}{\overline{k}}  +1)  \right).
\end{align*}

For each $j\in [i]$, if $ \overline{k} > k_j $, then again since $\overline{k} \leq (c+1)tk_j$,
$$ \overline{k} \log (\frac{s_j}{\overline{k}}  +1) \leq (c+1)tk_j \log (\frac{s_j}{k_j}+1) \leq 2(c+1)t k_j \log \frac{s_j}{k_j};$$
Otherwise if $\overline{k} \leq k_j$, then since $k_j \leq \frac{1}{2}s_j$,
$$ \overline{k} \log (\frac{s_j}{\overline{k}}  +1) \leq \overline{k}   \log  \frac{2s_j}{\overline{k}}  \leq  O( k_j \log \frac{s_j}{k_j}).  $$

Hence $m  = O\left( t \sum_{j=1}^i k_j \log \frac{s_j}{k_j}    \right)$.

As there are at most $t$ iterations, the total communication complexity is $tm = O\left(t^2 \sum_{j=1}^t k_j \log \frac{s_j}{k_j}    \right) $.

\end{proof}

Next we show the correctness.

\begin{lemma}
\label{lem:GCAI1correct}
Bob can compute $x$ correctly with probability at least $  1-2^{- \Omega( k_t) } - 1/\poly(s)$.

\end{lemma}

\begin{proof}

In the first iteration, since $\Gamma$ is an $(n, m, d, \cup_{j=1}^i S_j, [c\sum_{j=i+1}^t k_j, 2(\sum_{j=1}^i k_j)] )$ expander, by \cref{lem:oneSetDec}, we can successfully reduce the number of errors in $\cup_{j=1}^i S_j$ to be $\leq k''$.

Note that as long as $k_{i'+1} > 0$, the number $i$, found in the iteration, will be $i'+1$. So the iteration will continue until $i' = t-1$. After the iterations, the number of errors in $S$ is at most $ k'+k_t = (c+1)k_t$.

Finally,  using $z_{\mathrm{final}}$ and  $\Gamma_{\mathrm{final}}$,  by \cref{lem:oneSetDec}, Bob can compute $x$ correctly.

The protocol succeeds once all random expander graphs are as desired. For random expander graph in iteration $i$, the success probability is $1-   2^{-\Omega(d k'' \log \frac{2k'}{k''})  )} \leq 1- 2^{-\Omega( d k'' )}$, by \cref{lem:SkRandom}. So by a union bound, the  probability, that all  iterations success, is at least $1- 2^{-\Omega(k_t)}$.
In the final step, the success probability is $1- \frac{1}{\poly(s)}$ by \cref{thm:onesetprotocol}. Hence the final success probability is as desired.

\end{proof}

\begin{proof}[Proof of \cref{thm:GCAI1}]

The correctness and communication complexity immediately follows from \cref{lem:GCAI1cc}, \cref{lem:GCAI1correct}.

For the efficiency, note that in Alice's algorithm, she just randomly generate a bipartite graph with logarithmic degree. And apply the expander encoding to get the sketch. So this is in near linear time.
For Bob's algorithm, as $S_i , i\in [t]$ are disjoint, and the belief propagation can be done in near linear time. Other operations are also in near linear time. So Bob's algorithm is also in near linear time.

\end{proof}

When $t$ is large, we can group some sets together to reduce $t$ and hence get the following theorem.

\begin{theorem}
\label{thm:GCAIchi}
There is a $1$-way efficient protocol s.t.  for every $(  \mathbf{s}, \mathbf{k}, t)$ DE with $k_i \leq s_i/2, \forall i\in [t]$, it has success probability $1 - 2^{-\Omega(k_t) } - 1/\poly(s)$, communication complexity $O\left( \chi^2(\mathbf{s}, \mathbf{k}, t) \sum_{i\in [t]} k_i \log \frac{s_i}{k_i} \right)$. 

The running time of both parties are $\tilde{O}(n)$.
\end{theorem}

\begin{proof}

We cut the interval $[2, n+1)$ into $t' = O(\log \log n)$ intervals  s.t. the $j$-th interval $I_j$ is $[2^{10^{j-1}}, 2^{10^{j}} )$.  Then for all $i$ s.t. $s_i/k_i \in I_j$,  we union them to be a set $S'_j$. Also we take $k'_j$ to be the summation of the corresponding $k_i$'s. We neglect these intervals which do not cover any $s_j/k_j$, getting a new problem i.e. a $(\mathbf{s}'', \mathbf{k}'', \chi)$ error correction problem. 

By \cref{thm:GCAI1}, the communication complexity is $O\left( \chi^2 \sum_{j\in [\chi]} k''_j \log \frac{s''_j}{k''_j} \right)$. Since  $\forall j\in [ \chi ], i\in I_j$, $\log \frac{s_i}{k_i} = O(\log \frac{s''_j}{k''_j})  $, the communication complexity is actually $O\left(\chi^2 \sum_{i\in [t]} k_i \log \frac{s_i}{k_i}\right)  $.

The time complexity and success probability is implied by \cref{thm:GCAI1}.

\end{proof}

Notice that $\chi$ can only be as large as $O(\log \log n)$. So we have the following corollary.

\begin{corollary}
\label{thm:GCAI2}
There is a $1$-way efficient protocol s.t.  for every $(  \mathbf{s}, \mathbf{k}, t)$ DE with $k_i \leq s_i/2, \forall i\in [t]$, it has success probability $1 - 2^{-\Omega(k_t) } - 1/\poly(s)$, communication complexity $O\left( \log^2 \log n \sum_{i\in [t]} k_i \log \frac{s_i}{k_i} \right)$. 

The running time of both parties are $\tilde{O}(n)$.
\end{corollary}

\subsection{From DE to stochastic coding}

We show that our construction for DE can be modified to work for   stochastic coding setting.

\begin{theorem}

There is an efficient stochastic ECC   s.t. for every $(\mathbf{s}, \mathbf{k}, t)$ type errors with $k_i \leq s_i/2, \forall i\in [t]$, it has success probability $1-2^{-\Omega(k_t)} - 1/\poly(s)$, message length $n-O\left( \chi^2(\mathbf{s}, \mathbf{k}, t) \mathsf{H}(\mathbf{s}, \mathbf{k}) \right)$.

The running time of both encoding and decoding are  $\tilde{O}(n)$.
\end{theorem}

\begin{proof}

For encoding, we first compute the length of the redundancy. By the Alice's algorithm of \cref{thm:GCAIchi}, the sketch length for $(\mathbf{s}, \mathbf{k}, t)$ document exchange, on input strings of length $n$, is  $ \mathsf{s} =  O\left( \chi^2(\mathbf{s}, \mathbf{k}, t) \sum_{i\in [t]} k_i \log \frac{s_i}{k_i} \right) $. If we apply an an asymptotically good ECC $C_0$, e.g. expander codes \cite{sipser1994expander} \cite{spielman1996linear}, to encode the sketch, then the output has length $r = O(\mathsf{s})$.
Let the message length be $n-r$.

The encoding of message $x$ has two parts. The first part is the message.
The second part is the sketch for $(\mathbf{s}, \mathbf{k}, t)$ document exchange on input $ x \circ \mathbf{0} $, where $\mathbf{0}$ is an all $0$ string of length $r$. We know the sketch length is $\mathsf{s}$.
Next we apply $C_0$ on $s$ to get $z$ which has length $r$. 
The final codeword is $x\circ z$.

We claim this code can indeed resist $(\mathbf{s}, \mathbf{k}, t)$  type errors by describing the decoding along with its analysis.

For decoding, assume the input is $x'\circ z'$. 
Note that even if all errors happen on $z$,  we can decode to recover $z$ from $z'$, since $z$ is a codeword of an ECC correcting $k$ errors. 
After we get $z$. We apply Bob's algorithm of \cref{thm:GCAIchi} on $x'\circ \mathbf{0}$, using the sketch $z$. 
The decoding will success because the error type is still $(\mathbf{s}, \mathbf{k}, t)$, as we only remove some errors happened on $z$.

The success probability only comes from the success probability of \cref{thm:GCAIchi}, since that's the only part we use randomness. So the success probability is as desired.
The encoding and decoding are in near linear time since the protocol and the asymptotically good code \cite{spielman1996linear} are both in near linear time.

\end{proof}

\section{Document Exchange with Asymmetric Information in a Special Setting}
\label{sec:cwas}

We first develop a randomized two-party (Alice and Bob) one-way hamming error document exchange protocol in which Bob knows  the errors can only happen in some subsets of all positions, where in each subset the number of errors is also bounded.

The reason we consider this kind of encoding/decoding for special error patterns is that it can have shorter redundancy   than the general coding for bounded number of hamming errors.

The encoding utilize a randomized bipartite expander graph with a large expansion.

\begin{lemma}
\label{lem:expanderGeneration}
For every $n, k, k', k'', r, d, t\in \mathbb{N}$, $  k' \leq r\leq {k} \leq n$, $  k'' t \log \frac{e k 2^t}{ k'' }  \leq k' \log \frac{2k}{k'}$, $\delta \in (0, 1)$, $d \geq \delta^{-1}$, constant $c > 0$, disjoint sets $   S_i \subseteq [n], i\in [t], |S_i|   =  k 2^{O(i)}$,  $\overline{k}_i = \max( k/2^{O(i)}, k'' ) \leq |S_i|/2 $, the probability that
a random bipartite graph with $n$ left vertices, $m \geq 2 dk 2^{1/\delta}  $ right vertices,  left degree $d$, having that
\begin{align*}
&\mbox{ for every } R \subseteq \cup_{i\in [t]} S_i, |R| = r \geq k' , 
\mbox{ with }  |R\cap S_i|  \leq \overline{k}_i,  \forall   i\in [t], \\
&\mbox{ it holds } |\Gamma(R)| > (1-\delta) dr,
\end{align*}  
is at least $1- \eps$, where $\eps = 2^{-\Theta(\delta d k'\log \frac{2k}{k'})} $ .

\end{lemma}

We denote the generated expander graph as a $(n, m, d, \mathbf{S},  \mathbf{\overline{k}}, [k', k], 1-\delta)$-expander, where $\mathbf{\overline{k}}$ is the sequence of all $\overline{k}_i, i\in [t]$.

\begin{proof}

We show that a uniformly sampled bipartite graph works. The bipartite graph with $n$ left vertices, $m$ right vertices, left degree $d$, is generated as follows. Each edge, from one vertex of the left, has its ending vertex being uniformly chosen from the right vertices.  

For a fixed $R$, if $|\Gamma(R)| \leq (1-\delta) dr  $, then there exists a set $T \subseteq [m]$ s.t. $ |T| =  (1-\delta) dr, |\Gamma(R)| \subseteq T$. 
There are at most 
\begin{equation}
{m \choose |T|}  \leq \left( \frac{em}{|T|} \right)^{|T|} = \left( \frac{em}{(1-\delta)dr}  \right)^{(1-\delta) dr}
\end{equation}
such set $T$. 
For each $T$, 
\begin{equation}\label{ieq:expanderGenerationProbFactor}
\Pr[ \Gamma(R) \subseteq T ] = \left(  \frac{|T|}{m}\right)^{dr} = \left( \frac{(1-\delta)dr}{m} \right)^{dr}.
\end{equation}

Consider a fixed $r$. Assuming $ r \in [k_{j+1}, k_{j}]$, for some $j\in [t]$. 
Notice that $j  = \Theta(\log \frac{2k}{r})$. 

Let $r_i = R \cap S_i$. The total number of different sequences $r_1, \ldots, r_t$ is at most 
\begin{equation}
{r+ t \choose r}   \leq \left(  \frac{e(r+t)}{r} \right)^{r} \leq \left(  O(\frac{2k}{r} )\right)^{r} \leq 2^{O(r\log \frac{2k}{r})}.
\end{equation}

Consider a fixed sequence $r_i, i\in [t]$ with $r_i \leq \overline{k}_i$.
The total number of possibilities of $R \cap   S_j, \ldots ,R\cap S_t$ is at most 
\begin{align*}
 & \prod_{i=j}^t {|S_i| \choose r_i} \leq \prod_{i=j}^{t} {|S_i| \choose \overline{k}_i } \leq \prod_{i=j}^{t} \left( \frac{e |S_i|}{\overline{k}_i} \right)^{\overline{k}_i}  \leq \prod_{i=j}^{t'} 2^{O(i \frac{k}{2^{O(i)}} )} \cdot \prod_{i=t'}^{t} \left( \frac{e|S_t|}{ k'' } \right)^{k''} \\
 & =  2^{O\left( \sum _{i=j}^{t'} i \frac{k}{2^{O(i)}} \right)} \cdot 2^{O((t-t')k'' \log \frac{ek2^t}{k''} ) }  \leq 2^{O( \frac{k}{2^{O(j)}} j )} \cdot 2^{O( r \log \frac{2k}{r})}  = 2^{O( r \log \frac{2k}{r})} .
\end{align*}
Here $t'$ is the first index s.t. $ \overline{k}_i = k'' $.

On the other hand, the total number of possibilities of $ R \cap   S_1, \ldots, R\cap S_{j}$ is at most 
\begin{equation}
\begin{split}
\prod_{i=1}^{j} {|S_i| \choose r_i} \leq {\sum_{i=1}^j |S_i| \choose \sum_{i=1}^j r_i} \leq { O(k 2^{O(j)}) \choose \sum_{i=1}^j r_i} \leq {O(k 2^{O(j)})  \choose r} \leq \left( O(\frac{k 2^{O(j)}}{r} ) \right)^{r} \leq 2^{O(r \log \frac{2k}{r})}.
\end{split}
\end{equation}

So by a union bound, 
the probability that for every $R, |R| = r, |R\cap S_i| \leq \overline{k}_i$, $\Gamma(R) \leq (1-\delta) dr$ is at most 
\begin{equation}
\begin{split}
& \left( \frac{em}{(1-\delta)dr}  \right)^{(1-\delta) dr} \times  \left( \frac{(1-\delta)dr}{m} \right)^{dr} \times  2^{O( r \log \frac{2k}{r})} \\
 = & e^{(1-\delta)dr} \left( \frac{(1-\delta)dr}{m} \right)^{\delta dr} \times 2^{ O( r \log \frac{2k}{r} )} \\
\leq  &  e^{dr} e^{-  \delta dr \log \frac{m}{dr} } 2^{O(r \log \frac{2k}{r})}  \\
\leq & 2^{-\Theta(\delta d r\log \frac{2k}{r})}\\
\end{split}
\end{equation} 
by letting $m = 2dk 2^{1/\delta}$.

Since $k\geq r \geq k'$, it holds that $2^{-\Theta(\delta d r\log \frac{2k}{r})} \leq   2^{-\Theta(\delta d k'\log \frac{2k}{k'})} $.

\end{proof}

\begin{remark}

Note that we can use a  $\kappa = O(k d)$-wise independence generator to generate the edges of the graph. Each edge is chosen according to a random variable in a sequence that is $\kappa$-wise independent. Each random variable has support size $m$. Hence inequality (\ref{ieq:expanderGenerationProbFactor})  still holds. So we can apply the same argument. 

\end{remark}

The decoding algorithm has two   parts. Both parts use belief propagation techniques. In the first part, we reduce the number of errors slightly by using $z_1$. In the second part, we further reduce the number of errors to $0$ by using $z_2$. 

\begin{construction}[Protocol for a specific setting of parameters]
\label{constr:enhancedBPDec}
Let $n, m, d, t \in \mathbb{N}$, $k_i \in \mathbb{N}, k_i \leq n, i\in [t]$, $k' = O(k/\log \frac{n}{k}) $,   disjoint sets $S_i \subseteq [n], i\in [t]$. Let  $S = \cup_{i\in [t]} S_i$.

Let expander graph $\Gamma_1: [n] \times [d_1] \rightarrow [m_1]$, s.t.
$$\forall R \subseteq \cup_{i\in [t]} S_i \mbox{ with } |R| \in [ k', O(k)]  \mbox{ and } \forall i\in [t], |R\cap S_i|  \leq 20k_i, \mbox{ it holds } \Gamma_1(R) > 0.9 d|R|.$$

Let $C_0  $ be a systematic Algebraic Geometry code from \cref{agcode}, with alphabet $\mathbb{F}_q$, message length $  n/q  $, redundancy length $ O(k') $ correcting $2k'$ errors.

Let $x\in \{0,1\}^n$ be the original message.

The decoding takes an input string $y\in \{0,1\}^n$, parity checks $z_1 $ generated by expander encoding of $x$  using $\Gamma_1 $, and $z_2$ which is the redundancy part of $C_0(x)$.

Stage 1:
\begin{enumerate}

\item (Generating the restriction set) Let $ V =\emptyset$. For every $i\in [t]$, if the number of flipped bits in $ S_i $ is less than  $19k_i$, then $V = V \cup S_i$ otherwise $V = V \cup \{j\mid \mbox{the } j \mbox{-th bit is flipped previously by this algorithm}\}$; (If a bit is flipped twice, then it is regarded as not flipped)

\item Find $j \in V$ s.t. the number of unsatisfied parity checks in $\Gamma_1(j)$ is larger than $|\Gamma_1(j)|/2 = d_1/2$; Flip the $j$-th bit, and restart this stage; If no such $j$, go to the next step;

\item Go to the next stage. 
\end{enumerate}

Stage 2 (classic belief propagation using $z_2$):
\begin{enumerate}

\item Apply the decoding of $C_0$ on the current $y$ concatenated with $z_2$.
 
\item Output the decoded message.
 
\end{enumerate}

\end{construction}


\begin{lemma}
\label{lem:enhancedBPDec}

If $\distance(y_{\overline{S}}, x_{\overline{S}}) = 0, \forall i\in [l], \distance(y_{S_i}, x_{S_i}) \leq k_i $, 
then the decoder outputs $x$ correctly.

\end{lemma}

\begin{proof} 

\begin{claim}
The first stage ends in at most $O(m_1)$ rounds, and the number of errors in $y$ is reduced to be less than $2k' $.
\end{claim}

\begin{proof}

Let $A_\tau$ be the set of indices of tampered bits (comparing to $x$) in $y$ at (immediately before) the $\tau$-th round. At the beginning $|A_1| = \distance(y, x)$. 

We first show that if $ |A_\tau| \geq 2k'  $, then we can indeed find an index $j\in V$ s.t.  the number of unsatisfied parity checks in $\Gamma_1(j)$ is larger than $|\Gamma_1(j)|/2$. 

Denote
$A'_\tau = A_\tau \cap V$. Let $s, s'$ be the numbers of satisfied checks in $ \Gamma_1(A_\tau), \Gamma_1(A'_\tau)$. Let $ u, u' $ be the numbers of unsatisfied checks in   $ \Gamma_1(A_\tau), \Gamma_1(A'_\tau)$.

Consider $i\in [l]$ s.t. the number of flipped bits  is exactly $19k_i$. As $\distance(y_{S_i}, x_{S_i}) \leq k_i$,  the number of tampered bits in $S_i$ is at most $20k_i$. So $|A_\tau \cap S_i | \leq 20 k_i$,
  since $\distance(y_{S_i}, x_{S_i}) \leq k_i$. Also note that these tempered bits (at the beginning of the stage) can be flipped by the algorithm, we know the current number of tampered bits in $A'_\tau \cap S_i$ is at least $18k_i$. So 
\begin{equation}
|A'_\tau \cap S_i| \geq 0.9 |A_\tau \cap S_i|.
\end{equation} 

For $i \in [l]$ s.t. the number of flipped bits is less than $19 k_i$, since $V\cap S_i = S_i$,
\begin{equation}
|A'_\tau \cap S_i|  = |A_\tau \cap S_i|
\end{equation} 
As a result, noting that $S_i, i\in [l]$ are disjoint,
\begin{equation}
\label{pfenhancedBPDecieq4}
\frac{|A'_\tau|}{|A_\tau|} = \frac{\sum_i |A'_\tau\cap S_i| }{ \sum_i |A_\tau \cap S_i| } \geq 0.9, 
\end{equation}

As $ |A_\tau| \geq 2k' $, it holds $ |A'_\tau| \geq 1.8k' \geq k'$.
By the expansion property of $\Gamma_1$,
\begin{equation}\label{pfenhancedBPDecieq1}
s'+u' = |\Gamma_1(A'_\tau)| \geq 0.9 d |A'_\tau|.
\end{equation}

On the other hand, note that $2s +  u \leq d|A_\tau|$, since each satisfied check in $\Gamma_1(A_\tau)$ must have at least two bits in $x_{A_\tau}$ to be as addends.
As $A'_\tau =  A_\tau \cap V$, we have $s'\leq s, u'\leq u$. 

Thus
\begin{equation}\label{pfenhancedBPDecieq2}
2s'+u' \leq 2s+u \leq d|A_\tau|.
\end{equation}
Combining  (\ref{pfenhancedBPDecieq1}) and (\ref{pfenhancedBPDecieq2}), we get 
\begin{equation}
\label{pfenhancedBPDecieq3}
u'  \geq 2( 0.9d|A'_\tau| - 0.5d|A_\tau|).
\end{equation}

Further by  (\ref{pfenhancedBPDecieq4}),(\ref{pfenhancedBPDecieq3}),
\begin{equation}
\label{pfenhancedBPDecieq5}
u' \geq 0.68 d|A_\tau| \geq 0.68 d|A'_\tau|.
\end{equation}
Hence by an averaging argument, there is an index $j\in V$ s.t. the number of unsatisfied parity checks in $\Gamma_1 (j)$ is at least $ 0.68 d$.

As a result, after doing the flipping for this round, the number of unsatisfied parity checks is strictly decreased. Also note that because of the restriction sets in our algorithm our operation cannot create an $A_\tau$ in some steps s.t. it does not have a good expansion. Hence, the first stage ends when $ |A_\tau| < 2k' $.

Next we consider $|A_\tau| < 2k'$ at the beginning of a round $\tau$. There are two possible cases. 

The first case is that in step 2 the algorithm does not find a $j\in V$ to conduct the operation, so it will go to the next stage as desired. 

The second case is that
there is still an index $j\in V$ s.t.  the number of unsatisfied parity checks in $\Gamma_1 (j)$ is more than half. Hence after flipping, and the number of unsatisfied parity checks is again strictly decreased.
Note that there are at most $O(m_1)$ unsatisfied checks. So this procedure will end in at most $O(m_1)$ rounds. 

For either case, stage 1 will end with $|A_\tau| < 2k'$.
This shows the claim.

\end{proof}

As a result, after stage 1, the number of errors  is less than $2k' $. 

As $C_0$ can correct $2k'$ errors, by \cref{agcode},  the decoding algorithm outputs $x$ correctly.

\end{proof}

\begin{theorem}
\label{thm:DEHspecial}
There is an efficient one-way protocol   for every $(\mathbf{s}, \mathbf{k}, t)$ DE, arbitrary $ s_i =  k2^{\Theta(i)} , l = \Omega(\log \frac{n}{k}), k_i = \max\{k/2^{\Theta(i)}, \Theta(\frac{k}{ l \log  \frac{n}{k}  }) \} \leq s_i/40 $, $t  \leq O(\sqrt{l}  )$ , having communication complexity $O(k)$, success probability $1-2^{-\Theta(k \frac{\log \log \frac{n}{k}} {\log \frac{n}{k}  }) } $.

\end{theorem}

\begin{proof}

The protocol is constructed by \cref{constr:enhancedBPDec} and we will use a random $(n, m, d)$ bipartite graph to be $\Gamma_1$.
By \cref{lem:expanderGeneration}, a random bipartite $(n, m, d)$ graph $\Gamma_1$  is an $(n, m, d, \mathbf{S}, \mathbf{\overline{k}} , [k', k], 0.9  )$ expander, with failure probability at most $\varepsilon = 2^{-\Theta(k' \log \frac{2k}{k'})}$, where we let $ m = O(2dk), d = O(1) $, $ \overline{k}_i = 20 k_i,   i\in [t]$, $ k' = O(k/\log \frac{n}{k})$. 
Also  since $t  \leq O( \sqrt{l} ) $, we have $k'' t \log \frac{2k 2^t}{k''}\leq k' \log \frac{2k}{k'}$, where $k'' = O(\frac{k}{l\log \frac{n}{k}})$.

By   \cref{lem:enhancedBPDec}, Bob can compute $x$, by using $y,  z, \mathbf{S}, \mathbf{k}, k' $ and the common randomness.

The communication complexity is $|z| = m = O(k)$. The protocol is efficient since both encoding and decoding are efficient. The failure probability is $\varepsilon = 2^{-\Theta(k \frac{\log \log \frac{n}{k}} {\log \frac{n}{k}  }) }  $ since the construction of $\Gamma_1$ is the only part we use randomness.

\end{proof}

Note that \cref{thm:DEHspecialintro} directly follows from \cref{thm:DEHspecial} by letting $l = O(t^2)$.

\section{Optimal Document Exchange under Edit Distance}
\label{sec:DE}
In this section we give the one-way document exchange protocol for edit distance. We begin with a randomized protocol where the two parties  have shared randomness.

\begin{construction}
\label{Constr:OWDocExc}

The input string for Alice has length $n\in \mathbb{N}$ and there are totally   
$k \in [\Theta(  \log^4\frac{n}{k})  , \Theta( n)]$ 
edit errors between Alice's string and Bob's string. 

Both Alice's and Bob's algorithms have $L = O(\log \frac{n}{k})$ levels.
For every $i\in [L]$, in the $i$-th level,
\begin{itemize}
\item Let block size $b_i =  \frac{n}{3\cdot 2^{i} k } $, i.e., in each level we divide a block in the previous level evenly into two blocks; (We choose $L$ properly s.t. $b_L = O(\log \frac{n}{k})$)

\item The number of blocks $l_i = n/b_i$;
\end{itemize}

Alice: On input $x \in \{0,1\}^n$;
\begin{enumerate}[label*=\arabic*.]

\item For the $i$-th level,
\begin{enumerate}[label*=\arabic*.]

\item Partition $x$ into consecutive blocks $x[1, b_i], x[1+b_i, 2b_i], \ldots, x[1+(l-1)b_i, l_ib_i] $;

\item{\label{randhash}} Let $h_j: \{0, 1\}^{b_i} \rightarrow \{0,1\}^{c}, j \in [l_i]$ be a sequence of random hash functions with $c$ being a large enough constant positive integer; 

\item Compute $v[i][j] = h_j\left(x[1+(j-1)b_i, jb_i] \right), j\in [l_i]$; 

\item $v[i] = (v[i][1], \ldots, v[i][l_i])$;


\item{\label{expanderencstep}} 
By the sketch construction of \cref{thm:DEHspecial}, compute $ z[i] \in \{0,1\}^{m = O(k)} $, a sketch of $v[i]$, the   expander  constructed in this step being $\Gamma : \{0,1\}^{l_i} \times \{0,1\}^{d_1 = 10} \rightarrow \{0, 1\}^{m }$;


%
%

\end{enumerate}

\item Compute the redundancy $z_{\mathsf{final}} \in (\{0,1\}^{b_L})^{\Theta(k)} $ for the blocks of the $L$-th level  by Theorem \ref{agcode}, where the code has distance $16k$;

\item Send   $z = (z[1], z[2], \ldots, z[L])$, $v[1]$, $z_{\mathsf{final}}$.

\end{enumerate}

Bob: On input $y \in \{0,1\}^{O(n)}$ and received $z$, $v[1]$, $z_{\mathsf{final}}$;

\begin{enumerate}[label*=\arabic*.]

\item Create $\tilde{x} \in \{0, 1, *\}^{n}$ (i.e. his current version of Alice's $x$), initiating it to be $(*, *, \ldots, *)$;

\item Let $ A_1 = [l_1],  A_i = \emptyset, i = 2, 3, \ldots, L$;

\item For the $i$-th level, where $1 \leq i \leq L-1$,
\begin{enumerate}[label*=\arabic*.]

\item Divide $\tilde{x}$ into length $b_i$ consecutive blocks, $\tilde{x}[1, b_i], \ldots, \tilde{x}[1+(l_i-1)b_i, l_i b_i]$;

\item Utilize the common randomness to get functions $h_j: \{0,1\}^{b_i} \rightarrow \{0,1\}^c, j\in [l_i]$ that Alice gets in her stage \ref{randhash}

\item Compute $ \tilde{v}[i] = \left( h_1(\tilde{x}[1, b_i]), \ldots, h_{l_i}(\tilde{x}[1+(l_i-1)b_i, l_i b_i]) \right) $;


\item For every $i' \in [\ell]$, let $S_{i'} \subseteq [l_i]$ be the indices of the (descendent) blocks in the current level, whose ancestors are those blocks indicated by $A_{i'}$, i.e.  $j$ is in $S_{i'}$ iff there is $j'  \in A_{i'}$ s.t. $ [1+(j-1)b_i, jb_i] \subseteq [1+(j'-1)b_{i'} , j'b_{i'} ] $;


\item Compute $v[i]$ by using the decoding algorithm from Construction \ref{constr:enhancedBPDec} on input $\tilde{v}[i]$, $S_{i'}$, $k_{i'} = \max(k/2^{0.9c(i-i')}, k/\log^3 \frac{n}{k} )$, $i' = i-1, i-2, \ldots, 1$, and the received $z[i]$;

\item Let $T_i = \emptyset$. For every $j \in [l_i] $, if $ v[i][j] \neq \tilde{v}[i][j]$, then put $j \in T_i$ and then check every $i' = i-1, i-2, \ldots, 1$, if the $ j$-th block in the current level is a descendent of the $j'$-th block in the $i'$-th level, then remove $j'$ from  $A_{i'}$;

\item Let $A_i = T_i$;

%

\item Compute $w  \in ( A_i \times [|y|])^{|w|}$ which is the maximum monotone matching between $x$'s blocks indicated by $A_i$, and $y$, under $h_1, \ldots, h_{l_i}$, using $v[i]$, by Lemma \ref{dpformatch}; (We interpret $w$ as a sequence of matches, the $j$th match being denoted as  $(w[j][1], w[j][2])$.)

\item{\label{refill}} Evaluate $\tilde{x}$ according to $w$, i.e. let  $\tilde{x}[w[j][1]] = y[w[j][2], w[j][2] + b_i - 1], \forall j\in [|w|]$;

\end{enumerate}

\item In the $L$'th level, apply the decoding of Theorem \ref{agcode} on the blocks of $\tilde{x}$ and $ z_{\mathsf{final}}$ to get $x$;

\item Return $x$.

\end{enumerate}

\end{construction}

Next we show the correctness of our construction.
%
%
%
%
%
%
%
%

Consider every level $i\in [L]$, every $i' = i-1, i-2, \ldots, 1$. We denote the  set descendants in the $i$-th level, stemming from $A_{i'}  $,  as $\tilde{A}_{i'}$.  The indices  set of undetected wrongly recovered blocks  in $\tilde{A}_{i'}$, is denoted as $B_{i'}$, $i' = i-1, \ldots, 1$.

Let $ i^* $ be s.t. $ k''  \triangleq k'/\Theta(\log^2 \frac{n}{k}) \in [k/2^{c(i-i^*)}, k/2^{c(i-i^*+1)}] $, $k'\triangleq  k /\log \frac{n}{k} $.

\begin{lemma}
\label{lem:veryOldAncestors}

For every $i\in [L]$, if $\forall i'<i, |T_{i'}| = O(k)$, and $v[i']$ are computed correctly by Bob,
then 
\begin{itemize}
\item for every $i' \leq i^*$,   the probability that $|B_{i'}|\geq k''$ is at most $ 2^{-\Omega(k'' )} $;

\item for every $i' \in ( i^*, i)$, the probability that $|B_{i'}| \geq k_{i'} =  k/2^{0.9c (i-i')}$ is at most $2^{-\Omega( ck/2^{c(i-i')} )  }$.
\end{itemize}

\end{lemma}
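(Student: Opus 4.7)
The plan is to prove the lemma by a ``$t$-witness'' argument in the spirit of Haeupler's technique \cite{haeupler2018optimal}, combined with the independence of the hash function families across levels and the $\kappa$-wise (almost) independence within each level. First I would condition on the hypotheses: for every $i'' < i$, $|T_{i''}| = O(k)$ and $v[i'']$ has been decoded correctly. Under this conditioning, at each level $i'' < i$ the matching Bob computes in stage \ref{refill} is a monotone matching between the $O(k)$ blocks indicated by $A_{i''}$ and substrings of $y$, whose shift errors sum to at most the edit distance $k$. By Lemma \ref{numOfPossibleMatchings} the number of such matchings per level is $2^{O(k)}$.

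Next I would characterize when a level-$i$ descendant of $v_{i'} \in A_{i'}$ ends up in $B_{i'}$. This happens exactly when (a) the matching at level $i'$ refilled $v_{i'}$ with bad content, i.e.\ a substring of $y$ whose hash equals Alice's $v[i'][\cdot]$ but whose content differs from $x$ (a hash collision at level $i'$), and (b) at each intermediate level $i'' \in \{i'+1,\ldots,i-1\}$ the necessarily bad ancestor at level $i''$ has its hash $\tilde v[i''][\cdot]$ coincide with Alice's $v[i''][\cdot]$ (a hash collision at level $i''$). Since the hash function families at distinct levels are drawn from independent portions of the shared randomness (step \ref{randhash}) and each outputs $c$ bits, every such collision event has probability $\leq 2^{-c}$ and across the $i - i'$ levels the probability that a fixed descendant lies in $B_{i'}$ is at most $2^{-c(i-i')}$. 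Since there are $|\tilde A_{i'}| \leq O(k \cdot 2^{i-i'})$ such descendants, linearity of expectation gives $\mathbb{E}[|B_{i'}|] \leq O(k \cdot 2^{-(c-1)(i-i')})$.

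To upgrade this expectation bound to a tail bound, I would instantiate the hash functions via the $\eps$-almost $\kappa$-wise independent generator of Theorem \ref{almostkwiseg} and apply the moment inequality of Theorem \ref{kwiseconcentrbound}. For Case~2 ($i' > i^*$), the threshold $k_{i'} = k/2^{0.9 c (i-i')}$ sits a factor $2^{0.1 c (i-i')}$ above the mean, so taking $\kappa = k_{i'}$ produces a tail of $(e/2^{0.1 c (i-i')})^{k_{i'}} = 2^{-\Omega(c k / 2^{c(i-i')})}$, as claimed. For Case~1 ($i' \le i^*$), the expectation bound geometrically sums from $i'$ up to $i^*$ and is dominated by its value at $i^*$, which for $c$ a sufficiently large constant is a constant multiple of $k''$; the same concentration bound at threshold $k''$ then yields $2^{-\Omega(k'')}$.

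The main obstacle I expect is the dependence between Bob's intermediate matchings and the hash function randomness: the matching at level $i''$ is itself a function of the level-$i''$ hashes, so ``the level-$i''$ ancestor is bad'' and ``the level-$i''$ hash collides'' are not a priori independent events. I plan to resolve this precisely by the $t$-witness enumeration: first fix one of the $2^{O(k)}$ possible matching patterns at each level $i'' \in \{i',\ldots,i-1\}$, treating each fixed pattern as a combinatorial witness; then invoke the cross-level independence and within-level $\kappa$-wise independence of the hash functions to upper bound the probability that the witness is realized. The total number of witness combinations across the $O(\log(n/k))$ relevant levels is at most $2^{O(k \log(n/k))}$, which for $c$ sufficiently large is comfortably absorbed by the tail factor $2^{-\Omega(ck/2^{c(i-i')})}$ in the final union bound.
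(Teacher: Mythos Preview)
Your plan has a genuine gap at the concentration step. After fixing the level-$i'$ matching, the indicator variables ``level-$i$ descendant $j$ lies in $B_{i'}$'' are \emph{not} $\kappa$-wise independent for any $\kappa\ge 2$: any two descendants sharing an ancestor at some intermediate level $i''\in[i',i)$ share the \emph{same} hash-collision event at level $i''$. This tree-structured correlation is intrinsic to the problem and is not removed by making the hash families $\kappa$-wise independent, so Theorem~\ref{kwiseconcentrbound} simply does not apply to $\sum_j \mathbbm{1}[j\in B_{i'}]$.

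Your proposed fix --- union-bounding over all matching patterns at every level $i''\in\{i',\ldots,i-1\}$ --- does not help and is also too expensive. It does not help because undetected descendants (the ones in $\tilde A_{i'}$) are by definition never re-matched at levels $>i'$, so those matchings are irrelevant to their content; the correlation persists after any such fixing. It is too expensive because the union factor is $2^{\Theta(k(i-i'))}$ while the target tail is $2^{-\Omega(ck/2^{c(i-i')})}$; since $c/2^{c}$ is bounded (and in fact decreasing in $c$), no constant $c$ makes the latter dominate the former even for $i-i'=1$, let alone for $i-i'=\Theta(\log(n/k))$.

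The paper resolves the correlation by a different witness. Instead of summing indicators over leaves, it encodes any bad-leaf set via a recursive tree-cutting: from each current subtree rooted at a bad block, pick one bad leaf, cut its root path (creating fresh subtrees at each level along the path), and recurse. This yields groups $w_{i'},\ldots,w_i$ with the property that the $(i-j)$ hash checks charged to each group-$j$ leaf lie on pairwise \emph{disjoint} root paths, hence are genuinely independent; the probability of a fixed witness is then exactly $2^{-c\sum_j (i-j)w_j}$. The enumeration of witnesses is a product of binomials bounded by $\binom{k}{w_{i'}}\binom{\sum_j(i-j)w_j}{\sum_j w_j}2^{\sum_j(i-j)w_j}$, and the whole argument is phrased as a union bound over the scalar $\sum_j(i-j)w_j$ rather than a concentration bound on $|B_{i'}|$. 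That disjoint-path decomposition is the missing idea in your plan.
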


\begin{proof}

Consider the possibilities of $B_{i'}$. 
Each possibility can be described by a $w$-witness with $w = |B_{i'}|$. The witness is a sequence of (number) $w$ indices where each index is  in the $i$-th level indicating a wrongly recovered block. This sequence is further partitioned into $ i-i'+1 $ groups corresponding to levels $i', i'+1, \ldots, i$.  We numerate these groups as group $i', i'+1, \ldots, i$. 

Consider the trees rooted at blocks in $A_{i'}$. Each of them has height $ i-i' $. Each node is a block in a certain level between $i'$ and $i$. 

The $w$-witness describes level $i$ bad blocks which are descendants of blocks in $A_{i'}$, uniquely in the following way.

Group $j \in [i', i]$ consists of indices of bad blocks, one for each depth $i-j$ tree whose root is a wrong block in level $j$.  Note that for one tree, there may be many bad leaf blocks. For this case, we only pick the leftmost wrong one. These forms the group one. After each picking, we cut all the edges from that block to the root. This gives $i-i'+1$ sub-trees. One of them is the last block. We only focus on sub-trees other than that picked block. They have depth from $1$ to $i-j$. We update the set of trees by adding these trees from cutting and delete the trees being cut. 

In this way, every error patterns can be described.  This is because, every leaf node is either being picked or still in one of the trees in the forest. Once the leaf is in one of the trees in the forest, it can be picked in a certain level of the picking procedure.

Let the number of wrong blocks being picked for each level $j$ be $ w_{ j} $. 

The total number of error patterns is
\begin{equation}
\begin{split}
P = & {k \choose w_{ i'}}  \cdot 2^{ (i-i')w_{ i'} }  \cdot  { w_{ i'} \choose w_{ i'+1} }  \cdot   2^{ (i-i'-1)w_{ i'+1} }   \cdot    {{ w_{ i'}+ w_{ i'+1}}   \choose w_{ i'+2}  } \cdot 2^{ (i-i'-2)w_{ i'+2} }   \cdots { {\sum_{j=i'}^{i-1} w_{ j} } \choose w_{ i}  }   \\
\leq &  {k \choose  w_{ i'}} { {  \sum_{j=i'}^{i-1} (i-j) w_{ j}  }  \choose {\sum_{j=i'}^{i }   w_{ j} } } \cdot 2^{  \sum_{j=i'}^{i-1} (i-j) w_{ j}}  \\  
\end{split}
\end{equation}

For $i' \leq i^*$,
suppose $\sum_{j = i'}^{i } (i-j) w_{ j} =  k''$.
Then
\begin{equation}
\begin{split}
P\leq &  {k \choose k''/(i-i')}  \cdot 2^{2k_0}\\
\leq & 2^{O(k'') \cdot \frac{O(\log k)}{i-i'}} \cdot 2^{2 k''}\\
\leq & 2^{O(k'')}.
\end{split}
\end{equation}  
 
Note that the probability that a specific error pattern happens is at most $ 2^{- c\sum_{j = i'}^{i} (i-j) w_{ j} } = 2^{- ck''}$ because each block in group $j$ is checked for $i-j$ times independently. Since $c$ is a large enough constant, $\sum_{j = i'}^{i} (i-j) w_{ j}$ is a integer in $ [0, \poly(k \log n)] $, we know by a union bound, $\sum_{j = i'}^{i} (i-j) w_{ j} \geq  k''$ happens with probability at most $ 2^{-ck''} \times 2^{ O( k'')} \times \poly( k \log n) \leq 2^{-\Omega(k'')}$.


For $i > i^*$, suppose $\sum_{j = i'}^{i} (i-j) w_{ j} =   k/2^{0.9c (i-i')} =  k_{i'}$. Then 
\begin{equation}
\begin{split}
P\leq &  {k \choose  k_{i'}/(i-i')}  \cdot 2^{2  k_{i'}}\\
\leq & 2^{ (0.9c (i-i') +O(1)+\log (i-i')  ) \cdot  k_{i'}/(i-i')    } \cdot 2^{2  k_{i'}}\\
\leq & 2^{  0.91ck_{i'}},
\end{split}
\end{equation}  
when $c$ is a large enough constant.

Similarly, note that the probability that a specific error pattern happens is at most $ 2^{- c\sum_{j = i'}^{i} (i-j) w_{ j} } = 2^{- c k_{i'} }$ because each block in group $j$ is checked for $i-j$ times independently. Since $c$ is a large enough constant, $\sum_{j = i'}^{i} (i-j) w_{ j}$ is a integer in $ [0, \poly(n)] $, we know by a union bound, $\sum_{j = i'}^{i} (i-j) w_{ j} \geq  k_{i'}$ happens with probability at most $ 2^{-ck_{i'}} \times 2^{  0.91ck_{i'}} \times \poly( k \log n) \leq 2^{-\Omega( k_{i'} ) }$.

As a result, $w =\sum_{j = i'}^{i}   w_{ j}> k_{i'}$ happens with probability at most $2^{-\Omega( k_{i'})}  \leq 2^{-\Omega(k'')}$ . 

\end{proof}

\begin{lemma}
\label{Badsetsnotlarge}
For every $i\in [L]$, if $\forall i'<i, |T_{i'}| = O(k)$, and $v[i']$ are computed correctly by Bob, then with probability at least $1-2^{-\Omega(k'')}$,
$$ \sum_{i' = 1}^{i-1}  |B_{i'}| < k.  $$

\end{lemma}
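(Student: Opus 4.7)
The plan is to apply Lemma~\ref{lem:veryOldAncestors} term-by-term to control each $|B_{i'}|$ and then sum up the bounds across all $i' \in [1, i-1]$, splitting the sum at the threshold index $i^*$ defined just before Lemma~\ref{lem:veryOldAncestors}. First I would recall that under the hypotheses of the claim, Lemma~\ref{lem:veryOldAncestors} applies and yields, for each $i' \leq i^*$, the tail bound $\Pr[|B_{i'}| \geq k''] \leq 2^{-\Omega(k'')}$, and, for each $i' \in (i^*, i)$, the tail bound $\Pr[|B_{i'}| \geq k/2^{0.9c(i-i')}] \leq 2^{-\Omega(ck/2^{c(i-i')})}$. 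I would then note that for $i' > i^*$ the definition of $i^*$ gives $k/2^{c(i-i')} \geq 2^c k''$, so all failure probabilities in the second case are at most $2^{-\Omega(k'')}$ as well.

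Next I would take a union bound over all $i-1 \leq L = O(\log \tfrac{n}{k})$ indices. Since $k \geq \Theta(\log^4 \tfrac{n}{k})$ by hypothesis of the construction, we have $k'' = k / \Theta(\log^3 \tfrac{n}{k}) \geq \Theta(\log \tfrac{n}{k}) \gg \log L$, so the $L$-fold union bound still gives overall failure probability $2^{-\Omega(k'')}$. Condition on the good event that every $|B_{i'}|$ is below its asserted threshold.

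Under this conditioning I would bound the sum in two pieces. For the low-index part,
\[
\sum_{i'=1}^{i^*} |B_{i'}| \;<\; i^* \cdot k'' \;\leq\; L \cdot k'' \;=\; O\!\left(\log \tfrac{n}{k}\right) \cdot \tfrac{k}{\Theta(\log^3 \tfrac{n}{k})} \;=\; O\!\left(\tfrac{k}{\log^2 \tfrac{n}{k}}\right),
\]
which is $o(k)$. For the high-index part, the bound $|B_{i'}| < k/2^{0.9c(i-i')}$ gives a geometric series
\[
\sum_{i'=i^*+1}^{i-1} |B_{i'}| \;<\; \sum_{j=1}^{\infty} \frac{k}{2^{0.9c \cdot j}} \;=\; \frac{k}{2^{0.9c}-1},
\]
which is at most $k/2$ once the constant $c$ in the hash-function output length is taken large enough (this is the same large constant already fixed in Construction~\ref{Constr:OWDocExc}). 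Adding the two parts yields $\sum_{i'=1}^{i-1} |B_{i'}| < k$, as required.

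The main obstacle is the bookkeeping around the constant $c$ and the threshold $i^*$: one must verify that $c$ can be chosen simultaneously large enough to (i) make the geometric-series constant $1/(2^{0.9c}-1)$ safely less than $1$, (ii) guarantee that Lemma~\ref{lem:veryOldAncestors}'s exponent $\Omega(ck/2^{c(i-i')})$ dominates $\Omega(k'')$ for all $i' > i^*$, and (iii) allow the $L$-fold union bound to be absorbed into the final $2^{-\Omega(k'')}$ bound. Since $c$ has been fixed once and for all in the construction, and since the regime $k \geq \Theta(\log^4 \tfrac{n}{k})$ is assumed, all three conditions are met, and the proof follows.
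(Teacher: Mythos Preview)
Your proposal is correct and follows essentially the same argument as the paper: apply Lemma~\ref{lem:veryOldAncestors} to each $i'$, take a union bound over the at most $L$ indices, split the sum at $i^*$, bound the low-index part by $i^* \cdot k'' = o(k)$ and the high-index part by a geometric series summing to at most $k/2$. Your write-up is in fact slightly more careful than the paper's, since you explicitly justify why the failure probabilities for $i'>i^*$ are also $2^{-\Omega(k'')}$ and why the factor $L$ is absorbed into the exponent.
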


\begin{proof}

By Lemma \ref{lem:veryOldAncestors}, 
for $i' \leq i^*$, with probability at least $1 - 2^{-\Omega(k'')}$, $|B_{i'}| < k''$; for $i'> i^*$, with probability at least $1 - 2^{-\Omega(k'')}$, $ |B_{i'}| \leq k_{i'} =  k/2^{0.9c (i-i')} $.

By a union bound, with probability at least $ 1 - i2^{-\Omega(k'')}  = 1 - 2^{-\Omega(k'')} $,
$$ \sum_{i' = 1}^{i-1} |B_{i'}| = \sum_{i'=1}^{i^*} |B_{i'}| + \sum_{i' = i^*+1}^{i-1} |B_{i'}|   \leq  (i^*-1)k'' + 0.5 k < k. $$

\end{proof}

\begin{lemma}
\label{lem:numOfWronglyRecovered}
For every $i\in L$, at level $i$, if $v[i]$ are computed correctly by Bob, and  $|T_i| \leq 6k$, then  with probability $1- 2^{-\Theta(k)}$,  the number of wrongly recovered blocks introduced by $w$ is at most $k$.

\end{lemma}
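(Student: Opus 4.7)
The plan is to bound the number of wrongly recovered blocks by a union bound over all matchings the DP could output, combined with the hash-collision probability for each potentially wrong match. Since $\ED(x,y)\leq k$, Lemma~\ref{dpformatch} guarantees that the matching $w$ produced by the DP satisfies the cumulative edit-cost bound $\leq k$. Applying Lemma~\ref{numOfPossibleMatchings} with $s=|A_i|=|T_i|\leq 6k$ then shows that the total number of matchings consistent with this edit-cost constraint is at most $2^{2\cdot 6k+k(\log\frac{k+6k-1}{k}+\log e)}=2^{O(k)}$.

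For any fixed candidate matching $w$ and any subset $B\subseteq w$ of size exactly $k+1$, I would call a pair $(j,j')\in B$ a \emph{hash collision} if Alice's block $x[1+(j-1)b_i,jb_i]$ and the $y$-substring $y[j',j'+b_i)$ are distinct strings yet are mapped by $h_j$ to the same $c$-bit value. Every wrongly recovered block introduced by $w$ must correspond to such a hash collision, since otherwise the content copied in step~\ref{refill} equals the true content of Alice's block. Because each hash function $h_j$ is chosen independently and uniformly, and the matching $w$ is monotone (so each block index $j\in A_i$ appears in at most one pair), the $k+1$ hash-collision events indexed by $B$ are mutually independent and each has probability at most $2^{-c}$. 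Hence the probability that all $k+1$ pairs in $B$ are simultaneously hash collisions is at most $2^{-c(k+1)}$.

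A union bound over the at most $2^{O(k)}$ candidate matchings and the at most $\binom{6k}{k+1}\leq 2^{O(k)}$ size-$(k+1)$ subsets within each matching then yields a total failure probability of $2^{O(k)}\cdot 2^{O(k)}\cdot 2^{-c(k+1)}=2^{-\Theta(k)}$, provided the constant $c$ (the hash output length) is chosen large enough. The main subtlety will be justifying the independence of the hash-collision events across different pairs in $B$, which reduces to the monotonicity of $w$ ensuring that distinct pairs in $B$ use distinct hash functions $h_j$; a minor point is ensuring the DP's output matching is indeed among those counted by Lemma~\ref{numOfPossibleMatchings}, which follows directly from the $\ED(x,y)\leq k$ hypothesis.
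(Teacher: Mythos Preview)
Your proposal is correct and follows essentially the same approach as the paper: bound the number of candidate matchings by $2^{O(k)}$ via Lemma~\ref{dpformatch} and Lemma~\ref{numOfPossibleMatchings}, then union-bound against the hash-collision probability for each candidate. The only difference is cosmetic: the paper absorbs the $\binom{6k}{k+1}=2^{O(k)}$ subset factor into a single ``this happens with probability $2^{-ck}$'' line, whereas you spell out the union bound over size-$(k{+}1)$ subsets explicitly.
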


\begin{proof}

Assume the number of wrongly recovered blocks introduced by $w$ is larger than $k$. Then there more than $k$ pairs in the matching are bad pairs. This happens with probability $1/2^{c k}$. 

Note that by Lemma \ref{dpformatch}, for $w$, \[  |\rho'_1 - \rho_1| + |(\rho'_2 - \rho'_1 ) - (\rho_2 -\rho_1) | + \cdots + | ( \rho'_{|w|} -\rho'_{|w|-1}) - (\rho_{|w|} -\rho_{|w|-1})| \leq  k. \]

By Lemma \ref{numOfPossibleMatchings}, since $|T_i| \leq 6k$, there are totally $2^{O(k)}$ possible matchings that can be output by our algorithm. 

So by a union bound, the conclusion holds with probability $1-2^{-\Theta(k)}$.

\end{proof}

\begin{lemma}
\label{lem:numOfWrBlksAfterRefill}
For every $i\in L$, in level $i$, if $v[i]$ are computed correctly, and  $|T_i| = O(k)$, then  with probability $1-2^{-\Theta(k)}$, the number of wrongly recovered blocks and uncovered blocks in $T_i$ after \ref{refill} is at most $ 2k $. 

\end{lemma}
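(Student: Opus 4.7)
I would derive the lemma by partitioning the blocks of $T_i$ that remain incorrect in $\tilde{x}$ after step \ref{refill} into two disjoint categories and bounding each separately. Every block of $T_i$ either gets matched by $w$ in step \ref{refill}, in which case its new value is copied from a substring of $y$ whose hash agrees (so it is either truly equal to the correct block of $x$ or the agreement is due to a hash collision), or it fails to be matched by $w$, in which case $\tilde{x}$ is left with its previous (known to be incorrect) value. Thus the number of still-incorrect blocks in $T_i$ after the refill equals the number of \emph{wrongly recovered} blocks plus the number of \emph{uncovered} blocks, and these two counts are clearly disjoint.

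First I would bound the uncovered blocks by invoking Lemma~\ref{dpformatch} directly, with the subset $S$ of that lemma instantiated as $A_i = T_i$. Since the overall edit distance between $x$ and $y$ is at most $k$ by the hypothesis of Construction~\ref{Constr:OWDocExc}, the lemma guarantees that the matching $w$ returned by the DP satisfies $|w| \geq |T_i| - k$. Hence at most $k$ indices of $T_i$ are uncovered after the refill step.

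Next I would bound the wrongly recovered blocks by invoking Lemma~\ref{lem:numOfWronglyRecovered}. Its hypotheses are that $v[i]$ is computed correctly and $|T_i|\leq 6k$; the former is an assumption of the current lemma, and the latter holds provided the constant hidden in $|T_i| = O(k)$ is at most $6$ (if not, the same argument goes through after enlarging the constant $c$ in the hash output length, since by Lemma~\ref{numOfPossibleMatchings} the total number of candidate matchings is still $2^{O(k)}$ while the probability of any fixed bad match is $2^{-c}$). Lemma~\ref{lem:numOfWronglyRecovered} then yields that with probability $1 - 2^{-\Theta(k)}$ at most $k$ blocks of $T_i$ are wrongly recovered by $w$. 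Adding the two disjoint counts gives the claimed bound of $2k$ with the same $1 - 2^{-\Theta(k)}$ probability.

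I do not expect a serious obstacle here: both bounds are already in the toolkit established earlier in the section, and the only point requiring any care is lining up the quantitative constant in ``$|T_i| = O(k)$'' with the ``$|T_i| \leq 6k$'' condition of Lemma~\ref{lem:numOfWronglyRecovered}, which is a matter of convention rather than substance.
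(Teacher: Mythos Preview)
Your proposal is correct and follows essentially the same approach as the paper: bound the uncovered blocks by $k$ via Lemma~\ref{dpformatch} (since $|w|\geq |T_i|-k$), bound the wrongly recovered blocks by $k$ via Lemma~\ref{lem:numOfWronglyRecovered}, and add. The paper's proof is just a terser version of what you wrote, without the extra discussion of the $|T_i|\leq 6k$ versus $|T_i|=O(k)$ convention.
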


\begin{proof}

 By Lemma \ref{dpformatch}, $|w| \geq |T_i| - k$. Thus the number of uncovered blocks is at most $k$. By  Lemma \ref{lem:numOfWronglyRecovered}, with probability $1-1/2^{\Theta(k)}$, the number of wrongly recovered blocks introduced by $w$ is at most $k$. So the total number of wrongly recovered blocks is at most $2k$.

\end{proof}

\begin{lemma}
\label{lem:OWmaininduction}
For every $i\in L$, with probability $1-2^{-\Theta(k'')}$, 
\begin{itemize}
\item after the first step of level $i$, the number of wrongly recovered blocks is at most $ 6k$;

\item Bob can compute $v[i]$ correctly;

\item the number of wrongly recovered blocks in $T_i$ is at most $2k$ after step \ref{refill}.

\end{itemize}

\end{lemma}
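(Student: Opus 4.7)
The plan is to prove Lemma~\ref{lem:OWmaininduction} by strong induction on the level $i$. The base case $i=1$ is immediate: $\tilde x$ is initialized to all stars, so all $l_1=6k$ blocks are ``wrongly recovered''; Alice sends $v[1]$ in the clear so Bob recovers it correctly; and since $|T_1|=l_1=6k$, Lemma~\ref{lem:numOfWrBlksAfterRefill} yields at most $2k$ wrongly recovered blocks in $T_1$ after the refill step, with failure probability $2^{-\Theta(k)}$.

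For the inductive step, I assume the three conclusions hold for every level $i'<i$. The first task is to bound the total number of wrongly recovered blocks at the start of level $i$. After the refill step at level $i-1$, the wrong blocks decompose into (i) at most $2k$ wrongly refilled blocks in $T_{i-1}$ (by the third bullet of the inductive hypothesis) and (ii) undetected wrongly recovered blocks whose source ancestor lies in some $A_{i'}$ with $i'\le i-2$, whose total number is strictly less than $k$ by Lemma~\ref{Badsetsnotlarge} applied at level $i-1$; its hypotheses $|T_{i'}|=O(k)$ and correct decoding of $v[i']$ for $i'<i-1$ come straight from the inductive hypothesis. Since each wrong level-$(i-1)$ block has at most two wrong level-$i$ children, the total count at level $i$ is at most $2(2k+k)=6k$.

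Next I verify that Bob decodes $v[i]$ correctly. The asymmetric-DE instance Bob solves has sets $S_{i'}=\tilde A_{i'}$ with sizes $|S_{i'}|=O(|A_{i'}|\cdot 2^{i-i'})=O(k\,2^{i-i'})$ (since $|A_{i'}|\le |T_{i'}|=O(k)$ by induction) and error bounds $k_{i'}=\max\{k/2^{0.9c(i-i')},\,k/\log^3\tfrac{n}{k}\}$. Lemma~\ref{lem:veryOldAncestors}, whose premises coincide with the inductive hypothesis, certifies that except with probability $2^{-\Omega(k'')}$ the number of bad (hash-mismatched) blocks inside each $S_{i'}$ is indeed at most $k_{i'}$. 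These $(s_{i-i'},k_{i-i'},t)$ parameters satisfy the geometric growth/decay requirements of Theorem~\ref{thm:DEHspecial}, so the special-setting DE protocol succeeds and $v[i]$ is recovered correctly with failure probability $2^{-\Omega(k/\log(n/k))}$.

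The third bullet then follows from the first: since $T_i$ is a subset of the wrongly recovered level-$i$ blocks, $|T_i|\le 6k$, and a direct application of Lemma~\ref{lem:numOfWrBlksAfterRefill} bounds the wrongly recovered count inside $T_i$ after refill by $2k$, with failure probability $2^{-\Theta(k)}$. A union bound across the three steps at level $i$, and then across the $L=O(\log\tfrac{n}{k})$ levels, preserves the overall $2^{-\Theta(k'')}$ bound. The main subtlety I anticipate is tracking the dynamically updated sets $A_{i'}$ across levels --- specifically, verifying that the removals performed in intermediate levels only shrink each $\tilde A_{i'}$ (so that both the inductive control on $|S_{i'}|$ and the premises of Lemma~\ref{lem:veryOldAncestors} remain valid), and checking that the resulting shape $(s_{i-i'},k_{i-i'})$ really does match the geometric structure required by Theorem~\ref{thm:DEHspecial} uniformly in $i$.
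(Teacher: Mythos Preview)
Your proposal is correct and follows essentially the same inductive skeleton as the paper: base case $i=1$, then Lemma~\ref{Badsetsnotlarge} (or Lemma~\ref{lem:veryOldAncestors}) for controlling bad blocks, Theorem~\ref{thm:DEHspecial} for correct recovery of $v[i]$, and Lemma~\ref{lem:numOfWrBlksAfterRefill} for the post-refill bound, all glued together by a union bound over $L$ levels.

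The only notable deviation is in how you argue the first bullet. You invoke Lemma~\ref{Badsetsnotlarge} at level $i-1$ (bounding the undetected bad blocks from sources $i'\le i-2$ by $<k$), add the $\le 2k$ freshly-refilled bad blocks from $T_{i-1}$, and then double when passing to level-$i$ granularity, arriving at $\le 6k$. The paper instead applies Lemma~\ref{Badsetsnotlarge} directly at level $i$, obtaining $\sum_{i'=1}^{i-1}|B_{i'}|<k$ in one shot (which of course also gives $\le 6k$). Both routes are valid; yours is arguably cleaner because it sidesteps any ambiguity about whether the $B_{i'}$ at level $i$ are measured before or after the level-$i$ hash comparison. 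The subtleties you flag at the end (that removals only shrink $\tilde A_{i'}$, and that the resulting $(s,k)$ shape fits Theorem~\ref{thm:DEHspecial}) are exactly the points the paper handles tersely via Lemma~\ref{lem:expanderGeneration} and the remark ``$k'_{i'}\ge 20k_{i'}$'', so your concern is well-placed but resolvable.
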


\begin{proof}

We use induction. 

In the first level, $\tilde{x} = (*, *, \ldots, *)$. So the number of wrongly recovered blocks at the beginning is $l_1 = n/b_1 = 6k$. So The number of wrongly recovered blocks is at most $ 6k$. Also Bob can get $v[1]$ correctly, since it is directly sent by Alice. By Lemma \ref{lem:numOfWrBlksAfterRefill},   with probability $1-1/2^{\Theta(k)}$,   the total number of wrongly recovered blocks is at most $2k$ if we regard uncovered blocks as wrongly recovered.

Suppose the conclusion holds for the first $i-1$-level. Consider level $i$. 

By Lemma \ref{Badsetsnotlarge}, with probability $1-2^{-\Omega(k'')}$,  the total number of wrongly recovered blocks is $\sum_{i'=1}^{i-1}|B_{i'}|  <k$.  



By Lemma \ref{lem:expanderGeneration}, with probability $1- \eps_1  = 1-2^{-\Omega(k') }$, $\Gamma_1$ is
 a bipartite graph, having $n_1= l_i$ left vertices, $m =  O(k)  $ right vertices,  left degree $d = O(1)$,  s.t. $\forall R \subseteq [n_1], |R|   \in [k', k],  |R\cap S_{i'}|  \leq k'_{i'} = \max(20k/{2^{c(i-i')}},  20k^{0.9} )  $, 
 $$  \Gamma(R) > 0.9 d|R|.$$ 
Note that $k'_{i'}  \geq 20 k_{i'}$.
Also note that $ i' $ iterates in $[1, i-1]$. So the number of $S_{i'}$ is at most $L \leq k^{\beta/2}\sqrt{\log k}  $.
So by \cref{thm:DEHspecial}, Bob can get the correct $v[i]$.

As a result, by a union bound with probability $1-L2^{-\Theta(k'')}$, Bob can compute $v[i]$ correctly. Note that $L  =O(\log \frac{n}{k})$, $k = \Omega( \log^4 \frac{n}{k} )   $. So the probability is at least $1-2^{-\Theta(k'')}$.

By Lemma \ref{lem:numOfWrBlksAfterRefill},   with probability $1-1/2^{\Theta(k)}$,   the total number of wrongly recovered blocks in $T_i$ is at most $2k$ after stage \ref{refill}.

So the overall   probability is as desired.

This shows the inductive step.

\end{proof}

\begin{lemma}
\label{lem:BobComputesX}

With probability $1-2^{-\Theta(k'')}$, Bob outputs $x$ correctly.

\end{lemma}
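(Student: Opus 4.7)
The plan is to combine Lemma~\ref{lem:OWmaininduction} across all iterative levels with the error-correcting guarantee of the final algebraic geometry redundancy $z_\final$. First, I would apply Lemma~\ref{lem:OWmaininduction} to each level $i = 1, \ldots, L-1$ of Bob's for-loop and take a union bound over the $L-1$ failure events. Because $L = O(\log(n/k))$, $k'' = \Theta(k/\log^3(n/k))$, and Construction~\ref{Constr:OWDocExc} assumes $k \in [\Omega(\log^4(n/k)), O(n)]$, the $\log L$ factor is dominated by $k''$, so the aggregate bound $L \cdot 2^{-\Theta(k'')}$ collapses back to $2^{-\Theta(k'')}$. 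On this good event, Bob correctly computes $v[i]$ for every $i \leq L-1$, and in particular at level $L-1$ the refill step leaves at most $2k$ wrongly-recovered or uncovered blocks inside $T_{L-1}$.

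Next, I would bound the total number of block errors present in $\tilde{x}$ once the for-loop terminates. By Lemma~\ref{Badsetsnotlarge} applied at level $L-1$, the number of undetected wrong blocks inherited from previous levels' $\tilde{A}_{i'}$'s satisfies $\sum_{i'=1}^{L-2}|B_{i'}| < k$. Adding the $\leq 2k$ fresh errors introduced inside $T_{L-1}$ gives at most $3k$ erroneous blocks of size $b_{L-1}$ in $\tilde{x}$ at the end of the iterative phase.

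Finally, I would hand this off to the AG decoder. Since $b_L = b_{L-1}/2$, repartitioning $\tilde{x}$ into the level-$L$ block structure can at most double the error count, yielding at most $6k$ erroneous level-$L$ symbols. The AG code of Theorem~\ref{agcode} used to build $z_\final$ has distance $16k$ and hence corrects any error pattern of weight at most $8k$; since $6k < 8k$, the AG decoding step of Bob's algorithm outputs $x$ exactly. Combining the success probability of the iterative phase with the deterministic success of the AG decoding on this good event yields the claimed probability $1 - 2^{-\Theta(k'')}$.

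The only nontrivial point is the probability bookkeeping: keeping the $L$-fold union bound from degrading the exponent relies precisely on the parameter regime $k \geq \Omega(\log^4(n/k))$ fixed in Construction~\ref{Constr:OWDocExc}, which ensures $\log L \ll k''$. Everything else is a direct aggregation of the per-level guarantees already proved, together with the observation that the AG code's correction radius $8k$ comfortably exceeds the residual block-error count $6k$ inherited from the previous levels.
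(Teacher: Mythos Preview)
Your proposal is correct and follows essentially the same approach as the paper. The paper's own proof is a two-line affair: it simply invokes Lemma~\ref{lem:OWmaininduction} (whose first bullet already asserts that at the beginning of each level the number of wrong blocks is at most $6k$) at the last level, and then appeals to the distance-$16k$ AG code to finish. Your version unpacks the $6k$ bound more carefully---bounding the residual errors at the end of level $L-1$ by $3k$ via Lemma~\ref{Badsetsnotlarge} plus the third bullet of Lemma~\ref{lem:OWmaininduction}, then doubling for the level-$L$ block refinement---and makes the union bound over levels explicit, but this is precisely the content already absorbed into the inductive proof of Lemma~\ref{lem:OWmaininduction}.
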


\begin{proof}

By Lemma \ref{lem:OWmaininduction}, with probability $1-2^{-\Theta(k'')}$, at the last level, there are at most $6k$ wrong blocks. Since $z_{\final}$ is the redundancy for a code with distance $16k$, all wrong blocks can be corrected. So Bob computes $x$ correctly.

\end{proof}

\begin{lemma}
\label{lem:OWCC}
The communication complexity is $O(k \log \frac{n}{k})$.

\end{lemma}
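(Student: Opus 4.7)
My plan is to simply add up the sizes of the three pieces that Alice transmits in Construction~\ref{Constr:OWDocExc}, namely $v[1]$, $z=(z[1],\dots,z[L])$, and $z_{\mathsf{final}}$, and verify that each is at most $O(k\log\frac{n}{k})$.

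First, I would bound $|v[1]|$. In the first level we have $b_1 = \frac{n}{6k}$ and hence $l_1 = n/b_1 = 6k$, and every hash value lives in $\{0,1\}^c$ for a fixed constant $c$. Therefore $|v[1]| = l_1 \cdot c = O(k)$.

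Next I would bound the total length of the sketches $z[i]$. By the description of step~\ref{expanderencstep} and the sketch guarantee of Theorem~\ref{thm:DEHspecial}, each $z[i]$ has length $m = O(k)$. Since there are $L = O(\log\frac{n}{k})$ levels, we get $\sum_{i=1}^{L} |z[i]| = L \cdot O(k) = O(k\log\frac{n}{k})$.

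Finally I would bound $|z_{\mathsf{final}}|$. By construction $z_{\mathsf{final}} \in (\{0,1\}^{b_L})^{\Theta(k)}$, so its length in bits is $\Theta(k)\cdot b_L = \Theta(k\log\frac{n}{k})$, since $b_L = O(\log\frac{n}{k})$. (The underlying code is the systematic AG code of Theorem~\ref{agcode} over an alphabet of size $2^{b_L}$ applied to the $l_L$ blocks of the last level, with distance $16k$, and its redundancy in symbols is $O(k)$.)

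Summing the three contributions gives a total communication of $O(k) + O(k\log\frac{n}{k}) + O(k\log\frac{n}{k}) = O(k\log\frac{n}{k})$, as desired. There is no genuine obstacle here; this is a straightforward accounting step, and the only thing worth double-checking is that the per-level sketch length from Theorem~\ref{thm:DEHspecial} is indeed $O(k)$ uniformly over $i$, which follows from the parameter setting $k_{i'} = \max(k/2^{0.9c(i-i')}, k/\log^3\frac{n}{k})$ used in the decoder and the fact that $\ent(\snum,\knum)=\Theta(k)$ in that regime.
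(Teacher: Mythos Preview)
Your proposal is correct and follows exactly the same approach as the paper: you bound $|v[1]|=O(k)$, $\sum_i |z[i]|=L\cdot O(k)=O(k\log\frac{n}{k})$, and $|z_{\mathsf{final}}|=O(k\log\frac{n}{k})$, then sum. The paper's proof is a terser version of the same accounting.
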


\begin{proof}
Note that since $m = O(k)$, $z[i] = O(k)$. Also note that $|v[1]|  = O(k)$, as the output length for of the hash function is $O(1)$ and $l_1 = O(k)$. $|z_{\final} | = O(k\log \frac{n}{k})$ by Theorem \ref{agcode}.

So the overall communication complexity is $ \sum_{i=1}^{L} |z[i]| + |v[1]| + |z_{\final}| = O(k \log \frac{n}{k}) $.

\end{proof}

\begin{theorem} \label{thm:crOWProtocol}
There exists an efficient one-way edit distance document exchange protocol using common randomness, for every  $n\in \mathbb{N}$,  $k = \Omega( \log^4 \frac{n}{k}) $, having sketch length $O(k \log \frac{n}{k})$,   success  probability $1 -   2^{-\Omega(k/\log^3 \frac{n}{k})}$.

\end{theorem}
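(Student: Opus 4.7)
The plan is to instantiate Construction~\ref{Constr:OWDocExc} as the protocol and then assemble its guarantees directly from the machinery already developed above. Correctness of Bob's final output is exactly the content of Lemma~\ref{lem:BobComputesX}, so the substantive work has already been isolated into the inductive bound of Lemma~\ref{lem:OWmaininduction} on the three quantities maintained per level: the number of wrong blocks before recovery, the correctness of $v[i]$, and the number of wrong blocks remaining in $T_i$ after the refilling step.

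First I would verify the parameter bookkeeping. Recall $k' = k/\log\frac{n}{k}$ and $k'' = k'/\Theta(\log^2\frac{n}{k}) = \Theta(k/\log^3\frac{n}{k})$. The failure probability in Lemma~\ref{lem:BobComputesX} is $2^{-\Omega(k'')} = 2^{-\Omega(k/\log^3\frac{n}{k})}$, which matches the bound claimed in the theorem. The hypothesis $k = \Omega(\log^4\frac{n}{k})$ ensures $k'' = \Omega(\log\frac{n}{k})$, which is what is needed to absorb the union bound across $L = O(\log\frac{n}{k})$ levels inside Lemma~\ref{lem:OWmaininduction} and still give an exponentially small overall failure probability.

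Second, the communication complexity is handled by Lemma~\ref{lem:OWCC}: each $|z[i]| = O(k)$ from the expander sketch of Theorem~\ref{thm:DEHspecial}, the first-level hash vector $|v[1]| = O(k)$ from the constant-size output hashes applied to $l_1 = \Theta(k)$ blocks, and $|z_{\mathsf{final}}| = O(k\log\frac{n}{k})$ from the systematic algebraic geometry code of Theorem~\ref{agcode}. Summing over $L$ levels gives $O(k\log\frac{n}{k})$.

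Third, I would address efficiency of both encoding and decoding. On Alice's side, hashing each level costs $\tilde{O}(n)$ in total, the expander-based sketches from Construction~\ref{expandercode} take $O(k)$ parity checks each with constant degree, and the AG-code encoding is polynomial. On Bob's side, the dominant cost per level is the belief-propagation decoding from Construction~\ref{constr:enhancedBPDec} (running in time near-linear in its input) plus the dynamic programming matching from Lemma~\ref{dpformatch}, applied to inputs of total size $\tilde{O}(n)$ across the $L$ levels. Putting these together yields an efficient protocol. The main (already-dispatched) obstacle was Lemma~\ref{lem:veryOldAncestors}'s witness-counting argument, which is what enables the asymmetric-information DE sketch of size $O(k)$ per level to suffice; once that is in hand, the present theorem is a clean assembly of Lemmas~\ref{lem:BobComputesX} and~\ref{lem:OWCC} together with the efficiency of each subroutine.
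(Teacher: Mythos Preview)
Your proposal is correct and follows essentially the same approach as the paper: the paper's proof simply invokes Lemma~\ref{lem:BobComputesX} for the success probability, Lemma~\ref{lem:OWCC} for the sketch length, and notes that all subroutines are efficient. Your write-up adds more explicit parameter bookkeeping (the identification $k'' = \Theta(k/\log^3\frac{n}{k})$ and the role of the hypothesis $k = \Omega(\log^4\frac{n}{k})$), but the logical structure is identical.
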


\begin{proof}
It immediately follows from Lemma \ref{lem:BobComputesX}, 
\ref{lem:OWCC}.
The protocol is efficient since all components and steps are efficient.

\end{proof}

By combining \cref{thm:crOWProtocol} and the result of Haeupler \cite{haeupler2018optimal}, we immediately get the following.

\begin{theorem} \label{thm:crOWProtocol1}
There exists an efficient one-way edit distance document exchange protocol using common randomness, for every $n, k \in \mathbb{N}$, having sketch length $O(k \log \frac{n}{k})$,   success  probability $1 -   \min\{2^{-\Theta(k/  \log^3 \frac{n}{k} ) }, 1/\poly(n) \} $.

\end{theorem}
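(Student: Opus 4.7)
The plan is to combine the two constructions by a case split on the size of $k$. Let $c$ be a sufficiently large constant so that Theorem~\ref{thm:crOWProtocol} applies whenever $k \geq c \log^4(n/k)$, and fix the threshold $k_0 = c \log^4(n/k)$.

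In the first case, $k \geq k_0$, we simply invoke the protocol guaranteed by Theorem~\ref{thm:crOWProtocol}. This already gives sketch length $O(k \log(n/k))$ and failure probability $2^{-\Omega(k/\log^3(n/k))}$. I would then observe that this bound is automatically at most $\min\{2^{-\Theta(k/\log^3(n/k))},\,1/\poly(n)\}$: for sufficiently large $k$ the exponent $k/\log^3(n/k)$ exceeds $\Theta(\log n)$, so $2^{-\Omega(k/\log^3(n/k))} \leq 1/\poly(n)$; and it is trivially of the form $2^{-\Theta(k/\log^3(n/k))}$.

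In the second case, $k < k_0$, the exponent $k/\log^3(n/k)$ is only $O(\log(n/k))$, so the bound $2^{-\Theta(k/\log^3(n/k))}$ is essentially meaningless (it does not beat $1/\poly(n)$). Here I would instead invoke the one-way randomized protocol of Haeupler~\cite{haeupler2018optimal}, which achieves sketch length $O(k \log(n/k))$ with error probability $1/\poly(n)$ for \emph{every} $k$. Since in this regime $1/\poly(n)$ dominates the minimum expression, the claimed error bound holds.

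The combined protocol therefore first inspects $k$ and $n$ (both parties know these), runs the Theorem~\ref{thm:crOWProtocol} protocol if $k \geq k_0$, and runs Haeupler's protocol otherwise. In both cases the sketch length is $O(k \log(n/k))$ and the error is at most $\min\{2^{-\Theta(k/\log^3(n/k))},\,1/\poly(n)\}$, and both underlying protocols are efficient, so the combined protocol is as well. The main (and only) subtlety is verifying that the two error bounds interact correctly at the threshold $k_0$, which is exactly why the statement uses the minimum of the two expressions; I expect no real obstacle beyond this bookkeeping.
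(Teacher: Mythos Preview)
Your proposal is correct and follows essentially the same approach as the paper: a case split at the threshold $k = \Theta(\log^4(n/k))$, invoking Theorem~\ref{thm:crOWProtocol} for large $k$ and Haeupler's protocol~\cite{haeupler2018optimal} for small $k$. The paper's proof is in fact even terser than yours and omits the bookkeeping about how the two error bounds combine into the $\min$ expression, which you handle correctly.
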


\begin{proof}

When $k = \Omega(\log^4 \frac{n}{k})$, we use \cref{thm:crOWProtocol}. Otherwise we use the random protocol from \cite{haeupler2018optimal} which has success probability $1-1/\poly(n)$. Both of them have the sketch length as desired.

\end{proof}

\subsection{ Removing Shared Randomness}

In Construction \ref{Constr:OWDocExc}, we use   common randomness to generate hash functions $h_j, j\in [l_i]$ for each $i\in[L]$. Also we use common randomness to generate the random bipartite graph $\Gamma$ for the encoding of the hash values. Now we show that we can use almost $ \kappa$-wise independence generator to reduce randomness.

\begin{lemma}
\label{lem:wocommonrand}

Replace the common randomness used in Construction \ref{Constr:OWDocExc},
\begin{itemize}

\item for generating hash functions,   by an $\epsilon$-almost $ 10c k $-wise independent distribution, with $\epsilon = 2^{-10c k}$;

\item for generating $\Gamma_1$,    by $O(k)$-wise independent distributions over alphabet $[m]$.
(Recall that $m = O(k)$)
Then with probability $1-2^{-\Theta(k')}$, Bob outputs $x$ correctly.

\end{itemize} 

\end{lemma}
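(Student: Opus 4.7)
The plan is to verify that every probabilistic argument used in the correctness proof of Construction~\ref{Constr:OWDocExc} depends on only a limited amount of independence of the underlying random objects, and then substitute the limited-independence distributions for the fully uniform ones, absorbing the correction terms into the existing failure bounds.

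First, I would address the hash functions. The dominant bad events in Lemmas~\ref{lem:veryOldAncestors}, \ref{lem:numOfWronglyRecovered}, and \ref{lem:numOfWrBlksAfterRefill} are ``witness''-type events: each such event specifies a pattern of at most $\sum_{j} w_j(i-j) \leq k$ hash equalities between pre-specified blocks. Since each hash outputs $c = O(1)$ bits, each such event is determined by at most $c k \leq 10 c k$ bits of the joint hash output. Under an $\epsilon$-almost $10ck$-wise independent distribution in max norm, the probability that this specific bit-pattern occurs differs from its fully-random value $2^{-c \sum_j w_j(i-j)}$ by at most $\epsilon = 2^{-10ck}$. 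The union bounds in each of those lemmas already run over at most $2^{O(k)}$ candidate witnesses (and, by Lemma~\ref{numOfPossibleMatchings}, at most $2^{O(k)}$ candidate matchings in the matching analysis), so the total additive error introduced by the replacement is bounded by $2^{O(k)} \cdot 2^{-10ck} = 2^{-\Omega(ck)}$, which is absorbed into the original $2^{-\Theta(k'')}$ bound from Lemma~\ref{Badsetsnotlarge}.

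Next I would address the bipartite expander $\Gamma_1$. By the analysis of Lemma~\ref{lem:expanderGeneration}, the only property of $\Gamma_1$ that is used is restricted expansion: for every $R \subseteq [l_i]$ with $|R| \in [k', k]$ and $|R \cap S_{i'}| \leq k'_{i'}$, one has $|\Gamma_1(R)| \geq 0.9 d|R|$. For any fixed $R$ and any fixed target image set $T \subseteq [m]$ of size at most $0.9 d |R|$, the event $\Gamma_1(R) \subseteq T$ depends only on the $d|R| = O(k)$ edge-endpoint random variables incident to $R$. Hence under an $O(k)$-wise independent distribution over alphabet $[m]$, the joint distribution of these $d|R|$ endpoints is exactly uniform, and the probability estimates in the original proof of Lemma~\ref{lem:expanderGeneration} go through without change. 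Summed over levels, the failure probability from the expander generation remains $2^{-\Omega(k')}$.

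Combining the two contributions gives an overall failure probability of $2^{-\Theta(k')}$, as claimed. The main obstacle will be careful bookkeeping: one must check, step by step, that every quantity appearing in the $t$-witness argument of Lemma~\ref{lem:veryOldAncestors} and in the matching analysis of Lemma~\ref{lem:numOfWronglyRecovered} is genuinely a function of at most $10ck$ hash output bits, so that the max-norm closeness translates into an additive $\epsilon$ per event without paying a $2^{10ck}$ factor. A related subtlety is that for the matching analysis the relevant random quantity is ``existence of a short spurious monotone matching''; one has to confirm that the event for a fixed candidate matching is determined by $O(k)$ hash outputs, so that the $2^{O(k)}$ union bound from Lemma~\ref{numOfPossibleMatchings} times the per-matching deviation $\epsilon$ is still $2^{-\Omega(k)}$.
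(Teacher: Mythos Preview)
Your proposal is correct and follows essentially the same approach as the paper: both re-examine Lemmas~\ref{lem:veryOldAncestors} and~\ref{lem:numOfWronglyRecovered}, observe that each bad event is determined by $O(ck)$ hash output bits, and absorb the additive $\epsilon$ deviation from the almost-$10ck$-wise independence into the existing union bounds. Your treatment is in fact more explicit than the paper's in two places: you spell out why the expander-generation step in Lemma~\ref{lem:expanderGeneration} only needs $O(k)$-wise independence (the event $\Gamma_1(R)\subseteq T$ depends on $d|R|=O(k)$ edge endpoints), whereas the paper's proof of this lemma simply invokes Lemma~\ref{lem:BobComputesX} and leaves the expander part implicit; and you flag the bookkeeping needed to confirm each witness event touches at most $10ck$ bits, which the paper states without comment.
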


\begin{proof}

We need to recompute the following probabilities.

In Lemma \ref{lem:veryOldAncestors}, a specific error pattern happens with probability at most $ 2^{-c k''} \pm \epsilon \leq 2^{- 0.9ck''}$.

In Lemma \ref{lem:numOfWronglyRecovered},
if there are $k$ wrongly matched blocks introduced by $w$, then there are $ k $ hash collisions each for a different $h_j$ in level $i$. So the probability is at most $2^{-2ck}\pm 2^{-10ck}= 2^{-\Theta(k)}$.

The rest of the analysis of the above two lemmas can still go through. These two lemmas are the only two in the proof of Lemma \ref{lem:BobComputesX} which will use the independence of hash functions.

As a result, the proof of Lemma \ref{lem:BobComputesX} can still go through.
\end{proof}

\begin{theorem} \label{thm:OWProtocol}
There exists an efficient one-way edit distance document exchange protocol, for every   $k =\Omega( \log^4\frac{n}{k}  )  $, having sketch length $O(k \max\{ \log \frac{n}{k}, \log k \} )$,   success  probability $1 -   2^{-\Omega(k/\log^3 \frac{n}{k})}$.

\end{theorem}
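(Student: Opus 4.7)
The plan is to derandomize the shared-randomness protocol of Construction~\ref{Constr:OWDocExc} by having Alice, instead of using common random bits, sample the seeds of appropriate pseudorandom generators from her private randomness and append them to the sketch. Lemma~\ref{lem:wocommonrand} already shows that the whole correctness argument of Theorem~\ref{thm:crOWProtocol} goes through when (i) the hash-function randomness per level is replaced by an $\eps$-almost $10ck$-wise independent distribution with $\eps=2^{-10ck}$, and (ii) the randomness for the expander $\Gamma$ is replaced by an $O(k)$-wise independent distribution over $[m]$. So the only thing left is to bound the length of the seeds that Alice now transmits.

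For the hash-function seeds I would invoke Theorem~\ref{almostkwiseg} with $\kappa=10ck$ and $\eps=2^{-10ck}$; the per-level seed length is $O\!\left(\log\frac{\kappa\log n}{\eps}\right)=O(k+\log\log n)=O(k)$, where the last equality uses the hypothesis $k=\Omega(\log^4\tfrac{n}{k})\gg \log\log n$. Summed over the $L=O(\log\tfrac{n}{k})$ levels, the hash seeds contribute $O(k\log\tfrac{n}{k})$ bits. For the expander seeds I would \emph{not} use an independent $O(k)$-wise seed per level, which would blow up to $O(k\log^2\tfrac{n}{k})$; instead, I would draw a single output string of total length $N_{\mathrm{tot}}=\sum_{i=1}^{L-1} l_i\cdot d_1\cdot\log m$ from one $O(k)$-wise independent generator (Theorem~\ref{kwiseg}) and partition it among the levels. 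Since $k$-wise independence is preserved under restriction to any subset of output coordinates, each level's graph distribution is still $O(k)$-wise independent, which is exactly what Lemma~\ref{lem:expanderGeneration} (used inside Lemma~\ref{lem:OWmaininduction}) consumes. Using $l_i=3k\cdot 2^i$ and $2^L=O(n/(k\log\tfrac{n}{k}))$ gives $N_{\mathrm{tot}}=O(n\log k/\log\tfrac{n}{k})$, so the consolidated seed has length $O(k\log(N_{\mathrm{tot}}/k))=O(k\max\{\log\tfrac{n}{k},\log k\})$.

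Putting the pieces together, the base sketch of length $O(k\log\tfrac{n}{k})$ guaranteed by Lemma~\ref{lem:OWCC}, plus the $O(k\log\tfrac{n}{k})$ bits of hash seeds, plus the consolidated expander seed, adds up to $O(k\max\{\log\tfrac{n}{k},\log k\})$. The success probability remains $1-2^{-\Omega(k/\log^3(n/k))}$ by Lemma~\ref{lem:wocommonrand}, and efficiency is preserved because Theorems~\ref{almostkwiseg} and~\ref{kwiseg} are both highly explicit with near-linear-time per-bit evaluation, while the rest of Construction~\ref{Constr:OWDocExc} was already efficient. The main obstacle is the expander-seed bookkeeping: the obvious per-level approach misses the target by a $\log(n/k)$ factor, and one has to verify carefully that sharing one global $O(k)$-wise independent string across all $L-1$ expanders does not break the expansion argument in Lemma~\ref{lem:expanderGeneration} nor require anything more than what $O(k)$-wise independence on the induced coordinates already provides.
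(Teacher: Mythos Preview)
Your proposal is correct and follows essentially the same approach as the paper. The paper's proof is in fact considerably terser than yours: it simply invokes Theorem~\ref{kwiseg} once to get a single seed of length $O(k\max\{\log k,\log\tfrac{n}{k}\})$ for ``the $O(k)$-wise independent distribution'' and Theorem~\ref{almostkwiseg} once for a seed of length $O(\log\tfrac{k\log n}{\eps})=O(k)$ for the hash randomness, then appeals to Lemma~\ref{lem:wocommonrand} and Lemma~\ref{lem:OWCC}. Your write-up is more explicit than the paper about the one nontrivial point, namely that a per-level $O(k)$-wise seed would cost $O(k\log^2\tfrac{n}{k})$ and one must instead draw all the expander randomness from a single generator; the paper's quoted seed length implicitly does this but does not comment on it. Your justification that the expansion analysis survives this sharing is sound: once one conditions on the (independent) hash seed, the ``ideal'' sets $S_{i'}$ at every level are fixed, and the marginal $O(k)$-wise independence of each level's slice of the global string is exactly what Lemma~\ref{lem:expanderGeneration} consumes; a union bound over the $L$ levels then finishes. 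One cosmetic difference: you use $L$ separate hash seeds totaling $O(k\log\tfrac{n}{k})$, whereas the paper uses a single almost-$10ck$-wise seed of length $O(k)$ for all levels---both are within budget.
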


\begin{proof}
Consider replacing the common randomness used in Construction \ref{Constr:OWDocExc} in the way of Lemma \ref{lem:wocommonrand}.
By Theorem \ref{kwiseg}, we can use a generator $g_1$ of seed length $O(k \max\{ \log k,  \log \frac{n}{k} \})$ to generate the $O(k)$-wise independent distribution. 
By Theorem \ref{almostkwiseg} we can use a generator $g_2$ of seed length $O(\log \frac{k\log n}{\epsilon})$ to generate the $\epsilon$-almost $10ck$-wise independent distribution. 

So we only need to let Alice send the seeds for these two, which have total length $O(k\log \frac{n}{k})$. Adding the communication complexity calculated by Lemma \ref{lem:OWCC}, the overall communication complexity is as desired.


The correctness and success probability follows from \ref{lem:wocommonrand}. The protocol is efficient since all components and steps are efficient.

%
%
%
%

\end{proof}

\section{Asymmetric Document Exchange with Two Sided Information}\label{sec:twoside}
In this section we study document exchange with two sided asymmetric information. We have the following definition.

\begin{definition}
There are two parties Alice and Bob. Alice has a string $x\in \{0,1\}^n$ and Bob has a string $y\in \{0,1\}^n$. Alice knows a vector of disjoint subsets $\svec^A=(S^A_1, \cdots, S^A_{t^A})$ and a vector of integers $\knum^A=(k^A_1, \cdots, k^A_{t^A})$. Bob knows a vector of disjoint subsets $\svec^B=(S^B_1, \cdots, S^B_{t^B})$ and a vector of integers $\knum^B=(k^B_1, \cdots, k^B_{t^B})$. That is, within each set $S^A_i$ or $S^B_i$, the Hamming distance between $x$ and $y$ is at most $k^A_i$ or $k^B_i$. Now one party tries to learn the string of the other party. 
\end{definition}

Again, let $\snum^A=(s^A_1, \cdots, s^A_t)$ where $\forall i, s^A_i=|S^A_i|$. Similarly, let $\snum^B=(s^B_1, \cdots, s^B_t)$ where $\forall i, s^B_i=|S^B_i|$. We call this problem an $(\snum^A, \snum^B, \knum^A, \knum^B, t^A, t^B)$ asymmetric document exchange (DE) problem, and we require the protocol to succeed for all possible configurations of the subsets $\svec^A=(S^A_1, \cdots, S^A_{t^A})$, $\svec^B=(S^B_1, \cdots, S^B_{t^B})$, and all possible strings $x, y$ that are consistent with the parameters.

We also have both lower bounds and upper bounds. 

\begin{theorem}
In an $(\snum^A, \snum^B, \knum^A, \knum^B,  t^A, t^B)$ asymmetric DE problem, suppose Bob learns Alice's string. Let $s^A=\sum_{i=1}^{t} s^A_i$ and $s^B=\sum_{i=1}^{t} s^B_i$, and assume $s^A+s^B \leq n$. Let $k^A=\sum_{i=1}^{t} k^A_i$ and $k^B=\sum_{i=1}^{t} k^B_i$. Then any deterministic protocol has communication complexity at least $\ent(n-s^B, k^A)+\ent(\snum^B, \knum^B)$, and any randomized protocol with success probability $\geq 1/2$ has communication complexity at least $\ent(n-s^B, k^A)+\ent(\snum^B, \knum^B)-1$. In addition, if $\forall i, s^B_i \geq 2 k^B_i$, then any one round deterministic protocol has communication complexity at least $\ent(n, k^A+k^B)$. This holds even if both parties know $(\snum^A, \snum^B)$ and $(\knum^A, \knum^B)$.
\end{theorem}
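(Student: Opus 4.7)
The plan is to extend the counting arguments of Theorems 4.1--4.3 to the two-sided setting, exploiting the disjointness hypothesis $s^A+s^B\le n$ so that the adversary can place $\svec^A$ and $\svec^B$ in disjoint regions of $[n]$.

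For the first two bounds, the strategy is to fix any Bob string $y$ and any adversarial $\svec^B$, then place $\svec^A$ entirely outside $S^B:=\bigcup_j S^B_j$. With this choice of subsets, an $x$ is consistent with both promises iff its difference from $y$ restricted to $S^B$ is some pattern with $\le k^B_j$ errors in each $S^B_j$, and its difference from $y$ on $[n]\setminus S^B$ has Hamming weight $\le k^A$ (since the adversary may freely rearrange $\svec^A$ inside $[n]\setminus S^B$, using $s^A_i\ge k^A_i$, to match any such weight-bounded pattern). These two restrictions factor independently, producing $2^{\ent(\snum^B,\knum^B)}\cdot 2^{\ent(n-s^B,k^A)}$ consistent $x$'s. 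The deterministic bound then follows by transcript-pigeonhole, and the randomized bound follows by first averaging the shared randomness to fix a deterministic strategy that succeeds on at least half of the consistent $x$'s, exactly as in Theorems 4.1 and 4.2.

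For the third bound, the plan is to mimic the sphere-packing idea from the one-round deterministic part of Theorem 4.2. Fix an arbitrary $\svec^A_*$ so that Alice's one-round transcript is a function $\phi(x):=M(x,\svec^A_*)$. For any two $x_1\neq x_2$ in a common fiber $\phi^{-1}(m)$ with Hamming distance at most $2(k^A+k^B)$, I would exhibit a ``midpoint'' $y$ with $d_H(x_j,y)\le k^A+k^B$ for $j=1,2$, together with a $\svec^B$ disjoint from $\svec^A_*$ (possible because $s^A+s^B\le n$), so that for both $j$ the errors between $x_j$ and $y$ fall into $\svec^A_*\cup\svec^B$ with multiplicities $\le k^A_i$ in each $S^A_{*,i}$ and $\le k^B_i$ in each $S^B_i$. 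Here the assumption $s^B_i\ge 2k^B_i$ provides precisely the slack so that a single $S^B_i$ can simultaneously absorb up to $k^B_i$ errors from each of $(x_1,y)$ and $(x_2,y)$ without overflow. With this construction Bob receives identical inputs $(y,\svec^B,m)$ in both cases, contradicting correctness. Hence every two distinct $x$'s in a common $\phi$-fiber must have Hamming distance $\ge 2(k^A+k^B)+1$, and by the Hamming sphere-packing bound the number of fibers, and therefore the number of transcripts $2^{\text{CC}}$, is at least $2^{\ent(n,k^A+k^B)}$.

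The main obstacle is the adversarial covering inside the third bound: for the fixed $\svec^A_*$ (chosen before $x_1,x_2$ are identified), one must show that the error sets $E_j=\{i:x_j[i]\neq y[i]\}$ can be simultaneously partitioned between $\svec^A_*$ and some valid $\svec^B$ for both $j=1,2$. The freedom to place $\svec^B$ anywhere in $[n]\setminus\svec^A_*$ (a region of size at least $n-s^A\ge s^B$), together with the slack $s^B_i\ge 2k^B_i$, should be exactly enough, and the full proof would carry out a case analysis tracking how $\svec^A_*$ intersects the symmetric-difference set $D=\{i:x_1[i]\neq x_2[i]\}$ to produce the required covering.
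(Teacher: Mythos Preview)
Your treatment of the first two bounds is correct and matches the paper's argument: fix $(y,\svec^B)$, let $\svec^A$ range inside $[n]\setminus S^B$, count the $2^{\ent(n-s^B,k^A)+\ent(\snum^B,\knum^B)}$ consistent $x$'s, and pigeonhole on transcripts (with an averaging step in the randomized case).

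The third bound, however, has a genuine gap. By fixing a single $\svec^A_*$ in advance, you lose control over the constraint $|E_j\cap S^A_{*,i}|\le k^A_i$. Concretely, take $t^A=t^B=1$, $k^A_1=k^B_1=1$, $s^A_1=10$, $s^B_1=2$, $S^A_{*,1}=\{1,\dots,10\}$, and $x_1=0^n$, $x_2=1^40^{n-4}$. Then $d_H(x_1,x_2)=4=2(k^A+k^B)$, but for \emph{any} $y$ the two error sets satisfy $|E_1\cap\{1,2,3,4\}|+|E_2\cap\{1,2,3,4\}|=4$, so at least one of $|E_j\cap S^A_{*,1}|$ exceeds $k^A_1=1$. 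No choice of $\svec^B$ can repair this, since the constraint is on $S^A_{*,1}$ itself, not on $S^A_{*,1}\setminus S^B$. Thus your ``case analysis tracking how $\svec^A_*$ intersects $D$'' cannot succeed: the slack $s^B_i\ge 2k^B_i$ helps only with the $\svec^B$ constraints, not the $\svec^A_*$ ones.

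The paper avoids this by reversing the order of quantifiers: it fixes $y$ first and lets $\svec^A$ depend on $x$. For each $x$ with $d_H(x,y)\le k^A+k^B$, it chooses $\svec^A(x)$ so that the first $k^A$ differences land in $\svec^A(x)$ with the correct per-set counts; this makes the $\svec^A$ constraints hold by construction. Then pigeonhole on the pairs $(x,\svec^A(x))$ yields two instances with the same one-round transcript, and only at that point is a common $\svec^B$ built (using $s^B_i\ge 2k^B_i$ to absorb the at most $k^B$ leftover errors from each of the two $x$'s). The key idea you are missing is that in the one-round model Alice's message depends on $\svec^A$, so the adversary should exploit the freedom to tailor $\svec^A$ to each $x$ rather than commit to a single $\svec^A_*$.
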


\begin{proof}
The proof is similar to the one sided case. For a deterministic protocol, assume for the sake of contradiction that there is a protocol with communication complexity less than $\ent(n-s^B, k^A)+\ent(\snum^B, \knum^B)$. Then fix Bob's string $y$ and there exist two different $x$'s that produce the same transcript, and in addition the inputs to Bob are the same. Thus Bob will not be able to distinguish the two $x$'s, a contradiction. The case of a randomized protocol is essentially the same up to an averaging argument.

For the case of one round deterministic protocol, again the argument is similar as before. Assume for the sake of contradiction that there is a protocol with communication complexity less than $\ent(n, k^A+k^B)$. Fix Bob's string $y$ and the number of different $x$'s within Hamming distance $k^A+k^B$ is exactly $2^{\ent(n, k^A+k^B)}$. For each such $x$, one can arrange the first at most $k^A$ differences to happen in $\svec^A$, and the rest of at most $k^B$ differences to happen in $\svec^B$, such that the subsets in $\svec^A$ and $\svec^B$ are all disjoint (since $s^A+s^B \leq n$). Note that each $x$ gives a vector $\svec^A$, and the one round transcript is a deterministic function of $(x, \svec^A, \snum^A, \snum^B, \knum^A, \knum^B)$, two different $x$'s will produce the same transcript. At this point, one can define a vector $\svec^B$ consistent with $\snum^B$, $\knum^B$ and both of the $x$'s (since $\forall i, s^B_i \geq 2 k^B_i$). This means the inputs to Bob are the same for the two $x$'s. Since Bob's final output is a deterministic function of the transcript and $(y, \svec^B, \snum^A, \snum^B, \knum^A, \knum^B)$, Bob will not be able to distinguish the two $x$'s, a contradiction.
\end{proof}

The positive result directly follows from the one-side result i.e. \cref{thm:GCAIchi}.

\begin{theorem}

There exists an explicit protocol for all $(\mathbf{s}^A, \mathbf{s}^B, \mathbf{k}^A, \mathbf{k}^B, t^A, t^B)$ 
DE , having communication complexity $O\left(  \left(\chi(\mathbf{s}^B, \mathbf{k}^B, t^B )+1 \right)^2  \left( \mathsf{H}(n-s^B, k^A) + \mathsf{H}(\mathbf{s}^B, \mathbf{k}^B) \right)   \right)$, success probability $1-2^{ -\Omega(   \min(k_t, k^A)   ) - 1/\poly(s^A+ s^B) }$, to let Bob learn Alice's string.

\end{theorem}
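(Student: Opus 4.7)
The plan is to reduce this two-sided problem to a one-sided asymmetric DE instance with $t^B+1$ disjoint subsets, and then invoke \cref{thm:GCAIchi} as a black box. The key observation is that Alice's hidden promise still implicitly bounds how many differences between $x$ and $y$ can fall outside $\bigcup_i S^B_i$: every such difference must lie in some $S^A_i$, so in total at most $k^A=\sum_i k^A_i$ of them occur there. I would therefore have Bob introduce the auxiliary set $S^B_0 := [n]\setminus \bigcup_{i=1}^{t^B} S^B_i$, which he can compute locally, and attach the error tolerance $k^B_0 := k^A$ to it. Together with his original $S^B_1,\ldots,S^B_{t^B}$, this yields a disjoint partition into $t^B+1$ sets that is consistent with every $(x,y)$ permitted by the two-sided promise.

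Having set this up, I would run the one-sided protocol of \cref{thm:GCAIchi} on the instance with parameters $\snum' = (n-s^B,\,s^B_1,\ldots,s^B_{t^B})$ and $\knum' = (k^A,\,k^B_1,\ldots,k^B_{t^B})$. Both parties can compute $\snum'$ and $\knum'$ from their public data (in particular Alice knows $n$, $s^B$, and $k^A$), while only Bob needs to know the explicit subset list, which he does. The required hypothesis $s'_i\ge 2 k'_i$ is inherited from the original assumption on $\snum^B,\knum^B$ for $i\ge 1$; for the new set $S^B_0$ it follows from $n-s^B\ge s^A \ge 2k^A$, using $s^A+s^B\le n$ together with the one-sided assumption $\sum_i s^A_i\ge 2\sum_i k^A_i$.

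To match the claimed parameters I would check the two relevant quantities. By the definition of $\mathsf{H}$,
\[
\mathsf{H}(\snum',\knum') \;=\; \mathsf{H}(n-s^B, k^A) + \mathsf{H}(\snum^B,\knum^B).
\]
Since $\chi$ counts the buckets $I_j=[2^{10^{j-1}},2^{10^j})$ that contain some ratio $s_i/k_i$, inserting one extra ratio can enlarge $\chi$ by at most one, hence $\chi(\snum',\knum',t^B+1) \le \chi(\snum^B,\knum^B,t^B) + 1$. Plugging both into \cref{thm:GCAIchi} produces exactly the stated communication complexity. For the error term, \cref{thm:GCAIchi} gives $2^{-\Omega(k'_{\min})}+1/\poly(s')$, and here the smallest entry of $\knum'$ is $\min(k^A,k^B_{t^B})=\min(k^A,k_t)$ while $s' = n\ge s^A+s^B$, which yields the claimed success probability.

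The only places where I expect to pause are the verification that $\chi$ really cannot grow by more than one when a single ratio is inserted, and the bookkeeping showing that the one-sided size assumption on Alice's side implies $n-s^B\ge 2k^A$. Both are elementary, so I do not anticipate a genuine obstacle; the essence of the argument is simply that the one-sided protocol is powerful enough to absorb the adversary's freedom outside $\bigcup_i S^B_i$ as a single additional ``bulk'' subset.
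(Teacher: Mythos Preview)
Your proposal is correct and follows exactly the same reduction as the paper: treat $[n]\setminus\bigcup_i S^B_i$ as one additional subset with error bound $k^A$, then invoke the one-sided protocol of \cref{thm:GCAIchi}. The paper's own proof is a two-line sketch of precisely this idea, and the extra bookkeeping you supply (the $\chi$ increase by at most one, the verification that $n-s^B\ge 2k^A$, and the identification of the minimum $k$-entry) simply fills in details the paper leaves implicit.
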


\begin{proof}

The two party can just think that there are at most $k^A$ errors   in set $[n] - S^B$. This contribute one more set (and its error bound) to the error pattern. And the problem becomes a one-side asymmetric information problem. So we can apply  \cref{thm:GCAIchi}  and the conclusion follows.

\end{proof}

\bibliographystyle{plain}
\bibliography{ref}

\end{document}